\newif\iffac
\facfalse

\iffac
\documentclass{fac} 
\else 
\documentclass{article}[11pt] 
\fi

\newtheorem{theorem}{Theorem}[section]

\iffac
\title[Dynamic Intransitive Noninterference Revisited]
{Dynamic Intransitive Noninterference Revisited} 
\else 
\title{Dynamic Intransitive Noninterference Revisited
} 
\fi 

\iffac
\author[S. Eggert and R. van der Meyden]
    {
    Sebastian Eggert$^1$  and Ron van der Meyden$^2$\\ 
    $^1$ Institut f\"ur Informatik, Kiel University, Kiel, Germany. e-mail: research@sebastian-eggert.de\\ 
    $^2$ School of Computer Science and Engineering, UNSW Australia, Sydney NSW 2052, Australia. e-mail: meyden@cse.unsw.edu.au\\  
  }
  \else 
 \author{
    Sebastian Eggert\\
    Institut f\"ur Informatik, \\ Kiel University, \\
    Kiel, Germany. \\
    e-mail: research@sebastian-eggert.de\\ 
    \and 
    Ron van der Meyden\\ 
    School of Computer Science and Engineering, \\ UNSW Australia, \\
    Sydney NSW 2052, Australia. \\
    e-mail: meyden@cse.unsw.edu.au\\  
  } 
\fi

\iffac
\correspond{Ron van der Meyden, 
School of Computer Science and Engineering, UNSW Australia, Sydney NSW 2052 Australia. e-mail: meyden@cse.unsw.edu.au}

\pubyear{201*}
\pagerange{\pageref{firstpage}--\pageref{lastpage}}
\fi 


\newcommand{\commentout}[1]{}
\usepackage[utf8x]{inputenc}
\usepackage{amssymb}
\usepackage[cmex10]{amsmath}
\usepackage{paralist}
\usepackage[noend]{algorithm2e}
\usepackage{marginnote}
\usepackage{subfigure}
\usepackage{float}
\usepackage{xspace}
\usepackage{tikz}
\usepackage{version}
\usepackage[hidelinks]{hyperref}

\newcommand{\Prop}{\mathit{Prop}}  
\newcommand{\M}{{\cal K}}

\newtheorem{definition}{Definition}
\newtheorem{proposition}{Proposition}
\newtheorem{lemma}{Lemma}
\newtheorem{corollary}{Corollary}

\newcounter{example} 
\newenvironment{example}{\refstepcounter{example}~\\[5pt]\noindent {\bf Example \theexample:} }{\qedhere\\[5pt]}

\newcommand{\dsrc}[3]{{\tt dsrc}(#1, #2, #3)}
\newcommand{\lpurgename}{{\tt Lpurge}}
\newcommand{\lesliepurge}[3]{\lpurgename(#1, #2, #3)}
\newcommand{\dipurgename}{{\tt ipurge}}
\newcommand{\dipurge}[3]{\dipurgename(#1, #2, #3)}

\newcommand{\interferes}{\rightarrowtail}
\newcommand{\noninterferes}{\not\rightarrowtail}


\newcommand{\Actions}{A}
\newcommand{\Dom}{D}
\newcommand{\States}{S}

\newcommand{\dom}{{\tt dom}} 
\newcommand{\obs}{{\tt obs}}
\newcommand{\view}{{\tt view}}

\newcommand{\purgeformat}[1]{{\tt #1}}

\newcommand{\taname}{\purgeformat{ta}}
\newcommand{\ta}[2]{\taname_{#2}(#1)}

\newcommand{\maytaname}{\purgeformat{ta}^\Diamond}
\newcommand{\mayta}[2]{\maytaname_{#2}(#1)}

\newcommand{\musttaname}{\purgeformat{ta}^\Box}
\newcommand{\mustta}[2]{\musttaname_{#2}(#1)}

\newcommand{\tasecurity}{TA-security\xspace}

\newcommand{\unwind}{\sim}

\newcommand{\Objects}{O}
\newcommand{\Values}{V}

\newcommand{\contentsnme}{{\tt contents}} 
\newcommand{\observename}{{\tt observe}} 
\newcommand{\altername}{{\tt alter}} 

\newcommand{\contents}[2]{\contentsnme(#1, #2)} 
\newcommand{\observe}[2]{\observename(#1, #2)} 
\newcommand{\alter}[2]{\altername(#1, #2)} 

\newcommand{\oset}[1]{{\tt oset}(#1)}
\newcommand{\Osets}{{\tt Osets}}

\newcommand{\powerset}[1]{\mathcal{P}(#1)}

\newcommand{\dynacrel}[1]{\approx_{#1}}

\newcommand{\DRM}[1]{{\tt DRM}#1}

\newcommand{\tags}{\mathcal{T}}

\newcommand{\processes}{P}
\newcommand{\wrt}{with respect to\xspace}

\newcommand{\tuple}[1]{\langle #1 \rangle}

\newcommand{\btags}{{\cal N}}
\newcommand{\ptag}[2]{#1_{#2}}
\newcommand{\Data}{\mathit{Data}} 
\newcommand{\Obj}{\mathit{Objects}} 
\newcommand{\PolicyState}{\mathit{PolicyState}} 
\newcommand{\Caps}{\mathit{Capabilities}} 
\newcommand{\restrict}{\upharpoonright}

\newcommand{\A}{\mathcal{A}}

\newcommand{\qed}{$\Box$}
\newcommand{\qedhere}{\hfill\qed}

\usetikzlibrary{arrows,automata,fit,patterns}
\usetikzlibrary{shapes.geometric,shapes.arrows,decorations.pathmorphing}
\usetikzlibrary{matrix,chains,scopes,positioning,arrows,fit}
\usetikzlibrary{shapes,shadows,calc}
\usepgflibrary{arrows}

\tikzset{tikzglobal/.style={
    ->,
    >=stealth',
    shorten >=1pt,
    auto,
    node distance=0.8cm, 
    every path/.style=semithick, 
   initial text={}
 }}

  \tikzset{systemstate/.style={%
    matrix of nodes,
    draw=black, 
    line width=1pt,
    rectangle,
   inner sep=2pt,
    rounded corners
  }}
  
  \tikzset{policy/.style={ 
   >->,
   >=to
   }
}

 \tikzset{agent/.style={
  draw=none
}}

\tikzset{implies/.style={
	double,
	double equal sign distance,
	-implies
	}
}

 \newcommand{\unw}{\sim^{\mathit{unw}}}

 \newcommand{\unwprime}{\approx^{\mathit{unw}}}

 \newcommand{\rimp}{\Rightarrow}

 \newcommand{\From}{From }

\begin{document}
\label{firstpage}

\iffac
\makecorrespond
\fi 

\iffac \else 
\newenvironment{proof}{\noindent {\bf Proof:}}{\qedhere} 
\newenvironment{proof*}{\noindent {\bf Proof:}}{} 
\fi 

\maketitle

\begin{abstract} 
The paper studies dynamic information flow security policies 
in an automaton-based model. Two semantic
interpretations of such policies are developed, both of which 
generalize the notion of TA-security [van der Meyden ESORICS 2007] for static intransitive noninterference 
policies. One of the interpretations focuses on information flows permitted by 
policy edges, the other focuses on prohibitions implied by absence of policy edges. 
In general, the two interpretations differ, 
but necessary and sufficient conditions are identified 
for the two interpretations to be equivalent. 
Sound and complete proof techniques are developed for both interpretations. 
Two applications of the theory are presented. The first is a general result 
showing that access control  mechanisms are able to enforce a
dynamic information flow policy. The second is a simple 
capability system motivated by the Flume operating system. 
\end{abstract} 

\section{Introduction} 

\label{sec:introduction}

Covert channels, i.e., avenues for information flow unintended by the designers of a system, 
present a significant risk for systems that are required to enforce confidentiality, since they 
constitute vulnerabilities that an attacker may exploit in order to exfiltrate secrets. 
This has motivated a body of research that aims to provide means
by which systems can be assured to be free of covert channels. At the highest levels 
of assurance, one aims for formal verification based on rigorous models of the 
system. This requires formal definitions of what it means for a system to 
comply with a policy that specifies the intended information 
flows, and prohibits the unintended information flows.

How to give a formal semantics to such policies that is suitable for formal verification 
has been a topic of study since the 1970's \cite{Cohen77}. 
\emph{Noninterference policies} \cite{Goguen1982} are a form of policy syntax
in the form of a graph over a set of security levels, or domains, 
with an edge representing that information is permitted to flow
from the source of the edge to the destination. Conversely, absence 
of an edge can be understood as prohibiting (direct) information flow. 
In the original formulation of such policies, corresponding to classical multi-level security policies 
based on a lattice of security levels, the graph was transitive, but
there has more recently been a body of work on \emph{intransitive noninterference} policies, 
which do not require transitivity. Here,  a domain is best understood as a component
within a security architecture, with edges describing permitted direct flows of 
information, and paths corresponding to permitted indirect flows of information. In particular, this
understanding underlies MILS security \cite{BDRF08,Vanfleet2005}, an approach to high assurance systems
development in which security architectures comprised of trusted and untrusted components
are enforced using a variety of mechanisms including separation kernels. At its most abstract level, 
the architecture can be understood as an intransitive noninterference policy. 
Several large-scale efforts 
\cite{BytschkowQIR14,HeitmeyerALM06,Martin2000,Murray2012,Schellhorn2002} 
have formally verified that intransitive noninterference policies hold in a number of 
separation kernels.

The proper semantics of intransitive noninterference policies 
has proved to be a subtle matter. Rushby \cite{rushby92} 
improved an initial proposal by Haigh and Young \cite{HY87}, 
giving semantics to policies by means of an \emph{intransitive purge} function.
Van der Meyden~\cite{Meyden15} has recently argued that there is 
a subtle weakness in 
this intransitive purge based semantics. He provides an example that demonstrates that 
this semantics may classify as secure a system that contains
flows of information, based on the ordering of events, that seem
to contradict an intuitive understanding of the policy.  He proposes an 
alternative, stronger, definition of security, \emph{TA-security}, that correctly classifies this
example as insecure. Moreover, TA-security is shown to provide a better
fit for Rushby's proof theory. Rushby shows that this proof theory is sound for 
his definition, but fails to provide a completeness result. Van der Meyden explains this, 
by showing that it is both sound and complete for the stronger notion of TA-security. 
This completeness result is obtained both for Rushby's unwinding proof method
and for a variant of Rushby's results demonstrating that access control systems
provide a sound method for enforcement of intransitive noninterference policies.

This line of work has concerned static policies, in which the 
permitted flows of information do not change over time. 
Such policies are too restrictive for many practical purposes: 
the permitted flows of information in a system could change
as a result of policy changes, system reconfiguration, 
in response to detection
of vulnerabilities or attacks in progress, as a result of
personnel changes, promotions, revocations of privileges, 
delegations, transfers of capabilities, 
and for many other reasons.

A full understanding of the  formal underpinnings of security architecture should accommodate
such dynamic policy changes, and it is therefore of interest to 
have a satisfactory theory of dynamic intransitive noninterference 
policies. To date, only a few works have studied this question: Leslie
\cite{Leslie2006} and Eggert et. al.~\cite{Eggert2013}. Both are based on 
generalizations of  Rushby's ``intransitive purge" operator and, in the case of static 
policies, are equivalent to Rushby's semantics. 
Consequently, these works are subject to the weakness of this semantics 
identified by van der Meyden. 
In particular, they misclassify van der Meyden's example.
Leslie provides an unwinding proof theory for her definition that is  equivalent to
Rushby's unwinding in the static case, hence is incomplete. Eggert et. al. do  
provide a complete proof theory and a complexity characterization for their definition.

Our main contribution in this paper is to develop alternate semantics for dynamic 
intransitive noninterference policies that generalize the more adequate notion of 
TA-security in the static case. Indeed, we show (in Section~\ref{sec:knowledgebased}) that there are several candidates for
such a semantics since, in the dynamic setting, it turns out that the  information 
flows permitted by policy edges are not always the complement of the information flows
prohibited by the absence of policy edges. This leads to two distinct
semantics, one of which prioritizes the permissions and the other of
which prioritizes the prohibitions. A further complexity to policy in the dynamic setting 
is the question of what knowledge the agents in the system are permitted
to have of the state of the policy itself. However, this consideration is
also helpful: we show that \emph{locality}, a natural condition on agent knowledge of the policy, 
provides 
necessary and sufficient conditions
for the permissive and the prohibitive interpretations of
the policy to be equivalent. 
Reference to agent knowledge also enables us to  
characterize precisely the maximal information permitted to be known to each agent
in the system at moment of time, in terms of a concrete representation of the 
information that may have been transmitted to the agent. 
We show 
by some examples 
that our definitions improve on those of Leslie 
\cite{Leslie2006} and Eggert et. al.~\cite{Eggert2013},
not just in the treatment of static policies (van der Meyden's example justifying 
TA-security), but also in the classification of some simple dynamic policies.

We then (in Section~\ref{sec:prooftechniques}) develop proof methods that 
may be used to show that a system is secure, 
with a focus on local policies, so that the technique applies
to both of our definitions of security. We first  generalize the 
classical \emph{unwinding} proof method \cite{goguen84} to our new definitions. 
This is used  (in Section~\ref{sec:accesscontrolinterpretation}) to generalize,  
to a dynamic setting, Rushby's \cite{rushby92} results 
on enforcement of a static intransitive noninterference policy by means of 
access control settings. 
In both cases, we demonstrate not 
just soundness but also completeness of the proof method.

As an 
application of the theory, we establish 
(in Section~\ref{sec:flume})
security for 
a capability system, motivated by  the Flume operating system \cite{Krohn2009}. 
We show that the capability system enforces a naturally 
 associated dynamic intransitive noninterference policy. 
This result gives an illustration of how the general theory 
can be applied to a concrete system.

\section{Basic Definitions} 

\label{sec:definitions}

In this section we give the basic definitions concerning the 
systems model about which we will make security judgements, 
and introduce the syntax of  dynamic intransitive noninterference policies. 
The following section develops several semantics for these  policies in this systems model. 

\subsection{Systems} 

We model systems as deterministic automata in which the actions of multiple agents 
cause state transitions. Agents obtain information about
the state of system, and each other's actions, by making observations of the state of the system. 

A \emph{signature}  is a tuple $\tuple{\Dom, \dom, \Actions}$, 
where $\Dom$ is a finite set of \emph{domains}, (which we may also refer to  as \emph{agents}), 
 $\Actions$ is a finite set of \emph{actions}, and $\dom\colon \Actions \rightarrow \Dom$, 
 assigns each action to a domain (agent). Intuitively,  $\dom(a)$ is the domain within 
which the action $a$ is performed (the agent performing that action). 
We write $\Actions^*$ for the set of finite sequences of elements of the set $A$ and
denote the empty sequence by $\epsilon$.

\newcommand{\trans}{\rightarrow}

An {\em automaton} 
for a signature $\tuple{\Dom, \dom, \Actions}$
is a tuple $\tuple{S, s_0, \trans}$, where $S$ is a set of states (not necessarily finite), $s_0\in S$ is
the \emph{initial state}, and ${\trans} \subseteq S\times A\times S$  is a labelled transition relation. 
This relation is required to be {\em input-enabled}, 
in the  sense that for all states $s\in S$ and actions $a\in A$, 
there exists  $t\in S$ such that $(s,a,t) \in {\trans}$. We also assume
that the relation is deterministic in the sense that if $(s,a,t)\in {\trans}$ and $(s,a,t') \in {\trans}$
then $t = t'$. Under these assumptions, we may write $s\cdot a$ for the unique $t$ such that 
$(s,a,t) \in {\trans}$. 

We make several uses of the general notion of automaton. One is the 
machine model that we analyze for security. 
Formally, a \emph{system} for a signature $\tuple{\Dom, \dom, \Actions}$
is an automaton $\tuple{S, s_0, \trans}$ for the signature, together with 
an observation function $\obs\colon \Dom \times S \rightarrow O$.  
Intuitively, $\obs(u,s)$ is the observation 
made in domain $u$ (by agent $u$) when the system is in state $s$. 
Since the domain $u$ will typically be fixed in a context of application, 
but the state $s$ will vary, we write $\obs_u(s)$ for $\obs(u,s)$. 
We write systems in the form 
$M = \tuple{S, s_0, \trans, \Dom, \dom, \Actions, \obs}$
when we wish to make all the relevant components explicit.

The assumption that the transition relation is input-enabled has often been made in the 
literature.  It prevents enablement of actions being a source of information flow, and 
intuitively represents that it is always possible for an agent to attempt to perform an action 
(even if it would have no actual effect). 
A situation $s$ where an agent is not able to successfully perform an action can be represented 
in an input-enabled model by means of a transition $(s,a,s)$ that leaves the state unchanged. If
it is desired to model that the agent is able to observe that the action is not enabled, 
this can be encoded in the observation that the agent makes in either the source or destination state of the 
transition. (This  leaves the modelling of the way that enablement of actions becomes known to agents
in the hands of the system modeller, as opposed to many process algebraic semantics, e.g., 
those using bisimulation, that would mandate that enablement of actions is always observable.) 

The restriction to deterministic systems is made in part because the theory of 
intransitive noninterference is better understood in deterministic systems \cite{rushby92,Meyden15}
than in nondeterministic systems \cite{Engelhardt2012}. However, in many cases, 
nondeterminism can be represented in deterministic systems by using
additional agents to model the source of nondeterminism. (This does 
require that such additional agents be included in the policy as well as in the systems model, but
that can be argued to add clarity to the meaning of the policy.)

A \emph{run} of a system is a sequence $s_0 a_1 s_1 \ldots a_n s_n$, where $n \geq 0$ and 
the $s_i \in S$ are states (with $s_0$ the initial state of the system) and the $a_i\in \Actions$ are actions, such that for $i=1 \ldots n$ we 
have $s_i = s_{i-1} \cdot a$. We call the sequence $\alpha =a_1 \ldots a_n\in \Actions^*$ the \emph{trace} of the run.  
Conversely, since the system is input-enabled and deterministic, every sequence $\alpha \in \Actions^*$ 
is the trace of a unique run. We may therefore refer to runs by their corresponding traces. 
For a sequence $\alpha = a_1 \ldots a_n \in \Actions^*$ and a state $s\in S$, 
we write $s \cdot \alpha$ for the state reached after executing the sequence of actions $\alpha$ starting at $s$, 
i.e., the state $t_n$, obtained inductively by taking  $t_0 = s$ and 
$t_i = t_{i-1} \cdot a_i$, for $i =1\ldots n$.

Two systems $M = \tuple{S, s_o, \trans, \obs}$ and  $M = \tuple{S', s_0', \trans', \obs'}$ for the same signature $\langle\Dom, \dom,\Actions\rangle$ 
are \emph{bisimilar} if for all $u \in \Dom$ and $\alpha\in \Actions^*$, we have $\obs_u(s_0\cdot \alpha) = \obs'_u(s_0'\cdot \alpha)$. 
(We remark that we do not need the usual relational definition of bisimilarity because our systems are deterministic.) 
It is easily checked that all the definitions of security that  follow are preserved under bisimilarity.

\newcommand{\unwound}{\mathit{unfold}}
The \emph{unfolding} of a system $M$ for signature $\langle \Dom, \dom, \Actions\rangle$ is defined to be the system 
$\unwound(M)$ for the same signature, 
with states $A^*$, initial state $\epsilon$, 
transition relation such that $\alpha \cdot a = \alpha a$ for all $\alpha \in A^*$ and $a\in A$, 
and observation functions $\obs_u(\alpha) = \obs^M_u(s_0\cdot \alpha)$.  
The system $\unwound(M)$  is easily seen to be bisimilar to~$M$, 
so $\unwound(M)$ can be thought of as an 
equivalent implementation of the system $M$.

\subsection{Static Policies} \label{sec:static-policies}

A {\em static policy} describes the permitted and prohibited flows of information between domains. 
Formally, a static policy for the signature $\tuple{\Dom, \dom, \Actions}$ 
 is a reflexive binary relation  $\interferes$ (possibly intransitive) on the set of domains $\Dom$.
Intuitively, $u \interferes v$ means that information is permitted to flow from domain $u$ to domain 
$v$, and conversely, $u \noninterferes v$ means that information flows from domain $u$ to domain $v$ are
prohibited.  

Policies are assumed to be reflexive because, intuitively, nothing can be done to prevent 
information flowing from a domain to itself, so this is allowed by default. 
Since $\interferes$ can be taken to be the edge set of a directed graph with vertex set D, we may use terminology from graph theory 
when we describe properties of a policy. For example, we may call $ u \interferes v$  an ``edge'' of the policy $\interferes$, 
and talk about ``paths" in the relation~$\interferes$. 

We remark that the system model introduced above appears to lack a notion of 
internal (silent) transitions, but these can be  
handled by adding a systems domain $Sys$, such that each internal transition 
corresponds to some action of domain $Sys$. This representation also requires
adding an edge $Sys \interferes u$ to the policy for all other domains $u$.  

A significant body of work already exists in relation to static policies, e.g., \cite{Goguen1982,rushby92,Meyden15,Roscoe1999}. 
Since our focus in this paper is to find semantics for dynamic policies that generalizes the semantics of van der Meyden \cite{Meyden15} 
for static intransitive noninterference policies, we reiterate here just that semantics in order to motivate what follows, and refer the reader to 
the literature for more detailed discussion. The semantics involves for each agent $u$ a function 
$\taname_u$ from the set of traces, such that for a trace $\alpha$, the value 
$\taname_u(\alpha)$ intuitively represents a concrete encoding of the maximal 
amount of information that the agent is permitted to have about $\alpha$. 
The function $\taname_u$ transforms a trace $\alpha$ into a binary tree, which contains less information about the order of the actions than the trace does. 
This 
function is defined inductively by%
\footnote{
Note that the functions $\taname_u$ depend on the policy $\interferes$, but 
to avoid clutter,  this is suppressed in the notation, here and for similar
definitions later in the paper.   We may add the policy to the notation in contexts containing several policies, 
e.g., by writing $\taname^{\interferes}_u(\alpha)$ when the policy used is $\interferes$.}
$\ta \epsilon u  = \epsilon$ and for every $\alpha \in \Actions^*$ and $a \in \Actions$ by
\begin{align*}
 \ta {\alpha a} u & =
 \begin{cases}
  (\ta {\alpha} u, \ta \alpha {\dom(a)}, a) & \text{if } {\dom(a) \interferes u}\\
  \ta {\alpha} u & \text{otherwise} 
  \enspace. 
 \end{cases}
\end{align*}
Intuitively, in case $\dom(a)\noninterferes u$, 
information from $\dom(a)$ is not permitted to flow to $u$, so the occurrence of the
action $a$ causes no change  in the maximal information $\ta \alpha u$ that
may be possessed by $u$ after $\alpha$. 
However, if  $\dom(a)\interferes u$, then the occurrence 
of $a$ after $\alpha$ causes the maximal information that agent $u$ is permitted to have to increase, 
by adding to its prior information   $\ta {\alpha} u$ the maximal 
information $\ta \alpha {\dom(a)}$ that the agent performing $a$ is permitted to have after $\alpha$, 
as well as the fact that the action $a$ has occurred. 
Using the functions $\taname_u$, we may 
now declare a system to be secure, in the sense of complying with the information flow policy $\interferes$, 
as follows: 

\begin{definition} 
A system $M$ is TA-secure with respect to policy $\interferes$, if
for all agents $u$ and $\alpha, \beta\in \Actions^*$, if 
$\ta \alpha u =\ta \beta u$ then $\obs_u(s_0 \cdot \alpha) = \obs_u(s_0 \cdot \beta)$. 
\end{definition} 

Intuitively, this says that for all agents $u$, the observation $\obs_u(s_0 \cdot \alpha) $
contains no more information than the maximal permitted information $\ta \alpha u$.

In general, when considering the deductive capabilities of agents in computer security, it 
is appropriate to assume that an adversary will reason based not just on their
current observation, but also using their past observations. This 
suggests that we should consider a 
\emph{perfect recall} attacker. In the case of TA-security, 
a definition stated using a perfect recall attacker is equivalent to 
one stated using just the final observation of traces \cite{Meyden15}. 
We formulate this result in a more general 
way, so that it also applies to the other definitions we state in this paper. 

First, to capture a perfect recall attacker, we define the
\emph{view} of an agent $u$ inductively, by 
$\view_u(\epsilon) = \obs_u(s_0)$ and
\begin{align*}
 \view_u(\alpha a)& = 
 \begin{cases}
   \view_u(\alpha) \; a \; \obs_u(s_0 \cdot \alpha a) & \text{if } \dom(a) = u \\
   \view_u(\alpha) \circ \obs_u(s_0 \cdot \alpha a) & \text{otherwise}
 \end{cases}
\end{align*}
where every $\alpha \in \Actions^*$ and $a \in \Actions$. 
Here, $\sigma \circ x$ denotes the absorptive concatenation of an element $x$ of a set  $X$ to a string whose final element is in $X$, defined
by $\epsilon x = x$, and $\sigma y\circ x = \sigma y x$ if $x\neq y$ and $\sigma y\circ x = \sigma y$ if $y = x$. 
The reason we apply this type of concatenation is to capture the asynchronous behaviour of the system, 
where agents do not necessarily have access to a global clock. 
Intuitively, agent $u$ is aware of its own actions, so these are always available in its perfect recall 
view. This is captured by the first clause of the definition of $\view_u$. However, when another agent performs an action, this may or may not
cause a change in $u$'s observation. If it does, then $u$ detects the change, and the new observation is added to its view. 
However, if there is no change of observation, then there is no change to the view, since the agent is assumed to operate asynchronously, 
so that it does not know for what length of time it makes each observation.

Consider an indexed collection of functions $f= \{f_u\}_{u \in \Dom}$, 
each with domain $\Actions^*$, and defined 
by $f_u(\epsilon) = \epsilon$ and 
\begin{align*}
 f_u(\alpha a) &= 
 \begin{cases}
	(f_u(\alpha), g(\alpha,a,u))& \text{if } 
	 C(\alpha, a, u) \\ 
	f_u(\alpha) & \text{otherwise}
 \end{cases}
\end{align*}
for some boolean condition $C$ and function $g$,  
where $\alpha \in \Actions^*$, $a\in \Actions$ and $u \in \Dom$. 
Note that the collection $\taname$ is in this pattern, with $g(\alpha,a, u) = (f_{\dom(a)}(\alpha),a)$ and $C(\alpha, a, u) = \dom(a) \interferes u$. 
We say that $f$ is \emph{self-aware} if $\dom(a) = u$ implies $C(\alpha,a,  u)$, 
and $g(\alpha, a, u) = g(\beta, b,u)$ implies $a=b$. Intuitively, in the self-aware case, 
$f_u(\alpha)$ encodes at least a record of all the actions of domain $u$ that occur in $\alpha$. 

Say that a system is \emph{$f$-secure}  
if for all $\alpha, \beta \in \Actions^*$ with $f_u(\alpha) = f_u(\beta)$, we have $\obs_u(s_0 \cdot \alpha) = \obs_u(s_0 \cdot \beta)$. 
An alternative  definition with the $\view$ function instead of $\obs$ is: 
a system is \emph{$f$-view-secure} 
if for all $\alpha, \beta \in \Actions^*$ with $f_u(\alpha) = f_u(\beta)$, we have $\view_u(\alpha) = \view_u(\beta)$. 
These definitions are equivalent, subject to self-awareness of $f$: 

\begin{lemma}
\label{lem:obs-view} 
Suppose that $f$ is self-aware. Then a system $M$ is $f$-secure 
iff it is $f$-view-secure. 
\end{lemma}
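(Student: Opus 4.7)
The plan is to prove the two directions separately, with the forward direction being essentially immediate and the reverse direction requiring a careful induction that exploits self-awareness.

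First I would dispatch the easy direction ($f$-view-secure $\Rightarrow$ $f$-secure) by establishing a small auxiliary fact: for every trace $\alpha$, the view $\view_u(\alpha)$ ends with $\obs_u(s_0 \cdot \alpha)$. This is a quick induction on $|\alpha|$: it holds for $\epsilon$ by definition; in the step, if $\dom(a) = u$ then by definition the view is explicitly appended with $a$ and $\obs_u(s_0 \cdot \alpha a)$, and otherwise the absorptive concatenation $\view_u(\alpha) \circ \obs_u(s_0 \cdot \alpha a)$ yields a sequence whose final element is $\obs_u(s_0 \cdot \alpha a)$ regardless of whether absorption fires. Given this, $\view_u(\alpha) = \view_u(\beta)$ immediately forces $\obs_u(s_0 \cdot \alpha) = \obs_u(s_0 \cdot \beta)$, so $f$-view-security entails $f$-security. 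Self-awareness is not needed for this direction.

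For the harder direction, assume $M$ is $f$-secure and prove by strong induction on $|\alpha| + |\beta|$ that $f_u(\alpha) = f_u(\beta)$ implies $\view_u(\alpha) = \view_u(\beta)$. The base case $\alpha = \beta = \epsilon$ is trivial. For the step, I would case split on whether the last action of one of the traces triggers the condition $C$. Case (i): $\alpha = \alpha' a$ with $\neg C(\alpha', a, u)$. Then $f_u(\alpha) = f_u(\alpha')$, so by induction $\view_u(\alpha') = \view_u(\beta)$; also self-awareness forces $\dom(a) \neq u$ (otherwise $C$ would hold), so $\view_u(\alpha) = \view_u(\alpha') \circ \obs_u(s_0 \cdot \alpha)$. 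By $f$-security and the auxiliary fact, the last element of $\view_u(\alpha') = \view_u(\beta)$ is $\obs_u(s_0 \cdot \beta) = \obs_u(s_0 \cdot \alpha)$, so the absorptive concatenation is a no-op and $\view_u(\alpha) = \view_u(\beta)$. Case (ii) is symmetric, with $\beta$ ending in a non-$C$-triggering action. Case (iii): both $\alpha = \alpha' a$ and $\beta = \beta' b$ end in $C$-triggering actions. Then $f_u(\alpha) = (f_u(\alpha'), g(\alpha', a, u))$ and analogously for $\beta$; equality of pairs gives $f_u(\alpha') = f_u(\beta')$ and $g(\alpha', a, u) = g(\beta', b, u)$. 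Self-awareness then yields $a = b$, and the inductive hypothesis gives $\view_u(\alpha') = \view_u(\beta')$. Finally, using $f$-security to equate $\obs_u(s_0 \cdot \alpha)$ and $\obs_u(s_0 \cdot \beta)$, both definitional clauses for $\view_u$ (the $\dom(a) = u$ clause and the other) yield $\view_u(\alpha) = \view_u(\beta)$.

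The one subtle point, and the only place where one must be careful, is Case (i): without $f$-security one cannot conclude that $\view_u(\alpha')$ already ends with $\obs_u(s_0 \cdot \alpha)$, and without the auxiliary fact about the last element of the view, one cannot argue that the absorptive concatenation leaves the view unchanged. Self-awareness is essential here to rule out the possibility that a non-$C$-triggering step is actually an action by $u$ (which would visibly extend the view). Beyond this, the remaining cases amount to mechanical unfolding of the definitions.
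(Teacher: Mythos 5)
Your proof is correct and takes essentially the same route as the paper's: the same case split on whether the final action triggers $C$, the same two uses of self-awareness, and the same implicit fact that $\view_u(\alpha)$ ends in $\obs_u(s_0\cdot\alpha)$ (which you helpfully make explicit). The only cosmetic difference is that you run the hard direction as a direct induction on $|\alpha|+|\beta|$ where the paper argues contrapositively via a minimal counterexample.
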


\begin{proof}
 Assume first that $M$ is not $f$-view-secure. 
 Then there are $u \in \Dom$ and $\alpha, \beta \in \Actions^*$ of minimal combined length with $f_u(\alpha) = f_u(\beta)$ and $\view_u(\alpha) \neq \view_u(\beta)$. 
 At least one of $\alpha$ and $\beta$ is not the empty trace, suppose that it is $\alpha$ and let $\alpha = \alpha' a$ for some $\alpha' \in \Actions^*$ and $a \in \Actions$. 
 Then there are two cases:
 \begin{itemize}
  \item \emph{Case 1:} $f_u(\alpha' a) = f_u(\alpha')$. 
  Then $f_u(\alpha') = f_u(\beta)$, so 
  from the minimality of $\alpha$ and $\beta$ it follows that $\view_u(\alpha') = \view_u(\beta)$ and hence $\view_u(\alpha') \neq \view_u(\alpha' a)$. 
  Since 
   $C(\alpha, a,u)$ is false, we have by self-awareness that 
  $\dom(a) \neq u$, so $\view_u(\alpha' a) = \view_u(\alpha') \obs_u(s_0 \cdot \alpha'a)$. 
  Therefore, we have $\obs_u(s_0 \cdot \alpha' a) \neq \obs_u(s_0 \cdot \beta)$. 
  \item \emph{Case 2:} $f_u(\alpha' a) \neq f_u(\alpha')$. 
  We can assume that $\beta = \beta'b$ for some $\beta' \in \Actions^*$ and $b \in \Actions$ with $f_u(\beta' b) \neq f_u(\beta')$ since otherwise, we proceed with the first case with the roles of $\alpha$ and $\beta$ swapped. 
  From the inductive definition of $f_u$ 
  and self-awareness, it
  follows that $a = b$ and $f_u(\alpha') = f_u(\beta')$. 
  By the minimality of $\alpha$ and $\beta$, we have 
$\view_u(\alpha') = \view_u(\beta')$. There are two cases of the definitions of $\view_u(\alpha' a)$ and $\view_u(\beta' a)$, 
depending on whether $\dom(a) = u$, but in either case, it follows that 
$\obs_u(s_0 \cdot \alpha'a) \neq \obs_u(s_0 \cdot \beta' b)$, since the 
only difference in these sequences can be in the final observation. 
\end{itemize}
The other direction of the proof follows directly from the definition of the $\view$ function. 
\end{proof}

Since the definition of $f=\taname$ is easily seen to be self-aware (using reflexivity of the policy), it follows from Lemma~\ref{lem:obs-view} 
that TA-security and $\taname$-view-security are equivalent.

\subsection{Dynamic Policies}

Our main concern in this paper is with \emph{dynamic intransitive information flow policies}, 
which generalize static policies by allowing the edges of the policy to depend 
on the actions that have been performed in the system. Formally,
a \emph{dynamic policy} is a relation ${\interferes} \subseteq {\Dom \times A^* \times \Dom}$, 
such that ${(u, \alpha, u)} \in {\interferes}$ for all $\alpha \in \Actions^*$ and all $u \in \Dom$. 
We write $\alpha \models u \interferes v$ when $(u, \alpha ,v)\in {\interferes}$. 
Intuitively,   $\alpha \models {u \interferes v}$ says that, after the sequence of actions
$\alpha$ have been performed, information may flow from domain $u$ to domain $v$. 
Thus, $\interferes_\alpha = \{(u,v) \mid (u,\alpha, v) \in {\interferes} \}$ is the (static) policy that applies after the actions $\alpha$  have been performed. 

Dynamic policies can be represented using 
automata.  We say that a tuple $\tuple{P,\interferes'}$
consisting of an automaton $P = \tuple{S,s_0, \trans}$ and
a relation  ${\interferes'} \subseteq \Dom \times S\times \Dom$ 
\emph{represents} a dynamic policy  $\interferes$, if for all 
$\alpha \in \Actions^*$, domains $u,v, \in \Dom$, and state $s \in  \States$ with $s = s_0 \cdot \alpha$, 
we have $\alpha \models {u \interferes v}$ iff 
$(u, s, v) \in {\interferes'}$.  
(In this case we also write $s \models {u \interferes' v}$.) 
Every dynamic policy has such a representation (with an infinite number of states), since we
may take $S= \Actions^*$ and ${\trans} = \{(\alpha, a, \alpha a)~|~\alpha\in \Actions^*, a\in \Actions\}$. 
We may call the policy \emph{finite state} if it has a representation with $S$ finite. 

Let $\langle \Dom, \dom, \Actions\rangle$ be a signature.  
A \emph{policy enhanced system} for this signature is an automaton $\A=\tuple{S, s_0, \trans}$  that is equipped with 
a relation ${\interferes} \subseteq {\Dom \times S\times \Dom}$  and an observation function $\obs$ with domain $\Dom \times S$.
Given a system $M$ and a dynamic policy $ \interferes'$ for the signature, 
we say that the policy enhanced system $\tuple{\A, \interferes, \obs}$ \emph{encodes} $M$ and $\interferes'$ if 
$\tuple{\A,\obs}$ is bisimilar to $M$ and $\tuple{\A,\interferes}$ represents $\interferes'$.

For every pair consisting of a system $M$ and a dynamic policy $\interferes$ for the same signature, we can construct a policy enhanced system
that encodes $M$ and $\interferes$. 
Given an automaton representation $\tuple{\A_P,\interferes'}$ of  $\interferes$ 
where $\A_P = \tuple {S^P, s^P_0, \trans^P}$,  
and  a system $M \hspace{-2pt}= \hspace{-2pt} \tuple {\A_M,\obs^M}$  where $\A_M = \tuple{S^M, s^M_0, \trans^M}$ for the same signature, 
we define the product automaton 
$A_M \times \A_P
=\tuple{S, s_0, \trans}$, 
where $S = S^M \times   S^P$, $s_0 = (s^M_0, s_0^P)$
and $((s, p), a, (s',p')) \in  {\trans}$  iff $(s,a,s') \in {\trans^M}$ and $(p,a,p') \in {\trans^P}$. 
The automaton 
$\A_M \times \A_P$ 
may be equipped with an observation function $\obs$ with domain 
$S$  and a policy  relation ${\interferes''} \subseteq \Dom \times S\times \Dom$ 
by defining $\obs(u, (s,p)) = \obs^M(u,s)$ and ${\interferes''} = \{(u,(s,p),v)~|~u \interferes_p v\}$. 
It is then straightforward to show that the policy enhanced system $\tuple{ \A_M \times \A_P, \obs, \interferes''}$
encodes $M$ and $\interferes$. 

It will often be convenient, in presenting examples, 
to use a diagram  of a policy enhanced system in order to present the policy and the system 
together as a single automaton.  Figure~\ref{fig:conflict} gives an example of this presentation. 
We note the following conventions to be applied in interpreting these diagrams. 
The systems represented are input-enabled, but we elide self loops 
corresponding to edges of the form $(s,a,s)$ in order to reduce clutter. 
States are partitioned into three components. The top component gives the 
name of the state. The middle component depicts the static policy applicable at that 
state. The bottom component gives partial information about the observations that the 
agents make at the state: typically, only one agent's information is relevant to the discussion, 
and we elide the observations of the other agents in the system.

There is a natural order on policies. 
Consider the binary relation $\leq$ on policies with respect to a given signature $\langle\Dom, \dom, \Actions\rangle$, 
defined by ${\interferes} \leq {\interferes'}$ iff for all $\alpha \in A^*$ and $u,v\in \Dom$, 
we have $\alpha \models {u \interferes v}$ implies $\alpha \models {u \interferes' v}$. 
It is easily seen that this relation partially orders the policies with respect to the signature. 
Intuitively, if ${\interferes} \leq {\interferes'}$ then $\interferes$ places more
restrictions on the flow of information in a system than does $\interferes'$. 
This  intuition can be supported for both a permissive and a prohibitive reading
of policies. On a permissive reading, ${\interferes} \leq {\interferes'}$
 intuitively says that every situation in which a flow of information is explicitly 
 permitted by $\interferes$ is one where the flow of information is explicitly 
 permitted by $\interferes'$. Thus, $\interferes$ is more restrictive than $\interferes'$ in the sense of 
 having fewer explicitly permitted flows of information. 
On a prohibitive reading,  ${\interferes} \leq {\interferes'}$ says (contrapositively) that 
 every situation where $\interferes'$ explicitly prohibits a flow of information is
 one where $\interferes$ also explicitly prohibits that flow of information. 
 Thus, $\interferes$ is more restrictive than $\interferes'$ in the sense of 
 having more explicitly prohibited flows of information. 
We may therefore gloss ${\interferes} \leq {\interferes'}$ 
as stating that $\interferes$ is \emph{more restrictive} than $\interferes'$.

\subsection{Logic of Knowledge} 

It will be convenient to formulate some of our definitions using formulas 
from a logic of knowledge. Given a signature $\tuple{\Dom, \dom, \Actions}$, 
we work with formulas $\phi$ expressed using the following 
grammar: 
$$ \phi ::= u \interferes v~|~\phi\land \phi~|~ \neg \phi~|~ K_u\phi~|~ D_G\phi $$ 
where $u,v \in \Dom$ are domains and $G \subseteq \Dom$ is a set of domains. 
Intuitively, the atomic propositions of the logic are assertions of the form
$u\interferes v$ concerning the static policy holding at a particular point of time. 
We write $\Prop$ for the set of atomic propositions. 
The logic contains the usual boolean operators for conjunction and negation, 
and we freely use other boolean operators that can be defined using these, 
e.g., we write $\phi \rimp\psi$ for $\neg (\phi \land \neg \psi)$.  
The formula $K_u \phi$ intuitively says that domain $u$ \emph{knows} that $\phi$ 
holds,  and $D_G \phi$ says that it is \emph{distributed knowledge} to the 
group $G$ that $\phi$ holds, i.e., the group $G$ would be able to deduce  
$\phi$ if they were to pool all the information held by the members of the group. 

The semantics of the logic is a standard Kripke semantics for 
epistemic logic \cite{fhmvbook}: formulas are interpreted
 in Kripke structures
$\M = \tuple{W, \{\sim_u\}_{u \in \Dom}, \models}$ where 
$W$ is a set, 
 for each $u\in \Dom$, we have an equivalence relation $\sim_u$ on $W$, 
 and ${\models} \subseteq {W\times \Prop}$ is a binary relation.  
 
 Intuitively,  $W$ is a set of worlds, representing possible situations in a system of
interest. The equivalence relation $\sim_u$
intuitively corresponds to indistinguishability of worlds to an agent: 
$w \sim_u w'$ will hold when agent $u$ has the same information available to 
it when it is in situation $w$ as it has in situation $w'$. 
For a world $w$ and an atomic proposition $p$, the relation $w\models p$
represents that the proposition $p$ is true at the world $w$. 
We may also write this as $\M, w\models p$ to make the structure $\M$ explicit. 
In turn, this relation can be extended to a satisfaction relation $\M, w\models \phi$
for arbitrary formulas, by means of the following recursion: 
\begin{tabbing}
$\M, w\models {\phi_1 \land \phi_2}$~~\= \kill 
$\M, w\models {\phi_1 \land \phi_2}$\> if $\M, w\models \phi_1$ and $\M,w\models \phi_2$ \\ 
$\M, w\models {\neg \phi}$ \> if not $\M, w\models \phi_1$ \\ 
$\M, w\models K_u \phi$ \> if  $\M, w'\models \phi$ for all $w'\in W$ with $w \sim_u w'$ \\ 
$\M, w\models D_G \phi$ \> if  $\M, w'\models \phi$ for all $w'\in W$ with $w \sim_u w'$ \\ \> for all $u \in G$  
\end{tabbing} 
In our applications, we will work  with the set of worlds $W =A^*$, i.e., worlds will be traces 
for the signature of interest. The basic relation $w \models p$ 
will be the relation $\alpha \models {u \interferes v}$ from some 
dynamic policy for this signature. We discuss the 
equivalence relations we use later.

\section{Semantics for dynamic policies} 

\label{sec:knowledgebased}

In the setting of dynamic policies, several subtle issues arise that a suitable 
definition of security needs to take into account.  

One is that policies can be interpreted with a focus on positive or negative edges. 
One can read $\alpha \models {u \interferes v}$ as stating a permission: 
in state $\alpha$, actions of $u$ may pass information from $u$ to $v$ 
(even if $u$ or $v$ does not know that  $\alpha \models {u \interferes v}$). 
Nothing about the policy or system may override this permission, and an agent
cannot be sanctioned for having caused the information flow. 
Alternately,  and more restrictively, one could focus on the converse: 
and treat $\alpha  \models {u \noninterferes v}$ as stating a prohibition: in state $\alpha$, 
information may not flow from $u$ to $v$.  The interaction between 
various such prohibitions may have the effect that even where there is 
an edge $\alpha \models {u \interferes v}$, it is in fact prohibited 
for information to flow from $u$ to $v$ because of derived prohibitions. 
The following example suggests that these two readings of the policy may be in conflict. 

\begin{example} \label{ex:conflict-example} 
Consider the system in Figure~\ref{fig:conflict}. 
Intuitively, domain $P$ is a policy authority that controls the 
policy between domains $A$ and $B$ by means of its action $p$, and 
$a$ is an action of domain $A$.

\begin{figure}
\begin{center} 
\scalebox{0.8}{
 \begin{tikzpicture}[tikzglobal,node distance=1cm]
 
    \newcommand{\agents}{ 
      \node[agent] (P) [left] {$P$};
      \node[agent] (A) [right of=P] {$A$};
      \node[agent] (B) [below of=A] {$B$};
      }

 \node[initial,systemstate] (s0) {
   $s_0$ 
   \\
   \hline
   \agents
   \\
   \hline
   $\obs_B \colon 0$ \\
  };
   
 \node[systemstate] (s1) [right=of s0] {
 $s_1$ 
 \\
 \hline
  \agents
  \\
   \hline
   $\obs_B \colon 0$ \\
 };
   \path[policy] (A) edge (B);
   
 \node[systemstate] (s2) [right=of s1] {
 $s_2$
 \\
 \hline
  \agents
  \\
   \hline
   $\obs_B \colon 1$ \\
 };
    \path[policy] (A) edge (B);
   
 \path (s0) edge node {$p$} (s1) 
 (s1) edge node {$a$} (s2)
 ;
\end{tikzpicture}}
\end{center} 
\caption{Conflicting permissive and prohibitive interpretations}
\label{fig:conflict}
\end{figure}
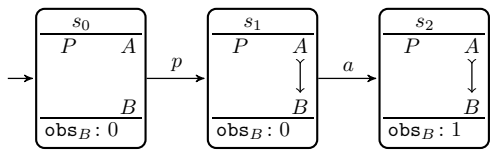 
This system displays a conflict in the transition from state $s_1$ to state $s_2$,  
between the permissive reading of the edge $A \interferes B$ 
and the prohibitive reading of the edge $P \noninterferes B$.  
On the one hand, according to a permissive reading of the policy, the fact that the action $a$ causes a change in the 
observation in domain $B$ is not an insecurity, because the policy explicitly permits 
a flow of information from domain $A$ to $B$ in state $s_1$, and what $B$ learns in state $s_2$ is that 
$A$ has  performed action $a$. On the other hand, in no state of the policy is there an edge from domain $P$ to domain $B$. 
Even the union of all possible policy states does not contain a path from $P$ to $B$. 
Thus, 
we expect from a prohibitive reading of the policy that there should be no flow of information 
from domain $P$ to domain $B$. On this reading, the transition from $s_1$ to $s_2$ displays 
an insecurity in the system, since on making observation 1 domain $B$ can deduce not just that 
action $a$ has occurred, but also that it has been preceded by action $p$ in domain $P$. 
Thus, the system contains a prohibited flow of information from $P$ to $B$, and should be judged to be insecure. 
\end{example}

There are several attitudes one can take in the face of such apparent conflicts within a policy. 
One is to allow explicit permissions to take precedence over prohibitions. Another is to 
require that prohibitions take precedence over permissions.  Finally, we can restrict policies
to instances that do not contain such conflicts. We will consider all these possibilities. 

For static policies, it is safe to assume that the policy is known to all agents in the system:  
indeed, it can be assumed to be common knowledge. A second subtlety for dynamic policies is that the current 
policy state depends on past actions, and not all agents will be permitted to know the 
entire past. This implies that there may be situations where some agents do not 
have complete information about the current policy state. This causes difficulty 
for any agent that is required to enforce compliance with the policy. 
For example, in the policy of Example~\ref{ex:conflict-example}, agent $A$ is never permitted
to know whether $P$ has performed action $p$, so cannot be relied upon to 
enforce a policy restriction on the transfer of information about its actions to agent $B$, as applies in the initial policy state, but not
in state $s_1$. We will approach this issue below by interpreting the policy in a way 
takes into account what  agents know about the policy state.

\subsection{Permissive Interpretation} 

We first formulate a definition that captures a permissive reading of a  
dynamic policy, while generalizing TA-security. 
The methodology underlying the notion of TA-security is to 
construct a concrete representation of the maximal information that
each agent is permitted to have after each sequence of actions, and then to 
state that a system is secure if it has no more than that permitted information. 
For dynamic policies, with a permissive interpretation of edges $\alpha \models {u \interferes v}$, 
this leads to the following inductive definition of an operator  $\maytaname$. It is defined for every domain $u$, 
trace $\alpha \in \Actions^*$ and action $a \in \Actions$ by $\mayta \epsilon u = \epsilon$ and 
\begin{align*}
 \mayta {\alpha a} u & =
 \begin{cases}
  (\mayta {\alpha} u, \mayta \alpha {\dom(a)}, a) & \text{if } \alpha \models {\dom(a) \interferes u}\\
  \mayta {\alpha} u & \text{otherwise} 
  \enspace. 
 \end{cases}
\end{align*}
We also obtain a derived equivalence relation $\sim^\Diamond_u$ on $\Actions^*$ for each $u \in \Dom$, 
defined by $\alpha \sim^\Diamond_u \beta$ iff $\mayta{\alpha} u= \mayta{\beta} u$. 

Intuitively, 
$\alpha\models \dom(a) \interferes u$ 
states that it is  permitted,  
after trace $\alpha$, 
for
$\dom(a)$ to send information to $u$, and $\dom(a)$ exercises this by sending to $u$ all the 
information that it has (viz. $\mayta \alpha {\dom(a)}$), as well as the fact that it is performing the
action $a$. The agent $u$ receives this information and appends it to the information that it already has
(viz., $\mayta {\alpha} u$.) In case $ \alpha \models {\dom(a) \noninterferes u}$, on the other 
hand, there is no transfer of information from $\dom(a)$ to $u$, since the policy 
does not permit this.

If we take $\maytaname$ as a formalization of the maximal information permitted to an agent, we get the 
following definition of security, in the same pattern as TA-security. 

\begin{definition} 
A system $M$ is $\maytaname$-secure with respect to a policy $\interferes$ if for all domains $u$ and sequences of actions $\alpha,\beta$, 
if $\mayta{\alpha} u= \mayta{\beta} u$ then $\obs_{u}(s_0\cdot \alpha) = \obs_{u}(s_0\cdot \beta)$. 
\end{definition}

We remark that the definition of $\maytaname$ is  self-aware 
(since by reflexivity $\dom(a) = u$ implies $\alpha \models {\dom(a) \interferes u}$ for all $\alpha$)
so it follows from Lemma~\ref{lem:obs-view} 
that $\maytaname$-security and $\maytaname$-view-security are equivalent. 

The following example shows that $\maytaname$-security takes a 
permissive interpretation of Example~\ref{ex:conflict-example}.

\begin{example} \label{ex:conflict-ta}
Treating the diagram of Figure~\ref{fig:conflict} 
as a definition of the policy, we obtain that for any 
sequence $\alpha$ of the form $a^*$ or $a^*p$, 
we have $\mayta{\alpha} B  = \epsilon$.  
On the other hand $\mayta{pa} B  = (\epsilon, \epsilon,a)$, 
so every sequence $\beta$ such that $\mayta{pa} B  = \mayta {\beta} B$
has the property that it contains a $p$ followed later by an $a$. Thus, 
if we interpret $\mayta{pa} B $ as a concrete representation of what 
$B$ is allowed to know after the sequence $pa$, we conclude that $B$ 
is permitted to know that there has been an occurrence of action $p$. 
Indeed, for any sequence $\beta$ with $\mayta{pa} B  = \mayta {\beta} B$ we
have $s_0\cdot \beta = s_2= s_0\cdot pa$, so 
there is not a witness to insecurity of the system at
the sequence $pa$. The reader may confirm that this 
system is in fact $\maytaname$-secure. 

We remark here that we have $\mayta{p} A = \epsilon = \mayta{\epsilon} A$ and 
$\mayta{p} B = \epsilon = \mayta{\epsilon} B$, but 
$\mayta{pa} B = (\epsilon, \epsilon, a) \neq  \epsilon = \mayta{a} B$. 
That is, in general, the value $\mayta{\alpha a} B$ depends on more than the values  $\mayta{\alpha} B$ and $\mayta{\alpha} A$ and the action $a$; 
additional information (namely, does $\alpha\models {\dom(a) \interferes u}$ hold) about the sequence $\alpha$ is required to determine 
$\mayta{\alpha a} B$. 
We discuss the significance of this below. 
\end{example}

\subsection{Prohibitive Interpretation} 

Example~\ref{ex:conflict-ta} shows that if we would like instead to prioritize prohibitions over permissions when interpreting a
policy, we need a different representation of the maximal information permitted to an agent than 
that provided by the function $\maytaname$. It is not completely obvious what should be the 
deduced prohibitions that prevent the use of an edge $\alpha \models {\dom(a) \interferes u}$ 
as a justification for the transmission of information from $\dom(a)$ to $u$ in the transition from state 
$\alpha$ to state $\alpha a$. 

However, we may apply some general constraints to suggest a definition. 
First of all, note that for static policies, the function $\taname$ 
has the property that the information available to domain $u$ immediately after 
an action $a$ performed by domain $\dom(a)$, should depend only on the action $a$ and the 
information possessed by domains $u$ and $\dom(a)$. The function $\maytaname$ does not have this property: 
see the remark at the end of Example~\ref{ex:conflict-ta}. 

A general way to ensure that the state of information of domain $u$ 
after action $a$ depends only on the prior states of information of domains $u$ and $\dom(a)$, 
is to condition the definition of the maximal information permitted to be known to $u$ on the 
distributed knowledge of the group $\{u, \dom(a)\}$. This suggests the following inductive definition
of a function $\musttaname$, similar to $\taname$, defined by $\mustta \epsilon u  = \epsilon$ and
if $\alpha \models D_{\{\dom(a),u\}}(\dom(a) \interferes u)$, then 
\begin{equation*}
 \mustta {\alpha a} u = (\mustta {\alpha} u, \mustta \alpha {\dom(a)}, a)
\end{equation*}
and else
$ \mustta {\alpha a} u = \mustta {\alpha} u$.

The formula 
$D_{\{\dom(a),u\}}(\dom(a) \interferes u) \rimp {\dom(a) \interferes u}$ is a 
validity of epistemic logic, so this definition strengthens the condition under which information is transmitted in the definition of $\maytaname$. 
Intuitively, this definition permits transmission of $\dom(a)$'s information to $u$ only when 
it is distributed knowledge to $u$ and $\dom(a)$ that $\dom(a) \interferes u$, 
i.e., there is not the possibility, so far as these domains jointly know, that there is an 
edge $\dom(a) \noninterferes u$ that prohibits the transmission of information
from $\dom(a) $ to $u$. Thus, the  definition takes the point of view that transmission
is permitted whenever the agents would jointly be able to confirm that there is not 
an explicit prohibition to the transmission. 

As it stands, the definition of $\musttaname$ is incomplete, because in order to 
interpret the condition $\alpha \models D_{\{\dom(a),u\}}(\dom(a) \interferes u) $, 
we need an appropriate Kripke structure in order to evaluate the 
distributed knowledge operator. We may take $\Actions^*$ to be the set of 
worlds of this structure, and derive satisfaction of atomic propositions from the policy, but it remains to specify
equivalence relations $\sim_v$ for $v\in \Dom$. 
Intuitively, 
an equivalence $\alpha \sim_v \beta$ holds when agent $v$ is not able to 
distinguish between the traces $\alpha$ and $\beta$: whenever the system is
in state $\alpha$, agent $v$ considers it possible that the system is in state $\beta$. 
What these equivalence relations should be is 
not immediately apparent. We take the approach 
of identifying some constraints on these equivalence relations. 

First, consider the general constraint, that the way agent $u$'s state of 
information is updated when action $a$ is performed, should depend only on the 
prior state of information of $u$ and $\dom(a)$. This can be captured at the 
level of the relations 
$\sim_v$
by means of the following constraint: 
\begin{itemize} 
\item[(WSC)] For all traces $\alpha, \beta$, and actions $a$, if $\alpha \sim_u\beta$ and $\alpha \sim_{\dom(a)}\beta$ then 
$\alpha a\sim_u\beta a$.
\end{itemize}
(We call this constraint WSC since it is essentially the same as Rushby's condition of \emph{Weak Step Consistency} \cite{rushby92}.) 
Although it  does not mention the policy explicitly, this constraint is not inconsistent 
with the state of the policy being a factor in how $u$'s knowledge is updated. However, it 
requires that the way that the policy is taken into account depends only on information
jointly available to domains $u$ and $\dom(a)$. 

Next, to factor in the prohibitions to information transfer 
implied by the 
policy, note that one case where the policy clearly prohibits transfer of 
information from a domain $\dom(a)$ to $u$ when action $a$ is performed
after sequence $\alpha$ is where $\alpha \models \dom(a) \noninterferes u$. 
Thus, domain $u$'s state of information in state $\alpha a$ should be the same 
as its information in state $\alpha$ in this case. This suggests the 
following constraint: 
\begin{itemize} 
\item[(DLR)] For all traces $\alpha$, and actions $a$, if $\alpha \models \dom(a) \noninterferes u$ then 
$\alpha a\sim_u\alpha$.
\end{itemize} 
The nomenclature DLR arises from the fact that this is essentially a dynamic generalization of Rushby's condition of \emph{Locally Respects} \cite{rushby92}. 
A slightly stronger alternative of DLR is the following: 
\begin{itemize} 
\item[(DLR$'$)] For all traces $\alpha$, and actions $a$, if there exists a trace $\beta$ such that 
$\alpha \sim_u\beta$ and $\alpha \sim_{\dom(a)}\beta$ and 
$\beta \models \dom(a) \noninterferes u$ then 
$\alpha a\sim_u\alpha$.
\end{itemize} 
In fact, in the context of WSC, this is equivalent to DLR.

\begin{proposition}\label{prop:LRprime} 
If WSC then DLR iff DLR$'$.  
\end{proposition}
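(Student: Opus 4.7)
The plan is to prove the equivalence in two directions, with only one of them requiring the hypothesis WSC.

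First I would handle the easy direction, DLR$'$ implies DLR (which does not require WSC). Given the hypothesis of DLR, namely $\alpha \models \dom(a) \noninterferes u$, I would instantiate DLR$'$ with $\beta = \alpha$. Since each $\sim_v$ is an equivalence relation, we have $\alpha \sim_u \alpha$ and $\alpha \sim_{\dom(a)} \alpha$, so the hypothesis of DLR$'$ is satisfied. Its conclusion $\alpha a \sim_u \alpha$ is exactly the conclusion of DLR.

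For the nontrivial direction, DLR implies DLR$'$ under WSC, I would argue as follows. Assume the hypothesis of DLR$'$: there exists $\beta$ with $\alpha \sim_u \beta$, $\alpha \sim_{\dom(a)} \beta$, and $\beta \models \dom(a) \noninterferes u$. The plan is to produce the required equivalence $\alpha a \sim_u \alpha$ by chaining three equivalences in $\sim_u$. Applying DLR to $\beta$ and $a$ yields $\beta a \sim_u \beta$. Applying WSC to the two hypotheses $\alpha \sim_u \beta$ and $\alpha \sim_{\dom(a)} \beta$ yields $\alpha a \sim_u \beta a$. The hypothesis $\alpha \sim_u \beta$ gives the third link. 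Chaining via transitivity and symmetry of $\sim_u$ produces $\alpha a \sim_u \beta a \sim_u \beta \sim_u \alpha$, hence $\alpha a \sim_u \alpha$, as required.

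There is no real obstacle here: the proof is essentially a one-line diagram chase once one observes that DLR supplies the ``local'' equivalence at $\beta$ and WSC allows one to transport it back to $\alpha$. The only thing to be careful about is implicitly using that the relations $\sim_v$ are equivalence relations (for reflexivity in the easy direction, and for transitivity and symmetry in the hard direction), which is already assumed in the surrounding text.
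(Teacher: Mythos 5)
Your proof is correct and follows essentially the same route as the paper's: the easy direction by instantiating DLR$'$ with $\beta=\alpha$ via reflexivity, and the hard direction by combining WSC (giving $\alpha a \sim_u \beta a$) with DLR at $\beta$ (giving $\beta a \sim_u \beta$) and chaining back to $\alpha$ by transitivity and symmetry. No gaps.
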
  

\begin{proof} 
We assume WSC and DLR and show DLR$'$. (The other direction is straightforward.) 
Suppose that 
$\alpha \sim_u\beta$ and $\alpha \sim_{\dom(a)}\beta$ and 
$\beta \models \dom(a) \noninterferes u$.  We need to show 
$\alpha a\sim_u\alpha$. By WSC, we have $\alpha a \sim_u \beta a$. 
Since $\beta \models \dom(a) \noninterferes u$, we have by DLR that 
$\beta a \sim_u \beta$. Two applications of transitivity now yield that 
$\alpha a \sim_u \alpha$, as required. 
\end{proof}

Suppose that we now accept WSC and DLR as reasonable constraints on the
equivalence relations $\sim_u$ corresponding to the maximal knowledge that an agent $u$ is permitted 
to have. Indexed collections of equivalence relations may be partially ordered by pointwise containment. 
(That is, we take \mbox{$\{\sim_u\}_{u\in D}  \leq \{\sim'_u\}_{u\in D}$} if ${\sim_u} \subseteq {\sim'_u}$ for all $u\in D$.)  
Given the dynamic policy $\interferes$, let $\{\unw_u\}_{u\in D}$ be the 
smallest collection of equivalence relations that is consistent with WSC and DLR. 
Intuitively, these smallest relations are the ones that
allow for the {\em maximal} amount of information flow in the system that is consistent with the 
constraints. (The smaller the equivalence class, the larger the agent's amount of knowledge.)
We call these relations the \emph{unwinding relations}. 
The following fact will be useful below.

\begin{lemma} \label{lem:unwmon}
Suppose  ${\interferes} \leq {\interferes'}$ and let $\{\unw_u\}_{u\in \Dom}$ and $\{\unwprime_u\}_{u\in \Dom}$ be the 
unwinding relations for the policies $\interferes$ and $\interferes'$, respectively. 
Then ${\unwprime_u} \subseteq {\unw_u}$ for all $u\in \Dom$. 
\end{lemma}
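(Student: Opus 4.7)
The plan is to show that the unwinding relations $\{\unw_u\}_{u \in \Dom}$ for the more restrictive policy $\interferes$ themselves satisfy WSC and DLR when these conditions are interpreted with respect to the less restrictive policy $\interferes'$. Once this is established, the containment $\unwprime_u \subseteq \unw_u$ follows immediately from the fact that $\{\unwprime_u\}_{u\in \Dom}$ is defined to be the smallest collection of equivalence relations satisfying WSC and DLR with respect to $\interferes'$.

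For WSC, the condition makes no reference to the policy — it only involves the equivalence relations themselves and the action function $\dom$. Hence $\{\unw_u\}_{u \in \Dom}$, which satisfies WSC by its definition as unwinding relations for $\interferes$, trivially satisfies the same condition when paired with $\interferes'$.

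For DLR with respect to $\interferes'$, suppose $\alpha \models \dom(a) \noninterferes' u$, i.e., $(\dom(a),\alpha,u) \notin {\interferes'}$. The hypothesis ${\interferes} \leq {\interferes'}$ gives, contrapositively, that $\alpha \models \dom(a) \noninterferes u$, so DLR for $\{\unw_u\}_{u\in \Dom}$ with respect to $\interferes$ yields $\alpha a \unw_u \alpha$, which is exactly what DLR with respect to $\interferes'$ demands. Minimality of $\{\unwprime_u\}_{u\in \Dom}$ then gives the claim.

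The only subtlety I anticipate is justifying the minimality argument — namely, that the smallest collection of equivalence relations satisfying WSC and DLR actually exists. This is a routine observation: both WSC and DLR are universal Horn conditions on the equivalence relations, hence are preserved by pointwise intersection of families of equivalence relations, and pointwise intersection itself yields equivalence relations. Thus the intersection of all collections satisfying the two conditions is again such a collection, and is the smallest one. I would expect this to have been established (perhaps implicitly) when the unwinding relations were introduced, so the proof itself reduces to the two short verifications above.
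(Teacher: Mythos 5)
Your proof is correct and is essentially the paper's argument: both rest on the observation that WSC is policy-independent while every DLR instance for $\interferes'$ is also a DLR instance for $\interferes$ (since $\noninterferes'$ implies $\noninterferes$). The paper phrases this as monotonicity of the derivation rules, whereas you invoke the minimality (least-fixed-point) property of $\{\unwprime_u\}_{u\in\Dom}$ directly; these are interchangeable, and your closing remark on why the smallest such collection exists is a reasonable, if routine, addition.
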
 
\begin{proof} 
Since $\dom(a) \noninterferes' u$ implies $\dom(a) \noninterferes u$, there are fewer cases of 
basic facts $\alpha a\unw_u \alpha$ than basic facts  $\alpha a\unwprime_u \alpha$
derivable using DLR. By monotonicity of WSC and the rules for an equivalence relation, 
it follows that ${\unwprime_u} \subseteq {\unw_u}$ for all $u\in \Dom$. 
\end{proof}

We now take the relations $\{\unw_u\}_{u\in D}$ as the equivalence relations 
in the Kripke structure needed to interpret the distributed knowledge operator in the 
definition of the functions $\musttaname_u$. We call the derived security notion
associated to these functions \emph{$\musttaname$-security}.  
(Note that since, by reflexivity, if $\dom(a) = u$ then $D_{\{u, \dom(a)\}}(\dom(a) \interferes u)$ 
is valid in this Kripke structure, these functions are self-aware, so 
again, by Lemma~\ref{lem:obs-view},  $\musttaname$-security and $\musttaname$-view-security
are equivalent.) 
For $u\in D$ we also write $\sim^\Box_u$ for the relation on traces defined by $\alpha \sim^\Box_u \beta$ 
iff $\mustta \alpha u = \mustta \beta u$.  
 
However, we note that we could also take the 
unwinding 
relations directly as the 
basis for another definition of security: 

\begin{definition} 
A system is \emph{unwinding-secure} with respect to a policy $\interferes$ 
if, 
for all traces $\alpha,\beta$ and domain~$u$, 
we have $\alpha \unw_u\beta$ implies $\obs_u(s_0 \cdot \alpha) = \obs_u(s_0 \cdot \beta)$.  
\end{definition}  

In fact, it turns out that this does not give a new definition of security. 
The following result shows that the 
unwinding relations 
correspond precisely to the information 
flow modelled by the function $\musttaname$.

\begin{theorem} 
\label{thm:mustunwind} 
We have 
${\unw_u} = {\sim^\Box_u}$ for all $u\in D$. 
\end{theorem}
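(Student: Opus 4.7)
The plan is to establish the two inclusions separately, exploiting the fact that the definition of $\musttaname$ refers to the unwinding relations $\unw$ through the distributed knowledge operator, while $\sim^\Box$ is defined by equality of $\musttaname$-values; this asymmetry will dictate the proof strategy in each direction.

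For the inclusion ${\unw_u} \subseteq {\sim^\Box_u}$, I would argue by induction on a derivation witnessing $\alpha \unw_u \beta$ from the axioms of reflexivity, symmetry, transitivity, DLR and WSC. The equivalence-closure cases are immediate since $\sim^\Box_u$ is defined by an equality. For the DLR case, where $\alpha a \unw_u \alpha$ is derived from $\alpha \models {\dom(a) \noninterferes u}$, reflexivity of $\unw$ makes $\alpha$ itself a witness to the failure of $D_{\{\dom(a),u\}}(\dom(a) \interferes u)$ at $\alpha$, so the else branch of the definition of $\musttaname$ applies and $\mustta{\alpha a}{u} = \mustta{\alpha}{u}$. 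For the WSC case, where $\alpha a \unw_u \beta a$ is derived from $\alpha \unw_u \beta$ and $\alpha \unw_{\dom(a)} \beta$, the induction hypothesis gives $\mustta{\alpha}{u} = \mustta{\beta}{u}$ and $\mustta{\alpha}{\dom(a)} = \mustta{\beta}{\dom(a)}$. The key sub-claim is that $\alpha \models D_{\{\dom(a),u\}}(\dom(a) \interferes u)$ iff $\beta \models D_{\{\dom(a),u\}}(\dom(a) \interferes u)$; this follows from transitivity of $\unw_u$ and $\unw_{\dom(a)}$, since any world jointly reachable from $\beta$ under $\unw_u \cap \unw_{\dom(a)}$ is also jointly reachable from $\alpha$, and conversely. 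Applying the definition of $\musttaname$ in either case then gives $\mustta{\alpha a}{u} = \mustta{\beta a}{u}$.

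For the reverse inclusion ${\sim^\Box_u} \subseteq {\unw_u}$, I would proceed by strong induction on $|\alpha| + |\beta|$, assuming $\mustta{\alpha}{u} = \mustta{\beta}{u}$. The crucial observation is that trailing actions with trivial update can be stripped via the unwinding relations: if $\mustta{\gamma a}{u} = \mustta{\gamma}{u}$, then $\gamma \not\models D_{\{\dom(a),u\}}(\dom(a) \interferes u)$, so there is a witness $\delta$ with $\gamma \unw_u \delta$, $\gamma \unw_{\dom(a)} \delta$, and $\delta \models {\dom(a) \noninterferes u}$; by DLR$'$ (Proposition~\ref{prop:LRprime}) we conclude $\gamma a \unw_u \gamma$. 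In the case $\mustta{\alpha}{u} = \epsilon$, iterating this stripping gives $\alpha \unw_u \epsilon$ and symmetrically $\beta \unw_u \epsilon$. Otherwise the value is a triple $(\mustta{\alpha'}{u}, \mustta{\alpha'}{\dom(a)}, a)$, where $\alpha'a$ is the largest prefix of $\alpha$ at which the update for $u$ was non-trivial; stripping the actions past this prefix yields $\alpha \unw_u \alpha' a$, and similarly $\beta \unw_u \beta' a$ for some decomposition of $\beta$. Equating the triples, the induction hypothesis applied to $(\alpha', \beta')$ gives $\alpha' \unw_u \beta'$ and $\alpha' \unw_{\dom(a)} \beta'$; WSC then yields $\alpha' a \unw_u \beta' a$, and transitivity closes the argument.

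The principal obstacle is the subtle mutual dependence between $\unw$ and $\musttaname$: the definition of $\musttaname$ consults the $D_G$ operator, which is interpreted using the very relations one is trying to characterize. The transitivity argument in the WSC case resolves this for the forward direction, while the backward direction requires DLR$'$ rather than plain DLR, because the action being stripped need not itself be prohibited at the actual trace, only at an $\unw$-indistinguishable witness; once the stripping step is handled this way, the remaining induction is routine.
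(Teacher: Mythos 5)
Your proposal is correct and follows essentially the same route as the paper's proof: induction on the derivation of $\alpha \unw_u \beta$ for the forward inclusion (with the reflexivity-witness argument in the DLR case and the transitivity argument showing the $D_{\{\dom(a),u\}}$ condition agrees at $\alpha$ and $\beta$ in the WSC case), and induction on combined trace length using DLR$'$ via Proposition~\ref{prop:LRprime} for the reverse inclusion. The only difference is cosmetic: you strip all trailing trivial updates at once before matching the last non-trivial action, whereas the paper peels one action at a time and recurses.
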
 

\begin{proof}
We establish the containments ${\unw_u} \subseteq {\sim^\Box_u}$ and ${\sim^\Box_u} \subseteq {\unw_u}$. 

For  ${\unw_u} \subseteq {\sim^\Box_u}$, we argue by induction on the number of steps of a derivation 
of the statement $\alpha \unw_u \beta$ using rules WSC, DLR and the rules for an equivalence relation. 
The cases for reflexivity, transitivity an symmetry are direct from the fact that $\sim^\Box_u$ has these properties. 
For the DLR case, suppose that $\alpha \models {\dom(a) \noninterferes u}$, so that we have 
$\alpha a \unw_u \alpha$. By validity of the formula $D_G(\phi) \rimp \phi$, we  have also 
$\alpha \models D_{\{\dom(a), u\}} (\dom(a) \noninterferes u)$, so by the definition of $\musttaname$ 
we have $\alpha a \sim^\Box_u \alpha$, as required. 
For the WSC case, suppose that 
$\alpha \unw_u \beta$ and $\alpha\unw_{\dom(a)}\beta$, so that $\alpha a \unw_u \beta a$ by WSC. 
By the inductive hypothesis, we have 
$\alpha \sim^\Box_u \beta$ and $\alpha\sim^\Box_{\dom(a)}\beta$, 
i.e., $\mustta \alpha u = \mustta \beta u$ and $\mustta \alpha {\dom(a)} = \mustta \beta {\dom(a)}$. 
Moreover, it also follows that 
 $\alpha \models D_{\{\dom(a), u\}} (\dom(a) \noninterferes u)$ iff 
 $\beta \models D_{\{\dom(a), u\}} (\dom(a) \noninterferes u)$, so that the 
 same case of the definition of $\musttaname$ is selected in both 
 $\mustta { \alpha a} u$ and   $\mustta { \beta a} u$. 
 It follows that $\mustta { \alpha a} u = \mustta { \beta a} u$, 
 i.e., $\alpha a\sim^\Box_{\dom(a)}\beta a$, as required. 

For  ${\sim^\Box_u} \subseteq {\unw_u}$, we argue 
by induction on the combined length of $\alpha$ and $\beta$
that $\alpha \sim^\Box_{\dom(a)}\beta $ implies $\alpha \unw_{\dom(a)}\beta $.
The base case of $\alpha = \beta = \epsilon$ is trivial by reflexivity. 
By symmetry, it suffices to consider the case of $\alpha = \alpha a$ and $\beta$, where the result
has already been established for sequences of shorter combined length.
We suppose that  $\alpha \sim^\Box_{\dom(a)}\beta $, i.e., 
$\mustta {\alpha a } u = \mustta {\beta} u$, and show that $\alpha \unw_{\dom(a)}\beta $.
We consider two cases, corresponding to the cases of the 
definition of $\musttaname$. 

In the first case, we have $\alpha \models D_{\{\dom(a), u\}} (\dom(a) \noninterferes u)$, 
so that $\mustta {\alpha} u = \mustta {\alpha a} u$, and hence $\mustta {\alpha} u = \mustta {\beta} u$. 
By induction, we have $\alpha  \unw_u \beta$. 
Since $\alpha \models D_{\{\dom(a), u\}} (\dom(a) \noninterferes u)$, 
there exists a trace $\gamma$ such that 
$\alpha \unw_u \gamma$ and $\alpha \unw_{\dom(a)} \gamma$ and 
$\gamma \models \dom(a) \noninterferes u$. The relation $\unw_u$ satisfies WSC and DLR by definition, 
so by Proposition~\ref{prop:LRprime} it also satisfies DLR$'$. 
Thus,  we obtain that $\alpha a \unw_{u} \alpha $. It now follows by transitivity that $\alpha a \unw_u\beta$, as required. 

In the second case, we have $\alpha \models D_{\{\dom(a), u\}} (\dom(a) \interferes u)$,
so that $\mustta {\alpha a} u = (\mustta \alpha u, \mustta \alpha {\dom(a)}, a)$. 
Assuming that   $\mustta {\alpha a} u = \mustta \beta u$, it follows that $\beta \neq \epsilon$, 
so we may write $\beta = \beta' b$. If 
$\beta' \models D_{\{\dom(a), u\}} (\dom(a) \noninterferes u)$, then we may switch the roles
of $\alpha a$ and $\beta' b$ and argue as above, so it suffices to consider the case where 
$\beta' \models D_{\{\dom(a), u\}} (\dom(a) \interferes u)$. In this case, 
we have $\mustta {\beta' b} u = (\mustta {\beta'} u, \mustta {\beta'} {\dom(b)}, b)$, 
so it follows from $\mustta {\alpha a} u = \mustta \beta u$ that 
$a = b$ and 
$\mustta \alpha u = \mustta {\beta'} u$ 
and $ \mustta \alpha {\dom(a)} = \mustta {\beta'} {\dom(a)}$. 
That is, we have $\alpha \unw_u \beta'$  and $\alpha \unw_{\dom(a)} \beta'$. 
By WSC, it follows that $\alpha a \unw_u \beta' a = \beta$, as required. 
\end{proof}

It is immediate from Theorem~\ref{thm:mustunwind}  that unwinding security and $\musttaname$-security 
are equivalent. Thus, these definitions lend support to each other and help to 
explain each other: the functions $\musttaname$ give an intuitive, concrete description of 
the information flows that are permitted when security is interpreted using the unwinding relations. 

Moreover, Theorem~\ref{thm:mustunwind}  shows that $\musttaname$-security provides a feasible 
general framework for security definitions. To instantiate this framework, we 
must provide a collection of equivalence relations 
$\{\sim_u\}_{u\in D}$ 
to be used 
to interpret the distributed knowledge operator in the definition.   It is reasonable 
to ask that however we do so, it should be the case that 
the resulting relations $\{\sim^\Box_u\}_{u\in D}$ are exactly the relations 
$\{\sim_u\}_{u\in D}$, 
so that the definition of $\musttaname$ is self-consistent.%
\footnote{It is worth noting that the circularity here is similar to the circularity in the 
semantics of {\em knowledge-based programs} \cite{fhmvbook}.} 
Theorem~\ref{thm:mustunwind} shows that this condition 
is in fact satisfiable. 

There remains some circularity in this justification of the definitions of unwinding security and 
$\musttaname$-security. In particular, these definitions
are both based on the assumption that the constraints WSC and DLR are the only constraints
that should be applied to the flow of information in order to satisfy the policy, and that the 
resulting relations give an acceptable notion of permitted agent knowledge. However, if a case
can be made that further constraints should be placed on the semantics, then a comparison
similar to that made above can be considered. 

The next lemma shows that $\musttaname$-security implies $\maytaname$-security. 

\begin{lemma} 
\label{mayta-cont-unwind} 
We have $\mayta \alpha u  = \mayta\beta u$ implies $\alpha \unw_u\beta$ for all $\alpha, \beta \in \Actions^*$ and all $u \in \Dom$. 
Moreover, $\musttaname$-security implies $\maytaname$-security. 
\end{lemma}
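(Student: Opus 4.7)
The plan is to prove the containment $\sim^\Diamond_u \subseteq \unw_u$ by induction on the combined length $|\alpha| + |\beta|$, rather than by showing directly that the relations $\sim^\Diamond_u$ satisfy WSC and DLR. The direct approach would fail: Example~\ref{ex:conflict-ta} already exhibits $p \sim^\Diamond_A \epsilon$ and $p \sim^\Diamond_B \epsilon$ while $pa \not\sim^\Diamond_B a$, so $\sim^\Diamond_u$ violates WSC. The induction instead exploits that any particular equality $\mayta \alpha u = \mayta \beta u$ carries enough information locally to pick out which clause of the definition applied on each side.

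For the base case $\alpha = \beta = \epsilon$ we use reflexivity of $\unw_u$. For the inductive step, assume $\mayta \alpha u = \mayta \beta u$, and by symmetry suppose $\alpha = \alpha' a$. I would split on whether $\alpha' \models \dom(a) \interferes u$. If $\alpha' \models \dom(a) \noninterferes u$, the defining clause gives $\mayta{\alpha'} u = \mayta{\beta} u$, so the inductive hypothesis yields $\alpha' \unw_u \beta$; combining with DLR (which gives $\alpha' a \unw_u \alpha'$) and transitivity finishes this case. Otherwise $\mayta{\alpha' a} u$ is a triple ending in $a$, so $\beta \neq \epsilon$; write $\beta = \beta' b$. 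If $\beta' \models \dom(b) \noninterferes u$, a symmetric argument to the previous case works (with $\alpha$ and $\beta'$ in the roles of the previous $\alpha'$ and $\beta$). The remaining subcase is that $\beta' \models \dom(b) \interferes u$ as well, so $\mayta{\beta' b} u$ is likewise a triple, and equating the two triples component-wise gives $a = b$, $\mayta{\alpha'} u = \mayta{\beta'} u$, and $\mayta{\alpha'}{\dom(a)} = \mayta{\beta'}{\dom(a)}$. Two applications of the induction hypothesis then yield $\alpha' \unw_u \beta'$ and $\alpha' \unw_{\dom(a)} \beta'$, and WSC closes the case.

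The subtle step is the triple-matching case: one needs the observation that the outermost structure of $\mayta{\alpha a} u$ determines which clause of the definition was used (triple versus non-triple), so that the case analyses on the two sides of the equation are forced to agree. Once $\sim^\Diamond_u \subseteq \unw_u$ is established, the second claim is immediate: if $M$ is $\musttaname$-secure and $\mayta \alpha u = \mayta \beta u$, then by the first part $\alpha \unw_u \beta$, hence by Theorem~\ref{thm:mustunwind} $\mustta \alpha u = \mustta \beta u$, and $\musttaname$-security delivers $\obs_u(s_0 \cdot \alpha) = \obs_u(s_0 \cdot \beta)$.
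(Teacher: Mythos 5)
Your proof is correct and follows essentially the same route as the paper's: induction on the combined length of $\alpha$ and $\beta$, with the same case split on which clause of the $\maytaname$ definition applies, closed off by DLR plus transitivity in one case and WSC in the other. The derivation of the second claim via Theorem~\ref{thm:mustunwind} is also the intended one, and your observation that $\sim^\Diamond_u$ need not itself satisfy WSC correctly explains why the trace-by-trace induction is needed.
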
 

\begin{proof} 
By induction on the combined length of $\alpha$ and $\beta$. 
The base case is trivial by reflexivity. Consider the case of
$\alpha a $ and $\beta$, and suppose $\mayta {\alpha a}  u = \mayta \beta u$.

In case  $\alpha \models {\dom(a) \noninterferes u}$, we have $\mayta {\alpha a} u = \mayta \alpha u = \mayta \beta u$, 
so by induction we obtain $\alpha \unw_u \beta$. Also, by DLR, we have $\alpha a \unw_u \alpha$. 
It follows that $\alpha a \unw_u \beta$, as required. 

Alternately, if $\alpha \models {\dom(a) \interferes u}$, 
then we have $\mayta {\alpha a} u = (\mayta {\alpha} u,  \mayta {\alpha} {\dom(a)} , a)$. 
It follows that $\beta \neq\epsilon$, so let $\beta = \beta' b$ for $b \in A$. 
In the case that $\beta' \models {\dom(b) \noninterferes u}$, we may swap the roles of $\alpha a$ and $\beta' b$ and
argue as in the previous paragraph. In case $\beta' \models {\dom(b) \interferes u}$, we have 
$\mayta {\beta' b} u = (\mayta {\beta'} u,  \mayta {\beta'} {\dom(b)} , b)$,  
and it follows that $a=b$ and $\mayta \alpha u = \mayta {\beta'} u$ and $\mayta \alpha {\dom(a)} = \mayta {\beta'} {\dom(a)}$. 
By induction, we obtain that $\alpha \unw_u \beta'$ and $\alpha \unw_{\dom(a)}\beta'$.
It now follows by WSC that $\alpha a \unw_u \beta' a = \beta$, as required.   
\end{proof}

The following example shows that $\musttaname$-security differs from $\maytaname$-security. 
\begin{example}
 As already seen in Example~\ref{ex:conflict-ta}, the system in Figure~\ref{fig:conflict} is $\maytaname$-secure. 
 But, we have that $p \unw_B \epsilon$ and $p \unw_A \epsilon$ by WSC. 
 By applying DLR, we obtain $pa \unw_B a$. 
 Using Theorem~\ref{thm:mustunwind}, we have $\mustta {pa} B = \mustta a B$, and since 
 $\obs_B(s_0 \cdot pa) \neq \obs_B(s_0 \cdot a)$, the system is not $\musttaname$-secure. 
\end{example}

\subsection{A sufficient condition for equivalence}

The outcome of the discussion so far is that we have identified two definitions of 
security of dynamic policies, one of which ($\maytaname$-security) takes a permissive
interpretation of policies, and the other $(\musttaname$-security, or equivalently, unwinding security)
takes a prohibitive interpretation. As noted above, these two interpretations
may be in conflict for some policies. It is therefore of interest to 
understand when a policy is free of this conflict. The 
following notion is useful in this regard. 

\begin{definition} 
\label{def:locality}
A policy $\interferes$ is {\em local} 
with respect to a collection of equivalence relations 
$\{\approx_u\}_{u\in D}$ on $A^*$ if, 
for all domains $u,v\in D$ and traces $\alpha, \beta\in A^*$, if  
$\alpha \approx_u \beta $ and $\alpha \approx_v \beta $ 
then $\alpha \models {u \interferes v}$ iff $\beta \models {u \interferes v}$. 
\end{definition} 

An equivalent statement is that 
$$\alpha \models D_{\{u,v\}}(u \interferes v) \lor D_{\{u,v\}}(u \noninterferes v)$$
for all traces $\alpha$ and domains $u,v$, 
when the equivalence relation used to  interpret knowledge is $\approx$-equivalence. 
That is, a pair of agents always have distributed knowledge of whether 
one may interfere with the other. 

\begin{example} 
The policy of Example~\ref{ex:conflict-example} is  not a local policy
with respect to the relations $\{\sim^\Diamond_u\}_{u\in D}$. For, 
we have $\epsilon \sim^\Diamond_A p$ and $\epsilon \sim^\Diamond_B p$, 
but $\epsilon \models A \noninterferes B$ and $p \models A \interferes B$. 
We present some examples of classes of local policies in Section~\ref{sec:specialcases}. 
\end{example} 

We could instantiate the definition of locality with respect to either the equivalence relations
$\{\sim^\Diamond_u\}_{u\in D}$ or $\{\sim^\Box_u\}_{u\in D}$ (equivalently, the unwinding relations
$\{\unw_u\}_{u\in D}$). 
In fact, these definitions prove to be equivalent. 
We first remark that, in general, the relations $\unw_u$ differ from the relations  $\sim^\Diamond_u$. In particular, 
$\alpha \unw_u \beta $ does not imply $\alpha \sim^\Diamond_u \beta$. 
\begin{example}
 In this example we proceed with the system of Figure~\ref{fig:conflict}. 
 By DLR, we have from $\epsilon \models {P \noninterferes B}$ that $p \unw_B \epsilon$ and from $\epsilon \models {P \noninterferes A}$ that $p \unw_A \epsilon$. 
 By WSC, we have $p a \unw_B a$. 
 However, the $\maytaname$ values of $B$ differ: $\mayta{p a}{B} = (\epsilon, \epsilon, a) \neq \epsilon = \mayta a B$. 
\end{example}

However, the relations $\unw_u$ and $\sim^\Diamond_u$ \emph{are} identical 
 for policies that are local with respect to the equivalence relations 
$\{\sim^\Diamond_u\}_{u\in D}$.  

\begin{lemma} 
\label{lemma:eqi-mayta-unwind}
Let $\interferes$ be a local policy with respect to $\{\sim^\Diamond_u\}_{u\in D}$.
Then  $\{\sim^\Diamond_u\}_{u\in D} =  \{ \unw_u\}_{u\in D}$. 
\end{lemma}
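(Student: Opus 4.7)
The plan is to use the characterization of the unwinding relations $\unw_u$ as the \emph{smallest} family of equivalence relations satisfying WSC and DLR. One containment is already essentially in hand, and the other reduces to checking that the permissive equivalences $\sim^\Diamond_u$ satisfy the two closure rules, where locality will supply exactly what is needed for the step rule.

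For the containment ${\sim^\Diamond_u} \subseteq {\unw_u}$, I would simply invoke Lemma~\ref{mayta-cont-unwind}: by definition, $\alpha \sim^\Diamond_u \beta$ means $\mayta \alpha u = \mayta \beta u$, and that lemma already gives $\alpha \unw_u \beta$. This direction does not use locality.

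For the reverse containment ${\unw_u} \subseteq {\sim^\Diamond_u}$, I would show that the family $\{\sim^\Diamond_u\}_{u\in D}$ is closed under the rules defining $\unw$. Each $\sim^\Diamond_u$ is an equivalence relation because it is defined via equality of $\mayta{\cdot} u$. For DLR, if $\alpha \models \dom(a) \noninterferes u$, then the defining recursion of $\maytaname$ immediately gives $\mayta{\alpha a} u = \mayta \alpha u$, i.e., $\alpha a \sim^\Diamond_u \alpha$. Since the unwinding relations form the smallest family satisfying WSC and DLR, once WSC is verified, we obtain ${\unw_u} \subseteq {\sim^\Diamond_u}$ for every $u$.

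The main step, and the only place locality is used, is WSC. Suppose $\alpha \sim^\Diamond_u \beta$ and $\alpha \sim^\Diamond_{\dom(a)} \beta$, i.e., $\mayta \alpha u = \mayta \beta u$ and $\mayta \alpha {\dom(a)} = \mayta \beta {\dom(a)}$. Locality with respect to $\{\sim^\Diamond_v\}_{v\in D}$, applied to the domain pair $(\dom(a), u)$, yields $\alpha \models \dom(a) \interferes u$ iff $\beta \models \dom(a) \interferes u$. Therefore the same clause of the recursion for $\maytaname$ is selected when computing $\mayta{\alpha a} u$ and $\mayta{\beta a} u$; in the permitted case both evaluate to $(\mayta \alpha u, \mayta \alpha {\dom(a)}, a) = (\mayta \beta u, \mayta \beta {\dom(a)}, a)$, and in the prohibited case both equal $\mayta \alpha u = \mayta \beta u$. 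Either way, $\mayta{\alpha a} u = \mayta{\beta a} u$, so $\alpha a \sim^\Diamond_u \beta a$, establishing WSC. The hard part, such as it is, is recognizing that locality is precisely the hypothesis needed to force the two cases of the $\maytaname$ recursion to align on $\alpha$ and $\beta$; without it, $\sim^\Diamond_u$ need not satisfy WSC, as witnessed by Example~\ref{ex:conflict-example}.
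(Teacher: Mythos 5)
Your proposal is correct and follows essentially the same route as the paper: the containment ${\sim^\Diamond_u} \subseteq {\unw_u}$ comes from Lemma~\ref{mayta-cont-unwind}, and the reverse containment comes from verifying that the relations $\{\sim^\Diamond_u\}_{u\in D}$ satisfy DLR (immediately from the recursion for $\maytaname$) and WSC (using locality to force the same case of the recursion for $\alpha a$ and $\beta a$), then invoking minimality of the unwinding relations. Your identification of locality as exactly the hypothesis needed for the WSC step matches the paper's argument precisely.
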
 
\begin{proof} 
Suppose 
that the policy $\interferes$ is local with respect to $\{\sim^\Diamond_u\}_{u\in D}$.  Then 
these relations satisfy WSC. For, suppose that $\mayta \alpha u = \mayta {\beta} u$ and $\mayta \alpha {\dom(a)} = \mayta {\beta} {\dom(a)}$. 
Note that, by locality, we have that   $\alpha \models {\dom(a) \interferes v}$ iff $\beta \models {\dom(a) \interferes v}$. 
It follows that the same case is chosen in the definitions of 
$\mayta {\alpha a} u$ and $\mayta {\beta a} u$, and we obtain that  $\mayta {\alpha a} u = \mayta {\beta a} u$.

It is immediate from the definition of $\maytaname$ that the relations 
$\{\sim^\Diamond_u\}_{u\in D}$ satisfy DLR. By 
the previous paragraph, these relations also 
satisfy WSC. Since $\{ \unw_u\}_{u\in D}$ are the minimal relations satisfying WSC and DLR, 
we obtain that $\{ \unw_u\}_{u\in D} \leq \{\sim^\Diamond_u\}_{u\in D}$. 
The converse also holds by Lemma~\ref{mayta-cont-unwind}.  
\end{proof} 

From this, we obtain the claimed independence of locality on the choice of equivalence relation. 

\begin{lemma} 
\label{lem:locality-equiv} 
A given policy $\interferes$ is local with respect to the associated
relations $\{\sim^\Diamond_u\}_{u\in D}$ iff it is 
local with respect to the relations $\{\sim^\Box_u\}_{u\in D}$. 
\end{lemma}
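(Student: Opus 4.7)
The plan is to prove both directions using the already-established containment between the two families of equivalence relations.

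First I recall from Lemma~\ref{mayta-cont-unwind} that $\mayta\alpha u = \mayta \beta u$ implies $\alpha \unw_u \beta$, and from Theorem~\ref{thm:mustunwind} that $\unw_u = \sim^\Box_u$. Combining these gives the pointwise containment $\sim^\Diamond_u \subseteq \sim^\Box_u$ for every $u\in \Dom$. This containment is the main engine of the argument.

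For the direction from $\sim^\Box$-locality to $\sim^\Diamond$-locality, I would simply observe that whenever $\alpha \sim^\Diamond_u \beta$ and $\alpha \sim^\Diamond_v \beta$, the containment yields $\alpha \sim^\Box_u \beta$ and $\alpha \sim^\Box_v \beta$, whence $\sim^\Box$-locality immediately gives $\alpha \models u \interferes v$ iff $\beta \models u \interferes v$. This step is essentially an immediate corollary of Lemma~\ref{mayta-cont-unwind} together with Theorem~\ref{thm:mustunwind}.

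For the converse direction, assume the policy is local with respect to $\{\sim^\Diamond_u\}_{u\in \Dom}$. By Lemma~\ref{lemma:eqi-mayta-unwind}, locality with respect to $\sim^\Diamond$ forces the collapse $\sim^\Diamond_u = \unw_u$ for each $u$, and combining with Theorem~\ref{thm:mustunwind} we get $\sim^\Diamond_u = \sim^\Box_u$. Once the two families of relations coincide, locality with respect to one is trivially locality with respect to the other.

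I do not foresee a substantive obstacle here: the lemma is really just an assembly of Lemma~\ref{mayta-cont-unwind}, Theorem~\ref{thm:mustunwind} and Lemma~\ref{lemma:eqi-mayta-unwind}, used to show that under either form of locality the two families of equivalence relations collapse. The mild subtlety is that the converse direction uses Lemma~\ref{lemma:eqi-mayta-unwind} only in its forward form (locality with respect to $\sim^\Diamond$ implies coincidence), so one has to be careful not to invoke that lemma in reverse; but that form is precisely what is needed.
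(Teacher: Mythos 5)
Your proof is correct and follows essentially the same route as the paper's: the direction from $\sim^\Diamond$-locality uses Lemma~\ref{lemma:eqi-mayta-unwind} and Theorem~\ref{thm:mustunwind} to collapse the two families of relations, and the converse direction uses the containment $\sim^\Diamond_u \subseteq \sim^\Box_u$ obtained from Lemma~\ref{mayta-cont-unwind} and Theorem~\ref{thm:mustunwind}. No gaps.
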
 

\begin{proof}
 First suppose that $\interferes$ is local with respect to $\{\sim^\Diamond_u\}_{u\in D}$. 
 Then by Lemma~\ref{lemma:eqi-mayta-unwind} and Theorem~\ref{thm:mustunwind}, we have 
 $\{ \unw_u\}_{u\in D} = \{\sim^\Diamond_u\}_{u\in D} =\{\sim^\Box_u\}_{u\in D}$. 
 Hence, $\interferes$ is local with respect to $\{\sim^\Box_u\}_{u\in D}$. 
 
 Suppose now that $\interferes$ is local with respect to $\{\sim^\Box_u\}_{u\in D}$. 
 Let $u, v \in \Dom$ and $\alpha, \beta \in \Actions^*$ with $\mayta \alpha u = \mayta \beta u$ and $\mayta \alpha v = \mayta \beta v$. 
 By Lemma~\ref{mayta-cont-unwind}, we have $\alpha \unw_u \beta$ and $\alpha \unw_v \beta$. 
 By Theorem~\ref{thm:mustunwind}, we have $\alpha \sim^\Box_u \beta$ and $\alpha \sim^\Box_v \beta$ and hence, $\alpha \models {u \interferes v}$ iff $\beta \models {u \interferes v}$. 
\end{proof}

Consequently, in the sequel we say simply that the policy is local,  
if it is local with respect to either of these relations.

The following result shows that 
for local policies, the conflict between a permissive and a prohibitive  reading of policies does not arise. 

\begin{theorem} 
\label{thm:local-must-may} 
If $\interferes$ is a local policy then $\maytaname$-security and $\musttaname$-security with 
respect to $\interferes$ are equivalent. 
\end{theorem}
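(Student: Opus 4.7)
The plan is to observe that both security notions can be written as implications from an equivalence on traces to equality of observations, and then to show that under locality the two equivalences coincide. Concretely, $\maytaname$-security says that $\alpha \sim^\Diamond_u \beta$ implies $\obs_u(s_0\cdot\alpha) = \obs_u(s_0\cdot\beta)$, while $\musttaname$-security says the same with $\sim^\Box_u$ in place of $\sim^\Diamond_u$. So it suffices to prove that, for local policies, ${\sim^\Diamond_u} = {\sim^\Box_u}$ for every domain $u$.

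One direction is already done: by Lemma~\ref{mayta-cont-unwind}, $\mayta \alpha u = \mayta \beta u$ implies $\alpha \unw_u \beta$, and by Theorem~\ref{thm:mustunwind} this is the same as $\alpha \sim^\Box_u \beta$. Hence ${\sim^\Diamond_u} \subseteq {\sim^\Box_u}$ always, and in particular $\musttaname$-security implies $\maytaname$-security (which is the easy direction of the theorem, and doesn't require locality).

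For the other direction, which is where locality is needed, I would chain together the two results that have just been established for local policies. By Lemma~\ref{lemma:eqi-mayta-unwind}, locality gives $\{\sim^\Diamond_u\}_{u\in D} = \{\unw_u\}_{u\in D}$, and by Theorem~\ref{thm:mustunwind}, $\{\unw_u\}_{u\in D} = \{\sim^\Box_u\}_{u\in D}$. Composing these two equalities yields ${\sim^\Box_u} \subseteq {\sim^\Diamond_u}$, so any pair of traces that are indistinguishable in the prohibitive sense are already indistinguishable in the permissive sense. Therefore $\maytaname$-security implies $\musttaname$-security under locality.

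Since all the genuine work has been done in Lemma~\ref{mayta-cont-unwind}, Lemma~\ref{lemma:eqi-mayta-unwind}, and Theorem~\ref{thm:mustunwind}, the proof itself is essentially bookkeeping; the main obstacle — namely pushing a permissive equivalence inside the $\mathrm{WSC}/\mathrm{DLR}$-closure used to define $\unw_u$ — was absorbed into the proof of Lemma~\ref{lemma:eqi-mayta-unwind}, where locality guarantees that $\sim^\Diamond$ itself already satisfies WSC (DLR being immediate from the definition of $\maytaname$). So the write-up is a short two-line argument that cites these three prior results.
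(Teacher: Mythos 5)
Your proof is correct, but it reaches the conclusion by a different route than the paper does. The paper's own proof works at the level of the \emph{functions}: it observes that locality makes $\alpha \models {u \interferes v}$ equivalent to $\alpha \models D_{\{u,v\}}(u \interferes v)$, so the branching conditions in the inductive definitions of $\maytaname_u$ and $\musttaname_u$ coincide and the two functions are literally identical (a short induction the paper leaves implicit). You instead work at the level of the induced \emph{equivalence relations}, establishing ${\sim^\Diamond_u} = {\sim^\Box_u}$ by chaining Lemma~\ref{lemma:eqi-mayta-unwind} with Theorem~\ref{thm:mustunwind}; this is exactly the chain of equalities the paper itself uses in the first half of the proof of Lemma~\ref{lem:locality-equiv}, so all your citations are legitimate and the bookkeeping is sound (including the remark that ${\sim^\Diamond_u} \subseteq {\sim^\Box_u}$ holds unconditionally, giving the locality-free direction). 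What your route buys is modularity --- you never have to reason about the distributed-knowledge condition or redo an induction over traces, since that work is encapsulated in the cited results; what the paper's route buys is a stronger intermediate fact (pointwise equality of the $\taname$ functions, not merely of their kernels) at the cost of an elided induction. One small point of care: the theorem's hypothesis ``local'' must be read, per the convention the paper fixes after Lemma~\ref{lem:locality-equiv}, as locality with respect to either $\{\sim^\Diamond_u\}_{u\in D}$ or $\{\sim^\Box_u\}_{u\in D}$; you implicitly use the $\sim^\Diamond$ form when invoking Lemma~\ref{lemma:eqi-mayta-unwind}, which is justified but worth stating.
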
 

\begin{proof} 
This is immediate from the fact that if the policy is local, then for all traces $\alpha$, 
we have $\alpha \models {u \interferes v}$ iff  $\alpha \models D_{\{u,v\}}(u \interferes v)$. 
This claim follows straightforwardly from the properties of distributed knowledge.  
\end{proof}

Indeed, locality  \emph{exactly} captures the condition under which the permissive and the prohibitive interpretations of a policy are equivalent. 
We need one technical condition on this statement. 
Say that a domain $u$ is \emph{inactive} when $\dom^{-1}(u) =\emptyset$, i.e., the domain has no actions that it can perform, 
so it plays only the role of an observer in the system. 
Note that policy edges of the form $\alpha \models u\interferes v$
where 
$u$ is inactive 
have no bearing on the definitions of $\maytaname$-security or unwinding-security (and hence also not on the equivalent 
$\musttaname$-security) since all references to the policy in these definitions occur
only in the form $\alpha \models u\interferes v$ with $u = \dom(a)$ for some action $a$. 
Intuitively, the interference relation $u\interferes v$ is  concerned with the ability of domain  $u$ to perform actions that 
transfer information to domain $v$, and for a domain with no actions, such transfer of information is always impossible. 
We say that policy $\interferes$ \emph{has no edges from 
inactive domains}, when for all domains $u$ that are inactive, 
we have 
$\alpha \models u \noninterferes v$ for all traces $\alpha$ and domains $v$. 
By the above observation, every policy is semantically equivalent to one that has no edges from 
inactive
domains, 
so we may assume this condition without loss of generality. Subject to this assumption, we have the following: 

\begin{theorem} \label{thm:must-may-local} 
Let $\mathcal{S} = \tuple{\Dom, \dom, \Actions}$ be a signature, and let $\interferes$ be a 
policy for this signature that has no edges from 
inactive
domains.  Suppose that for all systems $M$ for the signature $\mathcal{S}$, 
$M$ is $\maytaname$-secure with respect to $\interferes$ iff 
$M$ is $\musttaname$-secure with respect to $\interferes$. Then $\interferes$ is local. 
\end{theorem}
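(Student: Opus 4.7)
The plan is to prove the contrapositive. Assuming $\interferes$ is not local, I construct a system $M$ for the signature $\mathcal{S}$ that is $\maytaname$-secure but not $\musttaname$-secure; this produces a system failing the equivalence and contradicts the hypothesis.

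By Lemma~\ref{lem:locality-equiv} I may witness non-locality using the family $\{\sim^\Diamond_u\}_{u\in \Dom}$. So there exist $u, v \in \Dom$ and $\alpha, \beta \in \Actions^*$ with $\alpha \sim^\Diamond_u \beta$ and $\alpha \sim^\Diamond_v \beta$, where, without loss of generality, $\alpha \models u \interferes v$ while $\beta \models u \noninterferes v$. Because $\interferes$ has no edges from inactive domains, $\alpha \models u \interferes v$ forces $u$ to be active, so there is an action $a \in \Actions$ with $\dom(a) = u$.

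The central computation is to show that $\mayta{\alpha a}{v} \neq \mayta{\beta a}{v}$ whereas $\mustta{\alpha a}{v} = \mustta{\beta a}{v}$. For the inequality, the clauses in the definition of $\maytaname$ give $\mayta{\alpha a}{v} = (\mayta{\alpha}{v}, \mayta{\alpha}{u}, a)$ and $\mayta{\beta a}{v} = \mayta{\beta}{v}$. The assumptions $\alpha \sim^\Diamond_u \beta$ and $\alpha \sim^\Diamond_v \beta$ unpack to $\mayta{\alpha}{u} = \mayta{\beta}{u}$ and $\mayta{\alpha}{v} = \mayta{\beta}{v}$, so the first value may be rewritten as $(\mayta{\beta}{v}, \mayta{\beta}{u}, a)$. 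A straightforward induction on the tree structure of $\maytaname$-values then shows that a triple cannot equal its own first component, so the two values differ. For the equality, Lemma~\ref{mayta-cont-unwind} yields $\alpha \unw_u \beta$ and $\alpha \unw_v \beta$, so WSC applied with $\dom(a) = u$ gives $\alpha a \unw_v \beta a$, and Theorem~\ref{thm:mustunwind} converts this to $\mustta{\alpha a}{v} = \mustta{\beta a}{v}$.

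Finally, I take $M$ to be the unfolded automaton for $\mathcal{S}$: states $\Actions^*$, initial state $\epsilon$, transitions $\gamma \cdot b = \gamma b$, and observations $\obs_w(\gamma) = \mayta{\gamma}{w}$. By construction $M$ is $\maytaname$-secure, while $\gamma = \alpha a$, $\delta = \beta a$, $w = v$ witness that $M$ is not $\musttaname$-secure. The one subtle point I expect is recognising that non-locality already forces $\mayta{\alpha}{u} = \mayta{\beta}{u}$ and $\mayta{\alpha}{v} = \mayta{\beta}{v}$, so that the triple-versus-component argument immediately separates the two $\maytaname$-values at $\alpha a$ and $\beta a$; once this is seen, everything else is essentially bookkeeping.
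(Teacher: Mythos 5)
Your proof is correct and follows essentially the same route as the paper: argue the contrapositive, take a non-locality witness $(\alpha,\beta,u,v)$ with respect to $\{\sim^\Diamond_w\}_{w\in\Dom}$ via Lemma~\ref{lem:locality-equiv}, use an action $a$ of the (necessarily active) domain $u$ together with Lemma~\ref{mayta-cont-unwind}, WSC and Theorem~\ref{thm:mustunwind} to obtain $\alpha a \unw_v \beta a$ while $\mayta{\alpha a}{v}\neq\mayta{\beta a}{v}$, and then exhibit the unfolded system over $\Actions^*$ whose observations are determined by the $\sim^\Diamond_v$-classes. The only (immaterial) difference is that the paper uses a two-valued observation indicating whether $\gamma \sim^\Diamond_v \alpha a$, whereas you output the full $\maytaname$-value as the observation; both choices make the system trivially $\maytaname$-secure yet not $\musttaname$-secure.
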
 

\begin{proof} 
We prove the converse, that is, we show that if $\interferes$ is not local, 
then $\maytaname$-security and $\musttaname$-security with respect to 
$\interferes$ differ on some system $M$. Suppose that $\interferes$ is not local. 
By Lemma~\ref{lem:locality-equiv}, we may use the relations $\{\sim^\Diamond_u\}_{u\in D}$, 
so we have that there exist traces $\alpha, \beta\in A^*$ and domains $u,v$ 
such that $\alpha \sim^\Diamond_u \beta$ and $\alpha \sim^\Diamond_v \beta$, 
and $\alpha \models u \interferes v$ and $\beta \models u \noninterferes v$. 

By Lemma~\ref{mayta-cont-unwind}, we also have $\alpha \unw_u\beta$ and $\alpha \unw_v\beta$. 
Since $\interferes$ has no edges from 
inactive
domains, and $\alpha \models u \interferes v$, 
domain $u$ has actions.  Let $a$ be any action with $\dom(a) = u$.   Then, by condition WSC, it follows that $\alpha a \unw_v \beta a $. 
Moreover, since $\alpha \sim^\Diamond_v \beta$ and $\beta \sim^\Diamond_v \beta a$ (because $\beta \models u \noninterferes v$), 
we have that $\alpha \sim^\Diamond_v \beta a$. By $\alpha \models u \interferes v$ we have that 
$\alpha a \not \sim^\Diamond_v \alpha$, so it follows that $\alpha a \not \sim^\Diamond_v \beta a$. 

Define $M$ to be the system with states $S = A^*$, 
initial state $s_0 = \epsilon$, transitions defined by $\gamma \cdot b =  \gamma b$ for all 
$\gamma \in A^*$ and $b\in A$, and observations defined by 
$\obs_w(\gamma) = 0$ if $w\neq v$ or $\gamma \not \sim^\Diamond_v \alpha a$, 
and $\obs_w(\gamma) = 1$ otherwise, for all $\gamma \in A^*$. 
Obviously, $\obs_v(\alpha a) = 1$. On the other hand, since $\alpha a \not \sim^\Diamond_v \beta a$, 
we have that $\obs_v(\beta a) = 0$. 

Since  $\alpha a \unw_v \beta a $, it is immediate that $M$ is not 
$\musttaname$-secure with respect to $\interferes$. 
To complete the proof, we note that 
$M$ is $\maytaname$-secure with respect to $\interferes$, 
so that the two notions disagree on system $M$. 
To see this, note that any insecurity must involve domain $v$, since 
$\obs_w(\gamma) = 0$ for all $\gamma$ and $w\neq v$. 
However, on domain $v$, if $\gamma \sim^\Diamond_v \gamma'$ then 
$\gamma \not \sim^\Diamond_v \alpha a$ iff $\gamma' \not \sim^\Diamond_v \alpha a$, 
so we have  $\obs_v(\gamma) = \obs_v(\gamma')$, as required for $\maytaname$-security.  
\end{proof}

Given this result, one reasonable approach to the  possibility of conflicting interpretations
of policies is to require that the policy be local, so that we are left with a single 
notion of security that supports both the permissive and prohibitive interpretation of
policies. A potential disadvantage of this is that such a restriction results in a loss of 
expressiveness: certain policies can no longer be expressed. The following result shows
that, in fact, provided that one is interested in $\musttaname$-security, there
is no loss of expressiveness: for every policy, there is a local policy that is
equivalent with respect to $\musttaname$-security.

\begin{lemma}
\label{lem:restrict-to-local}
 For every policy $\interferes$ over signature $\langle \Dom, \dom, \Actions\rangle$, there is a 
 local policy ${\interferes'} \leq {\interferes}$ such that 
 for all systems $M$ with signature $\langle \Dom, \dom, \Actions\rangle$, 
 system $M$ is  $\musttaname$-secure \wrt $\interferes$ iff $M$ is  $\musttaname$-secure \wrt $\interferes'$. 
\end{lemma}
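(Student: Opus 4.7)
The idea is to take $\interferes'$ to be obtained from $\interferes$ by replacing each edge condition by the distributed knowledge of that condition, where the knowledge operator is interpreted using the unwinding relations for $\interferes$ itself. Concretely, let $\{\unw_u\}_{u\in \Dom}$ be the unwinding relations for $\interferes$, and define $\interferes'$ on $\Actions^*\times \Dom\times \Dom$ by
\[
\alpha \models u \interferes' v \iff \alpha \models D_{\{u,v\}}(u \interferes v),
\]
where distributed knowledge is evaluated in the Kripke structure over worlds $\Actions^*$ with equivalence relations $\{\unw_u\}_{u\in \Dom}$ and with atomic propositions interpreted by $\interferes$. Reflexivity of $\interferes'$ and the inequality $\interferes' \leq \interferes$ both follow from the reflexivity of $\interferes$ and the validity of $D_G\phi \rimp \phi$.

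The heart of the argument is to show that the unwinding relations $\{\unw'_u\}_{u\in \Dom}$ for $\interferes'$ coincide with $\{\unw_u\}_{u\in \Dom}$. Since $\interferes' \leq \interferes$, Lemma~\ref{lem:unwmon} already gives $\unw_u \subseteq \unw'_u$. For the reverse inclusion, I plan to verify that the family $\{\unw_u\}_{u\in \Dom}$ satisfies WSC and DLR with respect to the policy $\interferes'$; minimality of $\{\unw'_u\}_{u\in \Dom}$ then forces $\unw'_u \subseteq \unw_u$. The WSC condition is independent of the policy. The essential step is DLR for $\interferes'$: if $\alpha \models {\dom(a) \noninterferes' u}$ then by the definition of $\interferes'$ there is a trace $\beta$ with $\alpha \unw_u \beta$, $\alpha \unw_{\dom(a)} \beta$, and $\beta \models {\dom(a) \noninterferes u}$, whence Proposition~\ref{prop:LRprime} (DLR$'$ for the original policy $\interferes$) yields $\alpha a \unw_u \alpha$, as required.

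Given the equality $\unw'_u = \unw_u$, the remaining conclusions drop out. By Theorem~\ref{thm:mustunwind}, $\sim^\Box_u = \unw_u = \unw'_u = \sim^{\Box'}_u$, so a system $M$ is $\musttaname$-secure with respect to $\interferes$ iff it is $\musttaname$-secure with respect to $\interferes'$. For locality of $\interferes'$, suppose $\alpha \unw'_u \beta$ and $\alpha \unw'_v \beta$; equivalently, $\alpha \unw_u \beta$ and $\alpha \unw_v \beta$. Then $\alpha$ and $\beta$ have exactly the same $\unw_u \cap \unw_v$-equivalence class, and since $\alpha \models u\interferes' v$ is defined as a universal condition over that class, the two atomic statements $\alpha \models u \interferes' v$ and $\beta \models u \interferes' v$ are equivalent.

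I expect the only real obstacle to be the reverse inclusion $\unw'_u \subseteq \unw_u$ in the middle step: it is where the subtle interplay between the definition of $\interferes'$ (which quantifies over $\unw$-equivalents) and the minimality characterization of $\unw'$ must be bridged, and the key leverage is the equivalence of DLR with DLR$'$ provided by Proposition~\ref{prop:LRprime}. Once that is in hand, the remaining verifications (monotonicity, $\musttaname$-security equivalence, and locality) are short.
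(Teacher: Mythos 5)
Your proposal is correct and takes essentially the same route as the paper: the policy $\interferes'$ you define via $D_{\{u,v\}}(u\interferes v)$ interpreted over $\{\unw_u\}_{u\in\Dom}$ is literally the paper's $\interferes'$ (quantification over $[\alpha]_{\unw_u}\cap[\alpha]_{\unw_v}$), and the key step --- that $\{\unw_u\}_{u\in\Dom}$ satisfies DLR with respect to $\interferes'$ via Proposition~\ref{prop:LRprime}, so that minimality gives ${\unwprime_u}\subseteq{\unw_u}$ --- is exactly the paper's induction on derivations, just packaged through the minimality characterization. The only (harmless) divergence is at the end: the paper verifies locality of $\interferes'$ with respect to the relations $\sim^\Diamond_u$ of $\interferes'$, which costs it an extra induction showing $\mayta{\alpha}{u}=\mayta{\beta}{u}$ implies $\alpha\unwprime_u\beta$, whereas you check locality directly against the unwinding relations, which is shorter and equally legitimate given Theorem~\ref{thm:mustunwind} and Lemma~\ref{lem:locality-equiv}.
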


\begin{proof}
We write $[\alpha]_{\unwind}$ for the equivalence class of $\alpha$ with respect to an equivalence relation $\unwind$. 
In the following, the unwinding $\{ \unw_u\}_{u \in \Dom}$ refers to the policy $\interferes$.  
Define the policy $\interferes'$ by
\begin{align*}
 \alpha \models {u \interferes' v} \text{ iff } \beta \models {u \interferes v} \text{ for all } \beta \in [\alpha]_{\unw_u} \cap [\alpha]_{\unw_v} 
 \enspace. 
\end{align*}
It is clear that ${\interferes'} \leq {\interferes}$. 

Let $\{\unwprime_u\}_{u\in D}$ be the smallest unwinding satisfying WSC and DLR \wrt the policy $\interferes'$. 
We claim that ${\unw_u}  = {\unwprime_u}$ for all $u\in D$. 
By Lemma~\ref{lem:unwmon}, we have ${\unwprime_u} \subseteq {\unw_u}$ for all $u\in \Dom$.

We will show that for every $u \in \Dom$ and every $\alpha, \beta \in \Actions^*$, we have $\alpha \unwprime_u \beta$ implies $\alpha \unw_u \beta$. 
The proof is by an induction on the length of the derivation of $\alpha \unwprime_u \beta$. 
Consider a derivation of   $\alpha \unwprime_u \beta$ and suppose that the claim holds for all 
shorter derivations.  We need to show that $\alpha \unw_u \beta$. The cases
where the final step of the derivation are an application of reflexivity, symmetry or transitivity
are straightforward from the fact that $\unw_u$ is an equivalence relation. 
There are two remaining cases: 

\begin{itemize} 
\item \emph{Case 1:} $\alpha \unwprime_u \beta$ is derived using the condition DLR.
Here we have that there exists $a\in A$ such that  $\alpha = \beta a$ and $\beta \models {\dom(a) \noninterferes u}$. 
Thus, there exists a trace $\gamma$ such that $\gamma \unw_u \beta$ and $\gamma \unw_{\dom(a)} \beta$ such that $\beta \models {\dom(a) \noninterferes u}$. 
By Proposition~\ref{prop:LRprime}, we have that $\unw_u$ satisfies DLR$'$. It follows that $\alpha \unw_u \beta$.  

\item 
\emph{Case 2:} $\alpha\unwprime_u \beta$ is derived using condition WSC.
Hence we have $\alpha = \alpha' a$ and $\beta = \beta' a$ with $\alpha' \unwprime_u \beta'$ and $\alpha' \unwprime_{\dom(a)} \beta'$. 
By applying the induction hypothesis, we have $\alpha' \unw_u \beta'$ and $\alpha' \unw_{\dom(a)} \beta'$ and by the WSC condition applied to $\unw_u$, we have $\alpha' a \unw_u \beta' a$. 
\end{itemize} 
This shows that the system $M$ is $\musttaname$-secure \wrt $\interferes$ if and only if it is $\musttaname$-secure \wrt $\interferes'$. 

It remains to show that the policy $\interferes'$ is $\maytaname$-local. 
First, we will show for every $u \in \Dom$ and every $\alpha, \beta \in \Actions^*$, that 
$\mayta \alpha u = \mayta \beta u$ (\wrt $\interferes'$) implies $\alpha \unwprime_u \beta$. 
We prove this by an induction on the combined length of $\alpha$ and $\beta$. 
The base case of $\alpha = \beta = \epsilon$ is trivial by reflexivity of $\unwprime_u$. 

Assume the claim for all traces of combined length shorter than $|\alpha|+|\beta|$. 
Let $\alpha = \alpha' a$ and suppose that $\mayta \alpha u = \mayta\beta u$. 
We need to show that $\alpha' a \unwprime_u \beta$. 
\begin{itemize}
\item  
\emph{Case 1:} $\alpha' \models \dom(a) \noninterferes' u$. 
In this case, we have $\mayta {\alpha' a} u = \mayta {\alpha'} u$, so $\mayta {\alpha'} u = \mayta {\beta} u$. 
By induction hypothesis, we have $\alpha' \unwprime_u \beta$.  By DLR, we have $\alpha' a \unwprime_u \alpha'$, 
so by transitivity we obtain $\alpha' a  \unwprime_u \beta$. 
\item 
\emph{Case 2:} $\alpha' \models {\dom(a) \interferes' u}$. 
Then $\mayta {\alpha' a} u = (\mayta {\alpha'} u , \mayta {\alpha'} {\dom(a)}, a)$, 
so $\beta \neq \epsilon$ and we have 
 $\beta = \beta' b$ for some $b\in A$. 
We may assume that $\beta' \models {\dom(b) \interferes' u}$, 
since otherwise we may argue as in Case 1 for $\beta' b$ and apply symmetry. 
It follows that $\mayta {\beta' b} u = (\mayta {\beta'} u , \mayta {\beta'} {\dom(b)}, b)$, 
and we conclude that 
$a= b$ and $\mayta {\alpha'} u = \mayta {\beta'} u$ and $\mayta {\alpha'} {\dom(a)} = \mayta {\beta'} {\dom(a)}$.
By induction hypothesis we have $\alpha' \unwprime_u \beta'$ and $\alpha' \unwprime_{\dom(a)} \beta'$ and by WSC,  we obtain that $\alpha' a \unwprime_u \beta' a$. 
\end{itemize} 
The $\maytaname$-locality follows: if $\mayta \alpha u = \mayta \beta u$ and $\mayta \alpha v = \mayta \beta v$ (\wrt $\interferes'$), then we have
$\alpha \in [\beta]_{\unwprime_u} \cap [\beta]_{\unwprime_v}$ and hence $\alpha \models {u \interferes' v}$ iff $\beta \models {u \interferes' v}$. 
\end{proof}

We say that two policies $\interferes$ and $\interferes'$ for a common signature $\langle D, \dom,A\rangle$
are \emph{identical up to 
inactive
domains}, if for all domains $u,v\in D$ such that the  set of actions $\dom^{-1}(u)$ of domain $u$ is nonempty, 
we have, for all $\alpha\in A^*$, that $\alpha \models {u \interferes v}$ iff $\alpha\models {u \interferes' v}$. 
Intuitively, this says that the two policies are identical, except that they may differ on edges $\alpha\models {u \interferes' v}$
from 
inactive domains $u$. 
Since  
such a domain $u$ cannot take any action to cause information flow
to $v$, this difference cannot, in practice, be detected. The following result makes this intuition precise.  
Say that two policies  $\interferes$ and $\interferes'$ are two policies for a common signature are \emph{equivalent} with respect to a
notion of security if for all systems $M$ for the signature, $M$ is secure with respect to $\interferes$ iff $M$ is secure with respect to $\interferes'$. 

\begin{theorem} 
Suppose that $\interferes$ and $\interferes'$ are two policies for a common signature $\langle D, \dom,A\rangle$. 
Then $\interferes$ and $\interferes'$ are equivalent with respect to $\maytaname$-security iff
$\interferes$ and $\interferes'$ are identical up to 
inactive 
domains. 
\end{theorem}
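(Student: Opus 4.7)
The plan is to prove both directions of the biconditional, with the nontrivial direction handled by contraposition via an explicit distinguishing system in the spirit of the proof of Theorem~\ref{thm:must-may-local}. For the ``if'' direction, suppose the two policies are identical up to inactive domains. The recursive definition of $\maytaname$ only consults the policy through predicates of the form $\alpha \models \dom(a) \interferes u$, and $\dom(a)$ is by definition an active domain. Hence both policies agree on every such predicate, and a straightforward induction on $|\alpha|$ shows that $\mayta \alpha u$ coincides under $\interferes$ and $\interferes'$, for every domain $u$ and every trace $\alpha$. Consequently the two induced $\maytaname$-security notions coincide on every system.

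For the ``only if'' direction I argue by contraposition. Suppose the policies are not identical up to inactive domains: then there exist domains $u, v$ with $u$ active, and a trace $\alpha$, such that, WLOG after possibly swapping the roles of $\interferes$ and $\interferes'$, $\alpha \models u \interferes v$ while $\alpha \not\models u \interferes' v$. Fix any action $a$ with $\dom(a) = u$ and define a system $M$ with state set $\Actions^*$, initial state $\epsilon$, transitions $\gamma \cdot b = \gamma b$, and observation function
\[ \obs_w(\gamma) = \begin{cases} 1 & \text{if } w = v \text{ and } \mayta \gamma v = \mayta{\alpha a}v \text{ computed w.r.t. } \interferes, \\ 0 & \text{otherwise.} \end{cases} \]

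Security of $M$ with respect to $\interferes$ is then automatic: the observation is constant for $w \neq v$, while $\obs_v(\gamma)$ is by construction a function of $\mayta \gamma v$ computed under $\interferes$, so equal $\maytaname$-values force equal observations. Failure of security with respect to $\interferes'$ is witnessed by the pair $\alpha, \alpha a$: since $\alpha \not\models u \interferes' v$, the second clause of the definition of $\maytaname$ gives $\mayta{\alpha a}v = \mayta \alpha v$ under $\interferes'$; yet $\obs_v(\alpha a) = 1$ trivially, whereas $\obs_v(\alpha) = 0$ because $\alpha \models u \interferes v$ under $\interferes$ forces $\mayta{\alpha a}v = (\mayta \alpha v, \mayta \alpha u, a)$, which is strictly larger than $\mayta \alpha v$ as a finite term and hence distinct from it. The main obstacle, foreshadowed in the proof of Theorem~\ref{thm:must-may-local}, is designing a single observation function that is simultaneously secure under the ``more permissive'' policy and insecure under the ``less permissive'' one; defining $\obs_v$ directly in terms of the $\maytaname$-value under $\interferes$ neatly resolves this tension, leaving only the routine structural-size argument that $\mayta \alpha v \neq (\mayta \alpha v, \mayta \alpha u, a)$.
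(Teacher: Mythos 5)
Your proposal is correct and follows essentially the same route as the paper: the same induction for the ``if'' direction, and for the ``only if'' direction the same contrapositive construction of a system on $\Actions^*$ whose observation for $v$ is a function of the $\maytaname$-value under $\interferes$, witnessed by the pair $\alpha$, $\alpha a$. The only (immaterial) difference is that the paper anchors the observation to $\mayta{\alpha}{v}$ rather than $\mayta{\alpha a}{v}$, which merely swaps which of the two traces receives observation $1$.
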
 

\begin{proof} 
In this proof, $\taname^{\Diamond,\interferes}$ and $\taname^{\Diamond,\interferes'}$ denote the $\taname^{\Diamond}$ function that refers to $\interferes$ and $\interferes'$, respectively. 
Suppose first that $\interferes$ and $\interferes'$ are identical up to 
inactive 
domains. 
Then a straightforward induction on the length of $\alpha\in A^*$ shows that for all $u\in D$, we have 
$\taname^{\Diamond,\interferes}_u (\alpha) = \taname^{\Diamond,\interferes'}_u (\alpha)$. 
(Note that in the inductive case for $\mayta {\alpha a} u$, we have that $a \in \dom^{-1}( \dom(a)) \neq \emptyset$, 
so $\dom(a) \interferes u$ iff $\dom(a) \interferes' u$.) 
It follows that $\interferes$ and $\interferes'$ are equivalent with respect to $\maytaname$-security. 

Conversely, suppose that $\interferes$ and $\interferes'$ are not identical up to 
inactive 
domains. 
We show that they are not equivalent with respect to $\maytaname$-security.  
We have that there exists $\alpha \in A^*$ and domains $u,v$ with $\dom^{-1}(u)$ nonempty, 
such that (without loss of generality) $\alpha \models {u \interferes v}$  but not $\alpha\models {u \interferes' v}$. 
Let $a$ be an action with $\dom(a) = u$. Then 
$\taname^{\Diamond,\interferes}_v (\alpha a ) \neq  \taname^{\Diamond,\interferes}_v (\alpha)$
but 
$\taname^{\Diamond,\interferes'}_v (\alpha a ) =  \taname^{\Diamond,\interferes'}_v (\alpha)$. 
Define the system $M$ with states $S= A^*$, transitions $\beta\cdot b = \beta b$ for all $\beta\in A^*$ and $b\in A$, 
and observations given by 
$\obs_w(\beta) = 0$ if either $w \neq v$ or $\taname^{\Diamond,\interferes}_w (\beta )\neq  \taname^{\Diamond,\interferes}_w (\alpha)$, 
otherwise $\obs_w(\beta) = 1$. 
Then, by construction, $M$ is $ \maytaname$-secure with respect to $\interferes$. 
But we have $\obs_v(\alpha a) = 0$ and $\obs_v(\alpha) = 1$,  so 
$M$ is  not $ \maytaname$-secure with respect to $\interferes'$. 
Thus, these policies disagree on $M$. 
\end{proof}

In particular, it follows that if $\interferes'$ is a nonlocal policy that has no edges from 
inactive 
domains 
(so that the reason for nonlocality is not trivial, in the sense that it must involve edges from a
domain with a nonempty set of actions), then there is no local policy that is equivalent 
with respect to $\maytaname$-security.  Thus, $\maytaname$-security is inherently a 
more expressive notion than $\musttaname$-security.

\subsection{Special Cases}
\label{sec:specialcases}

 To judge the adequacy of a definition of security with respect to 
dynamic policies, it is useful to 
verify that the definition gives the expected answer in 
 a number of simple scenarios where
we have clear intuitions about the desired behaviour of the definition. 
The following provide some examples. 

\subsubsection{Static Policies:} 

We say that a 
dynamic 
policy 
$\interferes \subseteq \Dom\times A^*\times \Dom$ 
is \emph{static}, if for all $\alpha\in A^*$ and $u,v\in \Dom$, we have 
$\alpha \models {u \interferes v}$ iff $\epsilon \models {u \interferes v}$. 
That is, the policy never changes from its initial state. 
A static policy $\interferes$ may equivalently be presented as the 
relation 
$\pi_{1,3}(\interferes) \subseteq \Dom \times \Dom$, the projection 
of the dynamic policy to the first and third components, 
which is 
in 
precisely the format for classical intransitive noninterference 
policies \cite{Goguen1982}. 

Static policies are a special case of dynamic policies, that have been subject to 
a significant amount of study \cite{HY87,Roscoe1999,rushby92,Meyden15}, so it is reasonable to expect that 
any definition for dynamic policies should reduce to an accepted definition 
when applied to the static case. The following result shows that this is the case. 

\begin{theorem} 
If $\interferes$ is a static policy, then the following 
are equivalent: 
\begin{itemize} 
\item $M$ is $\maytaname$-secure with respect to $\interferes$,  
\item $M$ is $\musttaname$-secure with respect to $\interferes$,  
\item $M$ is TA-secure with respect to $\pi_{1,3}(\interferes)$. 
\end{itemize}  
\end{theorem}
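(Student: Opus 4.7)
The plan is to reduce this theorem to results already established in the paper plus a direct comparison between the inductive definitions of $\maytaname$ and $\taname$. The key observation is that when $\interferes$ is static, the condition $\alpha \models \dom(a)\interferes u$ appearing in the definition of $\maytaname$ is independent of $\alpha$ and reduces to the static atom $\dom(a)\interferes u$ from $\pi_{1,3}(\interferes)$ used in the definition of $\taname$.

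First I would observe that every static policy is local. Indeed, since the truth of $\gamma\models u\interferes v$ does not depend on the trace $\gamma$, the biconditional in Definition~\ref{def:locality} holds vacuously regardless of which equivalence relations are used. By Theorem~\ref{thm:local-must-may}, this immediately yields the equivalence of $\maytaname$-security and $\musttaname$-security with respect to $\interferes$.

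Next I would establish the equivalence of $\maytaname$-security with respect to $\interferes$ and TA-security with respect to $\pi_{1,3}(\interferes)$ by proving the stronger pointwise identity $\mayta{\alpha}{u} = \ta{\alpha}{u}$ for all $u\in \Dom$ and $\alpha \in \Actions^*$, where on the right $\taname$ is instantiated with $\pi_{1,3}(\interferes)$. The proof is a straightforward induction on the length of $\alpha$. The base case $\alpha = \epsilon$ is immediate. In the inductive step for $\alpha a$, the case split in the definition of $\mayta{\alpha a}{u}$ turns on whether $\alpha \models \dom(a)\interferes u$; by staticity, this is equivalent to $\dom(a) \interferes u$ in $\pi_{1,3}(\interferes)$, which is exactly the case split in $\ta{\alpha a}{u}$. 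Applying the induction hypothesis to the recursive calls $\mayta{\alpha}{u}$ and $\mayta{\alpha}{\dom(a)}$ then yields equality in each branch. Given this identity, the equivalence of the two security notions follows directly from their definitions, both of which compare observations at traces with equal function values.

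There is no real obstacle here: the work of handling the conflict between permissive and prohibitive readings has already been done in Theorem~\ref{thm:local-must-may}, and the match between $\maytaname$ and $\taname$ in the static case is forced by the form of the definitions. The only minor care required is to keep the notational distinction between the dynamic policy $\interferes$ (on which $\maytaname$ and $\musttaname$ are defined) and its projection $\pi_{1,3}(\interferes)$ (on which $\taname$ is defined), and to note that staticity makes the two agree at every trace.
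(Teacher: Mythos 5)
Your proposal is correct and follows essentially the same route as the paper, whose proof is a one-liner observing that for static policies $\alpha \models u \interferes v$ iff $\alpha \models D_{\{u,v\}}(u\interferes v)$ iff $(u,v)\in\pi_{1,3}(\interferes)$, so that all three case splits coincide; your detour through locality and Theorem~\ref{thm:local-must-may} for the $\musttaname$ part just packages the middle equivalence of that chain, and your induction spells out what the paper calls ``immediate from the definitions.''
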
 

\begin{proof} 
This is immediate from the definitions, together with the fact that for static policies $\interferes$, we have 
$\alpha \models u \interferes v$ iff $\alpha \models D_{\{u,v\}} (u\interferes v) $ iff 
$(u,v) \in \pi_{1,3}(\interferes)$. 
\end{proof} 

This result shows that $\maytaname$-security and $\musttaname$-security agree and 
collapse to the standard notion of $\taname$-security when restricted
to static policies. This provides support for these definitions. 

\subsubsection{Globally known policies:}

Another setting 
that can serve as a useful test case for the adequacy of 
definitions 
of security for dynamic policies 
is a group of agents subject to an information flow policy that is set by a 
policy authority $P$, and in which all agents are permitted to know the 
policy 
state
at all times. 

Formally, these are dynamic policies in which the set of domains contains $P$, 
and we have the following: 
\begin{itemize} 
\item[(GK1)] 
For all traces $\alpha$ and domains $u$,  we have that 
$\alpha \models {P \interferes u}$. That is, domain $P$, which controls
the policy state, is always permitted to transmit information to $u$. 
This ensures that $u$ is always able to know the policy state. 
 
\item[(GK2)] For all traces $\alpha,\beta$ and domains $u,v$,  we have that 
$\alpha |P = \beta |P$ implies that 
$\alpha \models {u \interferes v}$ iff  $\beta \models {u \interferes v}$.  That is, the policy setting depends only on the 
past actions of domain $P$. 
\end{itemize} 
We say that a dynamic policy $\interferes$ satisfying these conditions 
is {\em globally known}. 

\begin{theorem} 
\label{thm:globally-known-implies-local}
If the policy $\interferes$ is globally known, then it is local. 
\end{theorem}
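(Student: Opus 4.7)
My plan is to show locality of $\interferes$ with respect to the family $\{\sim^\Diamond_u\}_{u\in D}$; by Lemma~\ref{lem:locality-equiv}, this gives locality in the sense of Definition~\ref{def:locality}. The route is to reduce locality to a single auxiliary statement: for every domain $w \in D$ and all traces $\alpha, \beta \in \Actions^*$, if $\mayta{\alpha}{w} = \mayta{\beta}{w}$ then $\alpha | P = \beta | P$. Once this is in hand, locality follows immediately, since $\alpha \sim^\Diamond_u \beta$ forces $\alpha | P = \beta | P$, and then GK2 yields $\alpha \models {u \interferes v}$ iff $\beta \models {u \interferes v}$ for every $v$, so the locality condition holds even without appealing to $\alpha \sim^\Diamond_v \beta$.

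I would prove the auxiliary statement by induction on the combined length of $\alpha$ and $\beta$. The key leverage is GK1: whenever $a$ is an action with $\dom(a) = P$, we have $\gamma \models {\dom(a) \interferes w}$ at every prefix $\gamma$ and every $w$, so the first clause of the definition of $\maytaname$ always fires for $P$-actions, producing a triple whose third component is $a$. In the inductive step, consider the final action of $\alpha$ (by symmetry the same argument handles $\beta$). If $\alpha = \alpha' a$ with $\alpha' \models {\dom(a) \interferes w}$, then $\mayta{\alpha}{w}$ is a triple, forcing $\beta$ to be nonempty and to end in an action that likewise fires the first clause and matches $a$, with matching interior $\maytaname$ values to which the induction hypothesis applies; the resulting $\alpha' | P = \beta' | P$ then gives $\alpha | P = \beta | P$, regardless of whether $\dom(a) = P$ (in which case the same $a$ is appended to both projections) or $\dom(a) \neq P$ (in which case neither projection changes). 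If instead $\alpha' \models {\dom(a) \noninterferes w}$, then by GK1 we must have $\dom(a) \neq P$, so stripping $a$ preserves $\alpha | P$; the equation $\mayta{\alpha'}{w} = \mayta{\beta}{w}$ then lets me apply the induction hypothesis to the shorter pair $(\alpha', \beta)$.

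The main delicacy lies in the case split where the last action of $\alpha$ fires the first clause while no action of $\beta$ does, or vice versa. Here it is essential to observe that $\mayta{\alpha}{w}$ being a nontrivial triple precludes $\mayta{\beta}{w}$ from reducing to $\epsilon$ or to a triple with a different outermost action, and it is equally essential to invoke GK1 when stripping actions in the second clause, so that no $P$-action is silently discarded and the induction preserves the $P$-projection on both sides. Modulo this careful bookkeeping, the argument is a routine induction, and the conclusion that $\interferes$ is local is then a one-line consequence of GK2.
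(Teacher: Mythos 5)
Your proof is correct, and it reaches the conclusion by a route that differs from the paper's in its choice of carrier relation and induction, though the underlying idea is the same. Both arguments reduce the theorem to the observation that the relevant indistinguishability relation preserves the projection $\alpha|P$, after which GK2 immediately yields agreement on the policy. The paper proves the auxiliary claim that $\alpha \unw_u \beta$ implies $\alpha|P = \beta|P$, by induction on the length of the \emph{derivation} of $\alpha \unw_u \beta$ from DLR, WSC and the equivalence-relation closure rules; this gives locality with respect to $\{\sim^\Box_u\}_{u\in \Dom}$. You instead prove that $\mayta{\alpha}{w} = \mayta{\beta}{w}$ implies $\alpha|P = \beta|P$, by induction on combined trace length over the explicit recursion defining $\maytaname$, which gives locality with respect to $\{\sim^\Diamond_u\}_{u\in \Dom}$; by Lemma~\ref{lem:locality-equiv} (or simply the paper's convention that ``local'' means local with respect to either family) this is equivalent. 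Since ${\sim^\Diamond_u} \subseteq {\unw_u}$ by Lemma~\ref{mayta-cont-unwind}, your auxiliary claim is formally weaker than the paper's, but it is exactly what is needed, and your length induction is arguably more elementary: it avoids reasoning about the symmetry and transitivity rules of the unwinding derivation entirely, at the cost of the usual two-sided case analysis on which clause of the $\maytaname$ recursion fires. Both proofs hinge on the same two uses of GK1 that you correctly identify as the delicate points: the first clause always fires for a $P$-action (so $P$-actions are never silently absorbed), and any absorbed action is necessarily a non-$P$-action, so stripping it preserves the $P$-projection.
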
 

\begin{proof} 
Suppose that the policy $\interferes$ is globally known. 
We need to show that for all $u,v\in D$ and $\alpha, \beta \in A^*$, 
we have $\alpha \unw_u\beta$ and $\alpha  \unw_v\beta$ implies 
$\alpha \models {u \interferes v}$ iff  $\beta \models {u \interferes v}$.  
We claim that, in fact, $\alpha  \unw_u\beta$ implies $\alpha |P = \beta |P$, 
from which the desired result follows using GK2. 

The proof of the claim is by induction on the length of the derivation of $\alpha  \unw_u\beta$. 
The base case of a step in which $\alpha  \unw_u\beta$ is reflexivity by reflexivity is trivial, 
and the cases of steps using symmetry or transitivity are also straightforward. 

Consider a step using DLR. Then $\alpha = \beta a$ and we have $\alpha \models {\dom(a) \interferes u}$. 
By GK1, we have $\dom(a) \neq P$. Thus $\alpha |P = \alpha a |P = \beta | P$, as required. 

Finally, consider a step using WSC. Then $\alpha = \alpha' a$ and $\beta = \beta' a$ for some
action $a$, and we have $\alpha' \unw_u \beta'$ and $\alpha' \unw_{\dom(a)} \beta'$. 
\From  $\alpha' \unw_u \beta'$, we obtain by induction that $\alpha |P = \beta|P$. 
It follows that $\alpha|P = \alpha' a |P = \beta' a|P = \beta |P$, as required. 
\end{proof}

Since by Theorem~\ref{thm:local-must-may}, for local policies the definitions of 
$\maytaname$-security and  $\musttaname$-security are
equivalent, these definitions are equivalent for all globally known policies, 
so these definitions lend mutual support to each other in this case.

\subsubsection{Locally known policies:} 

A slightly more general case in which the notions of 
$\maytaname$-security and  $\musttaname$-security coincide
are  policies in which agents
are always aware of their incoming and/or outgoing policy edges. 
Formally, define say that the policy $\interferes$ is 
\emph{locally known to the sender} (with respect to equivalence relations $\{\unw_u\}_{u \in \Dom} $) 
if we have, for all $u, v\in \Dom$, that  
$\alpha \unw_u \beta$ implies 
$\alpha \models {u \interferes v}$  iff $\beta \models {u \interferes v}$. 
Similarly, say that 
the policy $\interferes$ is 
\emph{locally known to the receiver} if we have, for all $u, v\in \Dom$, that  
$\alpha \unw_u \beta$ implies 
$\alpha \models {v \interferes u}$  iff $\beta \models {v \interferes u}$. 
It is easily seen that if the policy is locally known to the sender and/or the receiver
(with respect to either $\sim^{\Box}$ or $\sim^{\Diamond}$), 
then it is local. Thus, this again is  a case where the 
notions of $\maytaname$-security  and $\musttaname$-security are equivalent.

\subsection{Related Work using Automaton Semantics} \label{sec:related:aut}

To make the case for our definitions above, we now compare 
them 
to related 
prior work that is also stated in an automaton based semantics. 
(We defer discussion of related programming language-based work to Section~\ref{sec:relatedwork}.)  
To the best of our knowledge, the only other 
semantics for dynamic intransitive noninterference 
policies in 
automaton-based systems are those of Leslie~\cite{Leslie2006} and Eggert et. al.~\cite{Eggert2013}. 
Both works deal with policy enhanced systems. 
Recall that in this model, 
we write $s \models {u \interferes v}$ iff information flow from $u$ to $v$ is allowed in the state~$s$.

Leslie's definition of intransitive noninterference is a direct adaption of Rushby's ipurge function~\cite{rushby92} to the dynamic setting. 
Similar to Rushby's ipurge function, 
Leslie's purge function is based on a ``source" function. 
Formally, 
the dynamic version of this function is defined 
for every $u \in \Dom$ and  $s \in \States$
by $\dsrc{\epsilon}{u}{s} = \{u \}$
and for $\alpha \in \Actions^*$ and~$a \in \Actions$ by 
$ \dsrc{a \alpha}{u}{s} = \dsrc{\alpha}{u}{s \cdot a} \cup \{ \dom(a) \} $
 if there is some $v \in \dsrc{\alpha}{u}{s \cdot a}$ with $s \models {\dom(a) \interferes v}$, 
and else
$
  \dsrc{a \alpha}{u}{s} = \dsrc{\alpha}{u}{s \cdot a}
$. 
%
%
Intuitively, the set $\dsrc{a\alpha}{u}{s}$ is the set of all agents, to which the information that the action $a$ has been performed, is transmitted by the sequence of actions $\alpha$. 

Leslie's purge function for intransitive noninterference is inductively defined 
for every $u \in \Dom$, $s \in \States$, $a \in \Actions$, and~$\alpha \in \Actions^*$ by
$\lesliepurge \epsilon u s  = \epsilon$ and 
\begin{align*}
 \lesliepurge {a \alpha} u s =  a \; \lesliepurge \alpha u {s \cdot a} 
\end{align*}
 if there is some $v \in \dsrc{a \alpha} u s$ with $s \models {\dom(a) \interferes v}$, 
and else
$
 \lesliepurge {a \alpha} u s = \lesliepurge \alpha u {s \cdot a}
$. 
%
%
%
This function removes exactly those actions from a trace that are from agents where interference to $u$ is forbidden by the policy. 
%
The corresponding security definition is:
\begin{definition}
 A system $M$ is $\lpurgename$-secure with respect to a policy $\interferes$ if for all domains $u$ and sequences of actions $\alpha$ holds:  $\obs_u(s_0 \cdot \lesliepurge \alpha u {s_0}) = \obs_u(s_0 \cdot \alpha)$. 
\end{definition}

Our notion of $\maytaname$-security is incomparable to $\lpurgename$-security, in the sense that neither of them implies the other. 
One the one hand, on systems with a static policy, $\lpurgename$-security is equivalent to Rushby's IP-security and hence,  
as shown in~\cite{Meyden15}, strictly weaker than \tasecurity, which is the equivalent to $\maytaname$-security on such systems. 
On the other hand, the following example presents a $\maytaname$-secure, but not $\lpurgename$-secure system. 
%
%
%

\vspace*{-10pt} 
\begin{example}
We consider again the system in Figure~\ref{fig:conflict}. 
As shown in Example~\ref{ex:conflict-ta}, this system is $\maytaname$-secure. 
%
%
However the system is not $\lpurgename$-secure, since we have
$ \obs_B(s_0 \cdot \lesliepurge{pa}{B}{s_0}) 
   = \obs_B(s_0 \cdot \lesliepurge{a}{B}{s_1}) 
   = \obs_B(s_0 \cdot a)
   = 0 
   \neq 1 
   = \obs_B(s_0 \cdot pa)
$. 
\end{example}

\vspace*{-10pt} 

%

Similar to $\maytaname$-security, ${\tt Lpurge}$-security is not monotonic with respect to the restrictiveness order on policies.

 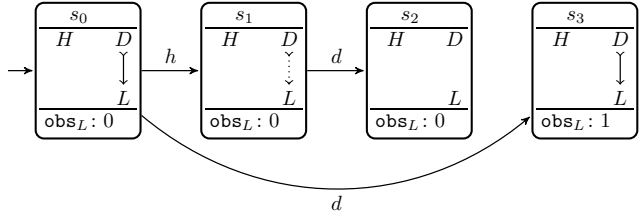
\begin{figure}
 \centering
 \scalebox{0.78}{
 \begin{tikzpicture}[tikzglobal,node distance=1cm]
 
    \newcommand{\agents}{ 
      \node[agent] (H) [left] {$H$};
      \node[agent] (D) [right of=H] {$D$};
      \node[agent] (L) [below of=D] {$L$};
      }

 \node[initial,systemstate] (s0) {
   $s_0$ 
   \\
   \hline
   \agents
   \\
   \hline
   $\obs_L \colon 0$ \\
  };
   \path[policy] (D) edge (L);
   
 \node[systemstate] (s1) [right=of s0] {
 $s_1$ 
 \\
 \hline
  \agents
  \\
   \hline	
   $\obs_L \colon 0$ \\
 };
 \path[policy,dotted] (D) edge (L);
   
 \node[systemstate] (s2) [right=of s1] {
 $s_2$
 \\
 \hline
  \agents
  \\
   \hline
   $\obs_L \colon 0$ \\
 };

  \node[systemstate] (s3) [right=of s2] {
 $s_3$
 \\
 \hline
  \agents
  \\
   \hline
   $\obs_L \colon 1$ \\
 };
   \path[policy] (D) edge (L);

 \path (s0) edge node {$h$} (s1) 
 (s0) edge [bend right=40] node[below] {$d$} (s3)
 (s1)  edge node {$d$} (s2)
 ;
\end{tikzpicture}
}
\caption{Non-monotony of \lpurgename-security}
\label{fig:unintuitive-property-leslie-security}
\end{figure}

\begin{example}
\label{example:counterintuitive-leslie}
The system in Figure~\ref{fig:unintuitive-property-leslie-security} has two different configurations of the dynamic policy indicated by the dotted policy edge in state $s_1$: in one configuration there is an edge from $D$ to $L$ and in the other, this edge is 
absent. 
First, consider the configuration where this edge exists. 
With respect to \lpurgename-security, this system is insecure, since we have 
$\lesliepurge{hd}{L}{s_0} = \lesliepurge{d}{L}{s_1} = d$, 
but $\obs_L(s_0 \cdot hd) = 0 \neq 1 = \obs_L(s_0 \cdot d)$. 

However, if we remove the edge from $D$ to $L$ in state $s_1$, this system becomes \lpurgename-secure: 
all traces starting with $h$ will be purged to $\epsilon$ and all traces starting with $d$ have a purge value different from $\epsilon$. 
%
\end{example}

Our notion of $\musttaname$ is strictly stronger than $\lpurgename$-security. 
That these two are indeed different follows from the same argument as above for systems with static policies. 


\begin{lemma} \label{must-imp-Lpurge} 
 Every $\musttaname$-secure system is ${\tt Lpurge}$-secure.
\end{lemma}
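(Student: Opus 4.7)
The plan is to derive $\lpurgename$-security from $\musttaname$-security by proving, for every $\alpha \in \Actions^*$ and $u \in \Dom$, that $\lesliepurge{\alpha}{u}{s_0} \unw_u \alpha$. By Theorem~\ref{thm:mustunwind} this is equivalent to $\mustta{\lesliepurge{\alpha}{u}{s_0}}{u} = \mustta{\alpha}{u}$, so $\musttaname$-security immediately yields the observation equality in the definition of $\lpurgename$-security. I would strengthen the statement to the claim that $\gamma \cdot \lesliepurge{\alpha}{u}{s_0 \cdot \gamma} \unw_u \gamma\alpha$ for all $\gamma, \alpha \in \Actions^*$ and $u \in \Dom$, and prove it by induction on $|\alpha|$.

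The base case $\alpha = \epsilon$ is immediate by reflexivity. In the inductive step with $\alpha = a\alpha'$, the case where $\lpurgename$ keeps $a$ follows by applying the induction hypothesis with $\gamma a$ in place of $\gamma$. The essential work is the case where $a$ is dropped. Writing $\sigma = \lesliepurge{\alpha'}{u}{s_0 \cdot \gamma a}$, the induction hypothesis gives $\gamma a \sigma \unw_u \gamma a \alpha'$, so it suffices to show $\gamma \sigma \unw_u \gamma a \sigma$. Since $a$ is dropped, no $v \in \dsrc{a\alpha'}{u}{s_0 \cdot \gamma}$ satisfies $\gamma \models \dom(a) \interferes v$; in particular, $\dom(a)$ itself cannot have been added to $\dsrc{a\alpha'}{u}{s_0 \cdot \gamma}$, so this set collapses to $\dsrc{\alpha'}{u}{s_0 \cdot \gamma a}$, and we obtain $\gamma \models \dom(a) \noninterferes v$ for every $v$ in it. A straightforward unfolding of the backward recursion defining $\dsrc$ shows that $\{u\} \cup \{\dom(b) : b \in \sigma\} \subseteq \dsrc{\alpha'}{u}{s_0 \cdot \gamma a}$, since each action $b$ kept by $\lpurgename$ has its domain added by the same recursive step that certifies $b$'s preservation. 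Writing $W$ for the set $\{u\} \cup \{\dom(b) : b \in \sigma\}$, DLR then yields $\gamma a \unw_w \gamma$ for every $w \in W$.

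To propagate this equivalence across $\sigma = b_1 \ldots b_k$, I next run an inner induction on $j$ establishing that $\gamma a b_1 \ldots b_j \unw_w \gamma b_1 \ldots b_j$ for every $w \in W$ and every $0 \leq j \leq k$. The base $j = 0$ is the DLR consequence above, and the step is a single application of WSC whose two premises $\alpha_1 \unw_w \beta_1$ and $\alpha_1 \unw_{\dom(b_{j+1})} \beta_1$ (with $\alpha_1 = \gamma a b_1 \ldots b_j$ and $\beta_1 = \gamma b_1 \ldots b_j$) are both instances of the inner hypothesis, precisely because $\dom(b_{j+1}) \in W$. Taking $w = u$ and $j = k$ delivers $\gamma a \sigma \unw_u \gamma \sigma$, closing the outer drop case. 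I expect the main obstacle to be identifying this strengthening of the inner invariant: a plain induction carrying only $\gamma a \tau \unw_u \gamma \tau$ cannot be sustained through WSC, which demands the equivalence simultaneously at the domain of the next appended action, and the dropping condition supplies exactly this once one observes that $\dsrc{\alpha'}{u}{s_0 \cdot \gamma a}$ already contains every domain of the kept actions in~$\sigma$.
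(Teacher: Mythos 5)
Your proof is correct, and it reaches the same key insight as the paper's proof by a recognizably different decomposition. The paper argues by contraposition and runs a single \emph{forward} induction over prefixes $\beta$ of $\alpha$, carrying as its invariant that $\beta \unw_v \gamma$ holds \emph{simultaneously} for every $v \in \dsrc{\beta'}{u}{s_0\cdot\beta}$, where $\gamma$ is the purged counterpart of $\beta$; the WSC step of that induction is exactly your observation that the domain of a kept action already lies in the source set. You instead follow the recursive structure of $\lpurgename$ itself (induction on the suffix), reduce each dropped action to the single obligation $\gamma\sigma \unw_u \gamma a \sigma$, and discharge it with an inner induction that pushes the DLR-seeded equivalence across the kept tail --- sustained by precisely the same fact, namely that $\{u\}\cup\{\dom(b) : b\in\sigma\}$ is contained in $\dsrc{a\alpha'}{u}{s_0\cdot\gamma}$ and the dropping condition gives $\gamma \models \dom(a)\noninterferes v$ for all $v$ in that set. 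Your version buys a direct (non-contrapositive) statement, $\lesliepurge{\alpha}{u}{s_0} \unw_u \alpha$, and makes very explicit why the source set is the right collection of domains to ``travel together'' through WSC; the paper's version buys a single induction with one uniform invariant in place of your nested pair. One small point worth spelling out if you write this up: the keep/drop test in $\lpurgename$ quantifies over $\dsrc{a\alpha'}{u}{s}$ while the source-set recursion quantifies over $\dsrc{\alpha'}{u}{s\cdot a}$, and the reduction of the former to the latter (so that dropping really does entail that $\dom(a)$ is not added and the set collapses) uses reflexivity of the policy; you assert this collapse correctly but it deserves the one-line justification.
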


\begin{proof*} 
 We show this lemma by contraposition. 
 Suppose that the system is not $\lpurgename$-secure. 
 Then there are $u \in \Dom$ and $\alpha \in \Actions^*$ with 
 \begin{align*}
  \obs_u(s_0 \cdot \alpha) \neq \obs_u(s_0 \cdot \lesliepurge \alpha u {s_0})
  \enspace. 
 \end{align*}
	We will show the following claim: 
	For every $\beta, \beta' \in \Actions^*$ with $\alpha = \beta \beta'$ and for every $\gamma \in \Actions^*$ with $\lesliepurge \alpha u {s_0} = \gamma \; \lesliepurge{\beta'} u {s_0 \cdot \beta}$ and for every $v \in \dsrc{\beta'} u {s_0 \cdot \beta}$, we have $\beta \unw_v \gamma$. 
	
	We prove this claim by an induction on $\beta$. 
	For the base case with $\beta = \epsilon$, we have that $\alpha = \beta'$ and hence $\gamma = \epsilon$ from which the claim is immediate. 
	For the inductive step, let $\beta = \tilde \beta b$ for some action $b$. 
	The induction hypothesis is: For some $\gamma$ with $\lesliepurge{\alpha} u {s_0} = \gamma \; \lesliepurge{b \beta'} u {s_0 \cdot \tilde \beta}$ and every $v \in \dsrc{b \beta'} u {s_0 \cdot \tilde \beta}$, we have $\tilde\beta \unw_v \gamma$. 

	We consider the following two cases: 
	\begin{itemize}
	 \item 
	\emph{Case 1: } $\dom(a) \notin \dsrc{b \beta'} u {s_0 \cdot \tilde \beta}$. 
		In this case, we have
	\begin{align*}
	\dsrc{b \beta'} u {s_0 \cdot \tilde \beta} = \dsrc{\beta'} u {s_0 \cdot \tilde \beta b} 
	\end{align*}
	and 
	\begin{align*}
	\lesliepurge{b \beta'}{u}{s_0 \cdot \tilde \beta} = \lesliepurge{\beta'}{u}{s_0 \cdot \tilde \beta b}
	\enspace. 
	\end{align*}
	Hence, the new value of $\gamma$ is the same as the previous value.
	And since ${s_0 \cdot  \tilde \beta} \models {\dom(b) \interferes v}$, by DLR, for every $v \in \dsrc{\beta'} u {s_0 \cdot \tilde \beta b}$, we have $ \tilde \beta b \unw_v  \tilde \beta$, and combined with $ \tilde \beta \unw_v  \gamma$ 
	the claim follows. 
	\item
	\emph{Case 2: } $\dom(a) \in \dsrc{b \beta'} u {s_0 \cdot \tilde \beta}$. 
	In this case, we have $\dsrc{b \beta'} u {s_0 \cdot \tilde \beta} = \{\dom(b) \} \cup \dsrc{\beta'} u {s_0 \cdot \tilde \beta b}$. 
	Since 
	\begin{align*}
	\lesliepurge{\alpha}{u }{s_0} & = \gamma \; \lesliepurge{b \beta'}{u }{s_0 \cdot \tilde \beta} \\
															& = \gamma b \; \lesliepurge{\beta'} u {s_0 \cdot \tilde \beta b} 
															\enspace, 
	\end{align*}
	the new value of $\gamma$ is $\gamma' =\gamma b$.
	For every $v \in \dsrc{\beta'} u {s_0 \cdot \tilde \beta b}$, we have $\tilde \beta \unw_v \gamma$ by induction hypothesis
	 and additionally $ \tilde \beta \unw_{\dom(b)}  \gamma$. 
	From the condition WSC it follows for every such $v$: $ \tilde \beta b \unw_v  \gamma b = \gamma'$. 
	From this claim it follows $ \alpha \unw_u  \; \lesliepurge{\alpha}{u}{s_0}$ with $\beta = \alpha$ and hence the system is \emph{not} $\musttaname$-secure. \qedhere
		\end{itemize}
\end{proof*}
In \cite{Eggert2013}, some problems with $\lpurgename$-security are identified and a new purge-based security definition is proposed in response.  
This notion uses
the same dynamic sources function as Leslie's does, but the purge function does not change in the state parameter if an action has been removed. 
More precisely, their dynamic intransitive purge function is defined for every $u \in \Dom$ and $s \in \States$ by
$\dipurge{\epsilon}{u}{s}  = \epsilon$
and for every $a \in \Actions$ and $\alpha \in \Actions^*$ if $\dom(a) \in \dsrc{a \alpha}{u}{s}$, 
then
$
 \dipurge{a \alpha}{u}{s} = a \; \dipurge{\alpha}{u}{s\cdot a}
$ 
and else
$
   \dipurge{a \alpha}{u}{s} = \dipurge{\alpha}{u}{s} 
$. 
%
%
%
The notion of 
\emph{i-security} is defined as usual as the property that for all agent $u$, all states $s$, and all action sequences $\alpha$ and $\beta$ with $\dipurge \alpha u s = \dipurge \beta u s$, we have $\obs_u(s \cdot \alpha) = \obs_u(s \cdot \beta)$.

This notion of security is incomparable with our security definitions of $\maytaname$-security and $\musttaname$-security.
On the one hand, we have, due to the static case, that i-security does not imply $\maytaname$-security. 	
The following example gives a system, with a local policy, that is $\musttaname$-secure and $\maytaname$-secure, but not i-secure.

 \begin{figure}
 \centering
 \scalebox{0.8}{
 \begin{tikzpicture}[tikzglobal,node distance=1cm]
 
    \newcommand{\agents}{ 
      \node[agent] (H) [left] {$H$};
      \node[agent] (D) [right of=H] {$D$};
      \node[agent] (L) [below of=D] {$L$};
      }

 \node[initial,systemstate] (s0) {
   $s_0$ 
   \\
   \hline
   \agents
   \\
   \hline
   $\obs_L \colon 0$ \\
  };
   \path[policy] (D) edge (L);
   \path[policy] (H) edge (D);
   
 \node[systemstate] (s1) [right=of s0] {
 $s_1$ 
 \\
 \hline
  \agents
  \\
   \hline	
   $\obs_L \colon 0$ \\
 };
  \path[policy] (H) edge (D);
  
   
    \node[systemstate] (s2) [right=of s1] {
 $s_2$
 \\
 \hline
  \agents
  \\
   \hline
   $\obs_L \colon 1$ \\
 };
   \path[policy] (D) edge (L);
   \path[policy] (H) edge (D);

 \path (s0) edge node {$h$} (s1) 
       (s0) edge [bend right=40] node[below] {$d$} (s2)
 ;
\end{tikzpicture}
}
\caption{A $\musttaname$-secure, but \emph{not} i-secure system}
\label{fig:dtasecure-not-disecure}
\end{figure} 

\vspace*{-10pt} 
\begin{example}
\label{ex:dtasecure-not-disecure}
The system in Figure~\ref{fig:dtasecure-not-disecure} is clearly \emph{not} i-secure. 
We consider the traces $h d $ and $d$ since the $\dipurgename$ values of these two traces are the same for $L$ if one starts purging in $s_0$:
$
 \dipurge{h d}{L}{s_0} = \dipurge{d} L {s_0} = d
$. 
But the observations after these two traces are different for $L$. 

However, for any trace that starts with an $h$ action, the $\maytaname$ value is $\epsilon$, and for any trace that starts with a $d$ action, the $\maytaname$ value differs from $\epsilon$. 
Hence, for any two traces that lead to different observations for $L$ have different $\maytaname$ values. 
Note, that the dynamic policy of this system is local, and hence, this system is also $\musttaname$-secure. 

We would argue that this system is intuitively secure, so this gives grounds to prefer our two definitions over i-security. 
First, note that in every state $D$ is permitted to learn about $H$ actions, so, in particular, it should be permitted
for $D$ to know in state $s_0$ that $H$ has not yet acted. In state $s_2$, $L$ can deduce that $D$ acted before $H$ did, 
but $D$ was permitted to transfer its information to $L$ when it acted, so this is not an insecurity. Finally, $L$ cannot distinguish state $s_1$ from $s_0$, 
so  this state is secure on the grounds that the initial state of a system always is (noninterference definitions, intuitively, aim
to prohibit agents from learning information about what has happened in a system). 
\end{example}

\vspace*{-10pt} 
\begin{figure}
\centering
 \begin{tikzpicture}
 \node (mustta) {$\musttaname$-security};
 \node (mayta) [right=of mustta] {$\maytaname$-security};
 \node (isecurity) [below=of mustta] {i-security};
 \node (lpurge) [right=of isecurity] {$\lpurgename$-security};
 
 \draw (mustta) edge[->] (mayta);
 \draw (mustta)	 edge[->] (lpurge);
 \draw (isecurity) edge[->] (lpurge);
 \end{tikzpicture}
 \caption{Relations between dynamic intransitive noninterference definitions}
 \label{fig:relation-security-definitions}
\end{figure}
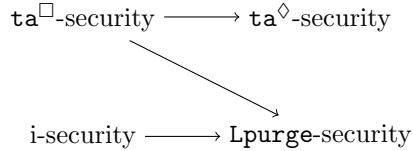

Although i-security implies $\lpurgename$-security, Example~\ref{ex:dtasecure-not-disecure} shows that these security definitions still differ on systems with local policies. 
The implications of the security definitions analyzed in this work are depicted in Figure~\ref{fig:relation-security-definitions}.
All these implications are strict in general, and the definitions of $\musttaname$-security and $\maytaname$-security only collapse on systems with local policies. 

We can summarize our reasons for preferring the definitions of the present paper to these prior works as 
follows. First, both  $\lpurgename$-security and i-security 
are 
equivalent to Rushby's IP-security
in the case of static policies. This means that these definitions are 
are subject to the problems identified by van der Meyden \cite{Meyden15}. 
One of these problems is an example presented in \cite{Meyden15}, which shows that Rushby's IP-security fails to detect a covert
channel based on ordering of actions.  A second weakness is that IP-security has a weaker relationship
to the unwinding proof method for intransitive non-interference proposed by Rushby \cite{rushby92} than does $\taname$-security. 
As shown in  \cite{Meyden15}, this proof method is sound but not complete for IP-security, 
but both sound and complete for \tasecurity. Since the proof method is itself is based on well-founded intuitions,
 this lends further support to \tasecurity. (We establish a related completeness result for
 $\musttaname$-security and $\maytaname$-security with respect to local dynamic policies in the following section, that shows
 that a similar justification exists for these extensions of \tasecurity to the dynamic case.) 
Secondly, when we consider $\lpurgename$-security and i-security  with respect to 
dynamic policies, there are   examples for each where they give intuitively incorrect results. 
We refer  to  \cite{Eggert2013} for examples showing $\lpurgename$-security yields
undesirable conclusions. Example~\ref{ex:dtasecure-not-disecure} shows that i-security 
can also yield undesirable results.

\subsection{Policy Refinement} 

Recall that ${\interferes} \leq {\interferes'}$ says that policy $\interferes$ is more restrictive than
policy $\interferes'$. We therefore expect that if a system is secure with respect to $\interferes$, 
it is also secure with respect to $\interferes'$. The following result shows that $\musttaname$-security behaves as expected
with respect to this order. 

\begin{proposition} 
\label{prop:mustta-monotonic} 
If ${\interferes} \leq {\interferes'}$  and system $M$ is $\musttaname$-secure with respect to $\interferes$, 
then $M$ is $\musttaname$-secure with respect to $\interferes'$. 
\end{proposition}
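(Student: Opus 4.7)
The plan is to reduce this to the unwinding characterization of $\musttaname$-security and then exploit monotonicity of the unwinding relations.

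First, I would rewrite the hypothesis via Theorem~\ref{thm:mustunwind}: $\musttaname$-security with respect to $\interferes$ is equivalent to unwinding-security with respect to $\interferes$, and similarly for $\interferes'$. So it suffices to show that if $\alpha \unw_u \beta$ implies $\obs_u(s_0 \cdot \alpha) = \obs_u(s_0 \cdot \beta)$ (where $\{\unw_u\}_{u \in \Dom}$ are the unwinding relations for $\interferes$), then the same implication holds for the unwinding relations $\{\unwprime_u\}_{u \in \Dom}$ of $\interferes'$.

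Next, I would invoke Lemma~\ref{lem:unwmon}, which gives precisely ${\unwprime_u} \subseteq {\unw_u}$ for all $u \in \Dom$ whenever ${\interferes} \leq {\interferes'}$. Given this containment, if $\alpha \unwprime_u \beta$ then $\alpha \unw_u \beta$, and by the assumed unwinding-security with respect to $\interferes$ we conclude $\obs_u(s_0 \cdot \alpha) = \obs_u(s_0 \cdot \beta)$. Thus $M$ is unwinding-secure with respect to $\interferes'$, and by Theorem~\ref{thm:mustunwind} again, $\musttaname$-secure with respect to $\interferes'$.

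There is no real obstacle here: the work has already been done in Lemma~\ref{lem:unwmon} (monotonicity of unwinding in the policy order) and Theorem~\ref{thm:mustunwind} (the equivalence between the concrete $\musttaname$ representation and the abstract unwinding relations). The proof is essentially a two-line chain of implications. It is worth noting that the analogous statement for $\maytaname$-security would require a direct argument, since the definition of $\maytaname$ does not obviously respect the policy order in the same way; the present proposition benefits from the fact that unwinding-security is characterized by a \emph{smallest} family of equivalences satisfying WSC and DLR, and weakening the policy only weakens DLR, so the smallest such family can only shrink.
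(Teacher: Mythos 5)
Your proof is correct and follows exactly the paper's own argument: reduce $\musttaname$-security to unwinding-security via Theorem~\ref{thm:mustunwind}, apply Lemma~\ref{lem:unwmon} to get ${\unwprime_u} \subseteq {\unw_u}$, and conclude by containment. Nothing is missing.
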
 

\begin{proof} 
This follows from the equivalence of $\musttaname$-security and unwinding-security. 
Let ${\interferes} \leq {\interferes'}$ and let $\{\unw_u\}_{u\in \Dom}$ and $\{\unwprime_u\}_{u\in \Dom}$ be the 
unwinding relations for the policies $\interferes$ and $\interferes'$, respectively. 
By Lemma~\ref{lem:unwmon}, we have ${\unwprime_u} \subseteq {\unw_u}$ for all $u\in \Dom$. 
Suppose that 
$M$ is unwinding secure with respect to $\interferes$. Then $\alpha \unwprime_u \beta$ 
implies $\alpha \unw_u \beta$, which yields $\obs_u(s_0\cdot \alpha) = \obs_u(s_0\cdot \beta)$, 
as required for $M$ to be secure with respect to $\interferes'$. 
\end{proof}

The following example illustrates a further subtlety concerning 
$\maytaname$-security: unlike $\musttaname$-security, it is not monotonic with respect to 
the restrictiveness order on policies. 

\begin{example} 
Figure~\ref{fig:refinement} shows a system and a policy $\interferes$
 for two agents $A,B$. Only the observations of agent $B$ are depicted;
 for agent $A$ we assume that $\obs_A(s) = 0$ for all states $s$, 
 so that there can be no insecurity with respect to agent $A$. 
 For agent $B$, the only possible cases of $\maytaname$-insecurity are 
 when $\obs_B(s_0\cdot \alpha) = 1 $ and $\obs_B(s_0\cdot \beta)=0$. 
 This implies that we have  $\alpha = aba \alpha'$ and that $aba$ is not a prefix of 
 $\beta$. However, in this case, $\mayta \beta B$ cannot contain a subterm $a$, 
 since there are no strict prefixes $\gamma$ of $\beta$ with $\gamma \models {A \interferes B}$. 
 On the other hand, $\mayta {aba \alpha'} B$ always contains a subterm $a$, since 
$\mayta {aba}  B = ((\epsilon,\epsilon,b), (\epsilon,\epsilon,a), a)$. Thus, we cannot have 
 $\mayta {aba \alpha} B = \mayta \beta B$. It follows that the system is $\maytaname$-secure with respect to $\interferes$. 
We note that the policy $\interferes$ is not local, because 
$\mayta {ab}  A = \mayta {ba}  A = (\epsilon, \epsilon,a)$ and  $\mayta {ab}  B = \mayta {ba}  B = (\epsilon, \epsilon,b)$, 
but $ab \models {A \interferes B}$ while $ba \models {A \noninterferes B}$.

\begin{figure}
\begin{center}
\scalebox{0.75}{
 \begin{tikzpicture}[tikzglobal,node distance=1cm]
 
    \newcommand{\agents}{ 
      \node[agent] (A) [left] {$A$};
      \node[agent] (B) [right of=A] {$B$};
      }

 \node[initial,systemstate] (s0) {
   $s_0$ 
   \\
   \hline
   \agents
   \\
   \hline
   $\obs_B \colon 0$ \\
  };
   
 \node[systemstate] (s1) [right=of s0] {
 $s_1$ 
 \\
 \hline
  \agents
  \\
   \hline
   $\obs_B \colon 0$ \\
 };
   
 \node[systemstate] (s2) [right=of s1] {
 $s_2$
 \\
 \hline
  \agents
  \\
   \hline
   $\obs_B \colon 0$ \\
 };
    \path[policy] (A) edge (B);

  \node[systemstate] (s4) [right=of s2] {
 $s_4$
 \\
 \hline
  \agents
  \\
   \hline
   $\obs_B \colon 0$ \\
 };

 \node[systemstate] (s3) [below=of s4] {
 $s_3$
 \\
 \hline
  \agents
  \\
   \hline
   $\obs_B \colon 1$ \\
 };


 \node[systemstate] (s5) [below=of s0] {
 $s_5$ 
 \\
 \hline
  \agents
  \\
   \hline
   $\obs_B \colon 0$ \\
 };
   
 \node[systemstate] (s6) [right=of s5] {
 $s_6$
 \\
 \hline
  \agents
  \\
   \hline
   $\obs_B \colon 0$ \\
 };

 \path (s0) edge node {$a$} (s1) 
 (s1) edge node {$b$} (s2)
 (s2) edge node {$a$} (s3) 
 (s2) edge node {$b$} (s4)
 (s0) edge node {$b$} (s5) 
 (s5) edge node {$a$} (s6);
\end{tikzpicture}
}
\end{center}
\caption{The permissive interpretation is not monotonic: policy $\interferes$}
\label{fig:refinement}
\end{figure}

On the other hand, consider the policy $\interferes'$, which is defined identically to $\interferes$, except that on 
state $s_6$ we take $s_6 \models {A \interferes' B}$. Plainly, ${\interferes} \leq {\interferes'}$. 
However, the system is not $\maytaname$-secure with respect to $\interferes'$. 
Note that with respect to policy $\interferes'$, 
we have $\mayta {aba} B = ((\epsilon,\epsilon,b), (\epsilon,\epsilon,a), a) = \mayta {baa} B$, 
whereas $\obs_B(s_0\cdot aba) = 1 $ and $\obs_B(s_0\cdot baa)=0$. 
\end{example} 

However, for local policies, $\maytaname$-security behaves as expected. 
This follows from Proposition~\ref{prop:mustta-monotonic}, the equivalence of $\maytaname$-security and $\musttaname$-security 
for local policies. The following result shows that, in fact it holds slightly more generally, 
and we can also use monotonicity to derive $\maytaname$-security for non-local policies. 

\begin{proposition} 
\label{prop:may-monotonic} 
Suppose that ${\interferes} \leq {\interferes'}$ and that $\interferes$ is 
local. 
If a system $M$ is $\maytaname$-secure with respect to $\interferes$, then $M$ is 
$\maytaname$-secure with respect to $\interferes'$.
\end{proposition}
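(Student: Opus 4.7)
The plan is to assemble this result from three previously-established facts, without doing any fresh induction. Specifically, I will chain together (i) Theorem~\ref{thm:local-must-may}, which says that for local policies $\maytaname$-security and $\musttaname$-security coincide, (ii) Proposition~\ref{prop:mustta-monotonic}, which establishes monotonicity of $\musttaname$-security along $\leq$, and (iii) Lemma~\ref{mayta-cont-unwind}, which says that $\musttaname$-security is at least as strong as $\maytaname$-security (over any policy, local or not).

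Concretely, suppose $M$ is $\maytaname$-secure with respect to $\interferes$. Since $\interferes$ is local, the first step is to apply Theorem~\ref{thm:local-must-may} to conclude that $M$ is also $\musttaname$-secure with respect to $\interferes$. Next, using the hypothesis ${\interferes} \leq {\interferes'}$ together with Proposition~\ref{prop:mustta-monotonic}, I get that $M$ is $\musttaname$-secure with respect to $\interferes'$. Finally, Lemma~\ref{mayta-cont-unwind} (applied to the policy $\interferes'$) yields that $M$ is $\maytaname$-secure with respect to $\interferes'$, which is the desired conclusion.

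There is essentially no obstacle to overcome here; the point of the statement is to extract a useful consequence of the machinery already in place, and in particular to observe that $\maytaname$-monotonicity \emph{can} be recovered whenever the more restrictive policy $\interferes$ happens to be local, even though $\interferes'$ itself need not be local. Note that the detour through $\musttaname$-security is unavoidable in this argument: the previous example shows that $\maytaname$-security alone is not monotonic, so any proof must exploit some extra property of $\interferes$, and locality is precisely the condition that lets us identify $\maytaname$-security with $\musttaname$-security on $\interferes$ and thereby invoke Proposition~\ref{prop:mustta-monotonic}.
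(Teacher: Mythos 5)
Your proof is correct and is essentially identical to the paper's own argument: local policy $\Rightarrow$ $\musttaname$-security via Theorem~\ref{thm:local-must-may}, then monotonicity via Proposition~\ref{prop:mustta-monotonic}, then back to $\maytaname$-security via Lemma~\ref{mayta-cont-unwind}. Nothing to add.
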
 

\begin{proof}
 Let $M$ be a $\maytaname$-secure system with respect to $\interferes$. 
 Since $\interferes$ is local, we obtain from Theorem~\ref{thm:local-must-may} that $M$ is $\musttaname$-secure with respect to $\interferes$. 
 By Proposition~\ref{prop:mustta-monotonic}, the system is $\musttaname$-secure with respect to $\interferes'$. 
 Since $\musttaname$-security implies $\maytaname$-security (by Lemma~\ref{mayta-cont-unwind} and Theorem~\ref{thm:mustunwind}), we have that $M$ is $\maytaname$-secure with respect to $\interferes'$. 
\end{proof}

\section{Proof Techniques} 

\label{sec:prooftechniques}

In the theory of static  (intransitive) noninterference  policies, 
{\em unwinding relations} provide a proof technique that 
can be used to show that a system is secure.  An unwinding
relation is a relation on the states of the system that 
satisfies certain conditions. We show in this section that a similar technique applies 
to dynamic policies. In fact, the conditions that we need are very similar to those 
defined on the policy above, with the main difference being that they are stated
with respect to states of the system rather than traces.

\subsection{Proof technique for $\musttaname$-security}

Let 
$M$ be a system and let $\approx_u$ be an equivalence 
relation on the states of $M$ for each $u\in \Dom$. 

\begin{itemize} 
\item[(DLR$_M$)]   If $\alpha \models {\dom(a) \noninterferes u}$ then $s_0\cdot \alpha a \approx_u s_0 \cdot \alpha$. 
\item[(WSC$_M$)]   If $s \approx_u t$ and $s \approx_{\dom(a)} t$ then $s\cdot a \approx_u t\cdot a$. 
\item[(OC$_M$)]   If $s \approx_u t$ then $\obs_u(s) = \obs_u(t)$. 
\end{itemize} 
We call a collection $\{\approx_u \}_{u\in \Dom}$ of equivalence relations satisfying DLR$_M$, WSC$_M$ and 
OC$_M$ a $\Box$-unwinding on $M$ with respect to $\interferes$.

We remark that for policy-enhanced systems, we have the following version of DLR$_M$: 
\begin{itemize} 
\item[(DLR-PE$_M$)]   If $s \models \dom(a) \noninterferes u$ then $s\cdot  a \approx_u s$. 
\end{itemize} 
which formulates the conditions to uniformly quantify over states of the system rather than both 
over states and traces. 
It is straightforward to show that if a policy-enhanced system $M$ satisfies 
DLR-PE$_M$ then it satisfies DLR$_M$.

The following result states that to prove $\musttaname$-security, it suffices to  
show that there exists a $\Box$-unwinding on the system.  

\begin{theorem} 
\label{thm:unwind-mustsecure}
Suppose that there exist equivalence relations $\{\approx_u\}_{u \in \Dom}$ on the states
of a system $M$ satisfying DLR$_M$, WSC$_M$ and OC$_M$ with respect to a 
policy $\interferes$. Then $M$ is $\musttaname$-secure with respect to $\interferes$.  
\end{theorem}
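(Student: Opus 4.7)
The plan is to reduce the statement to unwinding-security (which, by Theorem~\ref{thm:mustunwind}, is equivalent to $\musttaname$-security), and then to lift the given state-level unwinding $\{\approx_u\}_{u\in \Dom}$ to a trace-level family of equivalences whose WSC/DLR-closure forces containment of the minimal unwinding relations $\{\unw_u\}_{u\in \Dom}$.

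Concretely, I would define, for each $u \in \Dom$, a trace-level equivalence $\approx'_u$ on $\Actions^*$ by $\alpha \approx'_u \beta$ iff $s_0 \cdot \alpha \approx_u s_0 \cdot \beta$. Each $\approx'_u$ is an equivalence relation since $\approx_u$ is. The next step is to check that $\{\approx'_u\}_{u\in \Dom}$ satisfies the trace-level conditions WSC and DLR used to define the relations $\unw_u$. For DLR, suppose $\alpha \models \dom(a) \noninterferes u$; then DLR$_M$ gives $s_0\cdot \alpha a \approx_u s_0\cdot \alpha$, i.e., $\alpha a \approx'_u \alpha$. For WSC, suppose $\alpha \approx'_u \beta$ and $\alpha \approx'_{\dom(a)} \beta$; unfolding the definition and applying WSC$_M$ yields $s_0\cdot (\alpha a) = (s_0\cdot \alpha)\cdot a \approx_u (s_0\cdot \beta)\cdot a = s_0 \cdot (\beta a)$, so $\alpha a \approx'_u \beta a$.

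Since $\{\unw_u\}_{u\in \Dom}$ is pointwise the smallest family of equivalence relations on traces satisfying WSC and DLR, we obtain ${\unw_u} \subseteq {\approx'_u}$ for every $u \in \Dom$. Now suppose $\alpha \unw_u \beta$ for some $u$; then $\alpha \approx'_u \beta$, so $s_0 \cdot \alpha \approx_u s_0 \cdot \beta$, and by OC$_M$ we conclude $\obs_u(s_0\cdot \alpha) = \obs_u(s_0\cdot \beta)$. This is exactly unwinding-security, so by Theorem~\ref{thm:mustunwind} the system $M$ is $\musttaname$-secure with respect to $\interferes$.

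There is no real obstacle in this argument, since it amounts to a direct translation between the system-level and trace-level formulations of the unwinding conditions. The only point requiring mild care is that the trace-level WSC, which involves simultaneous indistinguishability in $\approx'_u$ and $\approx'_{\dom(a)}$, lines up cleanly with the system-level WSC$_M$ because the step function $s \mapsto s \cdot a$ is deterministic; otherwise one might need a more delicate bisimulation-style argument. The conceptual work has already been done in establishing the equivalence ${\unw_u} = {\sim^\Box_u}$ of Theorem~\ref{thm:mustunwind}, which allows us to pass freely between the concrete $\musttaname$-based definition and the abstract unwinding-based one.
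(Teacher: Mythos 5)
Your proposal is correct and takes essentially the same route as the paper: the paper proves the containment ${\unw_u}\subseteq{\approx'_u}$ by induction on the derivation of $\alpha\unw_u\beta$ (using DLR$_M$ and WSC$_M$ in the respective cases), which is just the explicit form of your appeal to the minimality of $\{\unw_u\}_{u\in\Dom}$ among equivalence-relation families satisfying WSC and DLR, and then concludes via OC$_M$ and the equivalence of unwinding-security with $\musttaname$-security from Theorem~\ref{thm:mustunwind}. No gap.
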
 

\begin{proof} 
To show that $M$ is $\musttaname$-secure, we argue that the relations $\{\unw_u\}_{u \in \Dom}$, which by definition satisfy WSC and DLR, also satisfy OC. 
For this, we claim that for all $\alpha, \beta\in A^*$, if 
$\alpha \unw_u\beta$ then $s_0\cdot \alpha \approx_u s_0\cdot\beta$.
It is  then immediate from OC$_M$ that OC holds for $\unw_u$. 

The proof of the claim is by induction on the derivation of $\alpha \unw_u\beta$. 
The cases of $\alpha \unw_u\beta$ obtained by reflexivity, symmetry or
transitivity are straightforward, using the fact that $\approx_u$ has these
properties. For DLR, if $\alpha \unw_u\beta$ is obtained by an application of 
DLR then $\alpha = \beta a$ for some $a\in A$ with $\alpha \models {\dom(a) \noninterferes u}$. 
By DLR$_M$, we have ${s_0\cdot \alpha} \approx_u {s_0\cdot \beta}$, as required. 
For WSC, suppose that $\alpha \unw_u\beta$ is obtained by an application of WSC, 
so that we have $\alpha = \alpha' a$ and $\beta = \beta' a$. where 
$\alpha \unw_u \beta$ and $\alpha \unw_{\dom(a)} \beta$. 
By induction, we have $s_0 \cdot \alpha' \approx_u s_0 \cdot \beta'$ and $s_0 \cdot \alpha' \approx_{\dom(a)} s_0 \cdot \beta'$. 
We now obtain $s_0 \cdot \alpha' a\approx_u s_0 \cdot \beta' a $ by WSC$_M$, i.e., 
$s_0 \cdot \alpha\approx_u s_0 \cdot \beta$, as required. 
\end{proof} 

An unwinding on the system can often be naturally defined.  
We give an example below in the case of access control systems.

We may also show that unwindings provide a complete proof method, 
modulo the fact that we may need to work in a bisimilar system, rather
than the system as presented. (See \cite{Meyden15} for an example that demonstrates
that, already for static policies, we do not obtain completeness if we 
require that the unwinding be defined over the states of the system $M$ itself. 
A more complex type of unwinding on the system $M$ itself is shown to be complete for \tasecurity of static policies 
in \cite{EggertMSW13}. It would be interesting to develop a generalization of this to the 
dynamic case, but we leave this for future work.)

\begin{theorem} 
Suppose that $M$ is $\musttaname$-secure with respect to $\interferes$. Then there 
exists a system $M'$ that is bisimilar to $M$ (specifically, we may take $M'= \unwound(M)$), 
such that there exist equivalence relations $\{\approx'_u\}_{u \in \Dom}$ on the states
of  $M'$ satisfying DLR$_{M'}$, WSC$_{M'}$ and OC$_{M'}$ with respect to 
policy $\interferes$. 
\end{theorem}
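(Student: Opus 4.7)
The plan is to take $M' = \unwound(M)$, whose states are the traces $A^*$, and to define the candidate unwinding by $\alpha \approx'_u \beta$ iff $\alpha \unw_u \beta$, where $\{\unw_u\}_{u\in\Dom}$ are the unwinding relations on traces associated with the policy $\interferes$ from Section~\ref{sec:knowledgebased}. Equivalently, by Theorem~\ref{thm:mustunwind}, $\alpha \approx'_u \beta$ iff $\mustta{\alpha}{u} = \mustta{\beta}{u}$. The relations are equivalence relations by construction.

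I would then verify the three unwinding conditions on $M'$ one at a time, which is essentially bookkeeping since the states of $M'$ are literally the traces, and the transition is $\alpha \cdot a = \alpha a$, so the trace-level conditions on $\{\unw_u\}_{u\in\Dom}$ transfer to state-level conditions on $M'$ almost verbatim. First, DLR$_{M'}$: if $\alpha \models \dom(a) \noninterferes u$, then DLR (built into the definition of $\unw_u$) gives $\alpha a \unw_u \alpha$, i.e.\ $s_0 \cdot \alpha a \approx'_u s_0 \cdot \alpha$. Second, WSC$_{M'}$: if $\alpha \approx'_u \beta$ and $\alpha \approx'_{\dom(a)} \beta$, then by WSC on $\unw_u$ we get $\alpha a \unw_u \beta a$, i.e.\ $s_0 \cdot \alpha \cdot a \approx'_u s_0 \cdot \beta \cdot a$.

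The only step that uses the hypothesis of $\musttaname$-security is OC$_{M'}$. Here I need: if $\alpha \unw_u \beta$ then $\obs^{M'}_u(\alpha) = \obs^{M'}_u(\beta)$. By the definition of $\unwound(M)$, $\obs^{M'}_u(\alpha) = \obs^M_u(s_0 \cdot \alpha)$, and by Theorem~\ref{thm:mustunwind}, $\alpha \unw_u \beta$ iff $\mustta{\alpha}{u} = \mustta{\beta}{u}$; the assumption that $M$ is $\musttaname$-secure then immediately yields $\obs^M_u(s_0\cdot \alpha) = \obs^M_u(s_0\cdot \beta)$.

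Finally, $\unwound(M)$ is bisimilar to $M$ by the remark following the definition of $\unwound$ in Section~\ref{sec:definitions}. There is essentially no hard step here: the work has already been done in Theorem~\ref{thm:mustunwind}, which ensures that the canonical candidate $\unw_u$ on the trace-space is both the smallest collection satisfying WSC and DLR, and precisely captures $\musttaname$-equivalence; the completeness proof therefore reduces to the observation that moving to the unfolded system turns a trace-indexed semantic condition into a state-indexed structural one. The mildly subtle point worth being explicit about is \emph{why} one cannot in general do this on $M$ itself, namely that distinct traces leading to the same state of $M$ may need to be distinguished by the unwinding, which is exactly why unfolding is required.
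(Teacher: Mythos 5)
Your proposal is correct and follows exactly the paper's own argument: take $M'=\unwound(M)$, use the trace-level unwinding relations $\{\unw_u\}_{u\in\Dom}$ as the state-level equivalences (so WSC$_{M'}$ and DLR$_{M'}$ hold by construction), and derive OC$_{M'}$ from $\musttaname$-security via Theorem~\ref{thm:mustunwind}. No substantive differences.
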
 

\begin{proof} 
Suppose that $M$ is $\musttaname$-secure with respect to $\interferes$. 
Then the relations $\{\unw_u\}_{u \in \Dom}$ are defined on the states of 
$M'= \unwound(M)$, and  by definition, satisfy WSC and $DLR$. 
These are equivalent to WSC$_{M'}$ and DLR$_{M'}$. 
Moreover, by Theorem~\ref{thm:mustunwind}, we have that ${\unw_u} = {\sim^\Box_u}$ for all $u\in D$, 
so OC$_{M'}$ follows from the fact that $M$ is $\musttaname$-secure. 
\end{proof}

\subsection{Proof technique for $\maytaname$-security} 

A similar proof theory can be developed for the notion $\maytaname$-security. 
We replace the condition WSC$_M$ by the following:  

\begin{itemize} 
\item[(WSC$^\Diamond_M$)]   If $\alpha \models {\dom(a) \interferes u}$ and $\beta  \models {\dom(a) \interferes u}$  and ${s_0\cdot \alpha}  \approx_u {s_0 \cdot \beta}$ 
and ${s_0\cdot \alpha} \approx_{\dom(a)} {s_0 \cdot \beta}$ 
then ${s_0\cdot \alpha a} \approx_u {s_0 \cdot \beta a}$. 
\end{itemize} 
On policy enhanced systems $M$, it is  easily seen that the following variant is sufficient for WSC$^\Diamond_M$ to hold:
\begin{itemize} 
\item[(WSC-PE$^\Diamond_M$)]   If $s\models {\dom(a) \interferes u}$ and $t  \models {\dom(a) \interferes u}$  and $s \approx_u t $ 
and $s  \approx_{\dom(a)} t $ 
then ${s\cdot  a} \approx_u {t \cdot a}$. 
\end{itemize} 

\begin{theorem} 
\label{thm:unwind-maysecure}
Suppose that there exist equivalence relations $\{\approx_u\}_{u \in \Dom}$ on the states
of a system $M$ satisfying DLR$_M$, WSC$^\Diamond_M$ and OC$_M$ with respect to a 
policy $\interferes$. Then $M$ is $\maytaname$-secure with respect to $\interferes$.  
\end{theorem}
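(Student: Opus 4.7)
The plan is to prove, by induction on the combined length of $\alpha$ and $\beta$, the following claim: if $\mayta\alpha u = \mayta\beta u$ then $s_0\cdot\alpha \approx_u s_0\cdot\beta$. From this, $\maytaname$-security is immediate by OC$_M$. The overall structure mirrors the proof of Theorem~\ref{thm:unwind-mustsecure}, except that the induction is on the traces rather than on a derivation, and only the weaker WSC$^\Diamond_M$ is available, so we must rely on the policy-edge information carried by $\maytaname$ itself to license each inductive step.

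The base case $\alpha=\beta=\epsilon$ is trivial by reflexivity of $\approx_u$. For the inductive step, by symmetry of $\approx_u$ we may assume $\alpha$ is nonempty and write $\alpha=\alpha'a$. I would split on the two cases of the defining clause for $\mayta{\alpha' a} u$. If $\alpha' \models {\dom(a)\noninterferes u}$, then $\mayta {\alpha' a} u = \mayta{\alpha'} u$, so by the inductive hypothesis $s_0\cdot\alpha' \approx_u s_0\cdot\beta$, and DLR$_M$ together with transitivity gives $s_0\cdot\alpha'a \approx_u s_0\cdot\beta$. Otherwise $\alpha'\models {\dom(a)\interferes u}$, so $\mayta{\alpha' a} u=(\mayta{\alpha'} u, \mayta{\alpha'}{\dom(a)}, a)$ is nonempty, which forces $\beta=\beta'b$. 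If $\beta'\models {\dom(b)\noninterferes u}$, swap the roles of $\alpha$ and $\beta$ and argue as in the previous case; otherwise $\beta'\models {\dom(b)\interferes u}$, and comparing the outer structure of the two tuples yields $a=b$, $\mayta{\alpha'} u = \mayta{\beta'} u$, and $\mayta{\alpha'}{\dom(a)}=\mayta{\beta'}{\dom(a)}$. The inductive hypothesis, applied at domain $u$ and at domain $\dom(a)$, then gives $s_0\cdot\alpha' \approx_u s_0\cdot\beta'$ and $s_0\cdot\alpha' \approx_{\dom(a)} s_0\cdot\beta'$. Since both $\alpha'$ and $\beta'$ satisfy the policy condition $\dom(a)\interferes u$, the hypothesis WSC$^\Diamond_M$ is applicable and yields $s_0\cdot\alpha'a \approx_u s_0\cdot\beta'a$, as required.

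The main delicate point, and the reason the proof of WSC$^\Diamond_M$ is phrased with its double side condition on the policy, is the final subcase: WSC$^\Diamond_M$ cannot be invoked unless both traces end in an action whose domain is permitted (along the corresponding trace) to interfere with $u$. The case analysis above guarantees exactly this precondition, because whenever the $\maytaname$-tuples agree and the outer constructor is nontrivial, both traces must take the interfering branch of the definition of $\maytaname$. Once this match of branches is established, the inductive hypothesis delivers the two equivalences required by WSC$^\Diamond_M$, and the argument closes. This is essentially the same combinatorial skeleton used in the proof of Lemma~\ref{mayta-cont-unwind}; the only novelty is that we are working directly with $\approx_u$ rather than with the minimal unwinding $\unw_u$, so WSC$^\Diamond_M$ replaces WSC as the step principle.
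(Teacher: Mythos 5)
Your proposal is correct and follows essentially the same route as the paper's own proof: the same claim that $\mayta{\alpha}{u}=\mayta{\beta}{u}$ implies $s_0\cdot\alpha \approx_u s_0\cdot\beta$, the same induction on combined trace length, the same two-way case split on the defining clause of $\maytaname$ (with the role-swap for the mixed subcase), and the same final appeal to WSC$^\Diamond_M$ once both traces are seen to take the interfering branch. Your remark about why the double policy side condition of WSC$^\Diamond_M$ is exactly what the case analysis guarantees is accurate and matches the paper's argument.
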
 

\begin{proof*} 
We claim that for all $\alpha,\beta \in A^*$ we have $\mayta \alpha u = \mayta \beta u$ implies
that $s_0 \cdot \alpha  \approx_u s_0\cdot \beta $. It follows from this using  OC$_M$ that 
$M$ is $\maytaname$-secure. 

To prove the claim we proceed by induction on $|\alpha| + |\beta|$. 
The base case of $\alpha = \beta = \epsilon$ is trivial. 
For the inductive case, suppose (without loss of generality) that $\alpha = \alpha' a$, and we have 
$\mayta \alpha u= \mayta \beta u$. There are two possibilities: 
\begin{itemize} 
\item $\alpha \models {\dom(a) \noninterferes u}$. In this case we have $\mayta {\alpha'} u = \mayta \alpha u = \mayta \beta u$, so by the inductive hypothesis, 
we have ${s_0\cdot \alpha'} \approx_u {s_0\cdot \beta}$. By DLR$_M$, we also have that ${s_0\cdot \alpha' a} \approx_u {s_0\cdot \alpha'}$. 
Since $\approx_u$ is an equivalence relation, we have ${s_0\cdot \alpha} \approx_u {s_0\cdot \beta}$. 

\item $\alpha \models {\dom(a) \interferes u}$. 
In this case, $\mayta \alpha u = ( \mayta \alpha u, \mayta \alpha {\dom(a)}, a)$. It follows that $\beta \neq \epsilon$,
so we may write $\beta = \beta' b$ for some $\beta'\in A^*$ and $b\in A$. If $\beta' \models {\dom(b) \noninterferes u}$, 
we may reverse the roles of $\alpha$ and $\beta$ and obtain that  ${s_0\cdot \alpha} \approx_u {s_0\cdot \beta}$ using the argument of the previous case.  
If $\beta'\models {\dom(b) \interferes u}$, then $\mayta \beta u = ( \mayta {\beta'} u, \mayta {\beta'}  {\dom(b)}, b)$, and we
conclude that $a=b$ and $\mayta {\alpha'} u = \mayta {\beta'} u$ and $\mayta {\alpha'} {\dom(a)} = \mayta {\beta'} {\dom(a)}$. 
By the inductive hypothesis, we have ${s_0\cdot\alpha'} \approx_u {s_0\cdot\beta'}$ and ${s_0\cdot\alpha'} \approx_{\dom(a)} {s_0\cdot\beta'}$. 
Using WSC$^\Diamond_M$ it follows that ${s_0\cdot\alpha' a} \approx_u {s_0\cdot\beta' a}$, i.e., ${s_0\cdot\alpha} \approx_u {s_0\cdot\beta}$. 
\qedhere 
\end{itemize} 
\end{proof*} 
We also have completeness of the technique, provided 
we allow its application on a bisimilar system: 

\begin{theorem} 
\label{thm:unwind-mustsecure}
Suppose that $M$ is $\musttaname$-secure with respect to $\interferes$.  
Then there exists a system $M'$ that is bisimilar to $M$ and 
there exist equivalence relations $\{\approx_u\}_{u \in \Dom}$ on the states
of a system $M'$ satisfying DLR$_{M'}$, WSC$^\Diamond_{M'}$ and OC$_{M'}$ with respect to 
policy $\interferes$. 
\end{theorem}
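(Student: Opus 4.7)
The plan is to use the unfolding $M' = \unwound(M)$ as the bisimilar system, and to define the equivalence relations directly from the $\maytaname$ operator, by setting $\alpha \approx_u \beta$ iff $\mayta \alpha u = \mayta \beta u$ for all traces $\alpha, \beta \in A^*$ (which are the states of $M'$). These are manifestly equivalence relations, and by the already-established bisimilarity of $M'$ with $M$, any property of $M'$ given in terms of observations of traces transfers back to $M$.

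First I would verify DLR$_{M'}$: if $\alpha \models {\dom(a) \noninterferes u}$, then the second clause of the defining recursion for $\maytaname$ gives $\mayta {\alpha a} u = \mayta \alpha u$, which is exactly $\alpha a \approx_u \alpha$, i.e., $s_0 \cdot \alpha a \approx_u s_0 \cdot \alpha$ in $M'$. Next I would verify WSC$^\Diamond_{M'}$: suppose $\alpha \models {\dom(a) \interferes u}$, $\beta \models {\dom(a) \interferes u}$, $\alpha \approx_u \beta$ and $\alpha \approx_{\dom(a)} \beta$. Then the first clause of the recursion applies to both $\mayta{\alpha a} u$ and $\mayta {\beta a} u$, and unpacking gives $\mayta {\alpha a} u = (\mayta \alpha u, \mayta \alpha {\dom(a)}, a) = (\mayta \beta u, \mayta \beta {\dom(a)}, a) = \mayta {\beta a} u$, so $\alpha a \approx_u \beta a$ as required.

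Finally I would verify OC$_{M'}$, and this is the only step where the hypothesis of the theorem enters. Since $\musttaname$-security implies $\maytaname$-security (by Lemma~\ref{mayta-cont-unwind} together with Theorem~\ref{thm:mustunwind}), from $\alpha \approx_u \beta$, i.e., $\mayta \alpha u = \mayta \beta u$, we obtain $\obs_u(s_0 \cdot \alpha) = \obs_u(s_0 \cdot \beta)$, which by definition of $M' = \unwound(M)$ is exactly the $M'$-observation condition $\obs^{M'}_u(\alpha) = \obs^{M'}_u(\beta)$.

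There is no real obstacle here; the construction is entirely driven by the definition of $\maytaname$, and the only mild subtlety is the use of the implication from $\musttaname$-security to $\maytaname$-security to discharge OC$_{M'}$, since the stated hypothesis is the stronger notion.
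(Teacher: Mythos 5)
Your proposal is correct and follows essentially the same route as the paper: take $M'=\unwound(M)$, use the relations $\sim^\Diamond_u$ defined by equality of $\maytaname$ values, read off DLR$_{M'}$ and WSC$^\Diamond_{M'}$ directly from the two clauses of the $\maytaname$ recursion, and obtain OC$_{M'}$ from the fact that $\musttaname$-security implies $\maytaname$-security. Your version is in fact slightly more careful than the paper's, which leaves the $\musttaname$-to-$\maytaname$ step implicit (and contains a typo writing $\noninterferes$ for $\interferes$ in the WSC case).
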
 

\begin{proof} Let $M = \tuple{S, s_0, \trans, \Dom, \dom, \Actions, \obs}$. 
We take $M' = \tuple{S', s'_0, \trans', \Dom, \dom, \Actions, \obs'}$  to be the system 
over the same signature $(\Dom, \dom, \Actions)$
with states $S' = A^*$, initial state $s_0' = \epsilon$, 
transitions $\trans'$ defined by $\alpha \cdot a = \alpha a$ for all $\alpha \in A^*$, and observations defined
$\obs'_u(\alpha) = \obs_u(s_0\cdot\alpha)$ for all $\alpha \in S'$. It is easily checked that this system
is bisimilar to $M$. 

We show that the relations $\{\sim^\Diamond_u\}_{u \in \Dom}$ satisfy  DLR$_{M'}$, WSC$^\Diamond_{M'}$ and OC$_{M'}$ with respect to 
policy $\interferes$. Note that for all $\alpha\in A^*$, we have $s_0'\cdot \alpha  = \alpha$. 
For condition DLR$_{M'}$, note that if $\alpha \models \dom(a) \noninterferes u$ then $\mayta {\alpha a} u = \mayta \alpha u$, 
so $\alpha a \sim^\Diamond_u\alpha$. 
For condition WSC$^\Diamond_{M'}$, note that  if $\alpha \models \dom(a) \noninterferes u$
and  $\beta \models \dom(a) \noninterferes u$ and $\alpha  \sim^\Diamond_u\beta$ and 
$\alpha  \sim^\Diamond_{\dom(a)}\beta$ then 
\[ 
\begin{array}{rcl} 
\mayta {\alpha a} u & = & (\mayta \alpha u , \mayta \alpha {\dom(a)}, a) \\ 
& = & (\mayta \beta u , \mayta \beta {\dom(a)}, a) \\ 
& = & \mayta {\beta a} u
\end{array} 
\] 
hence $\alpha a \sim^\Diamond_u \beta a$. 
Condition OC$_{M'}$  is immediate from the fact that $M$ (and hence $M'$) is $\maytaname$-secure. 
\end{proof}

\section{Access Control 
in Dynamic Systems}
\label{sec:accesscontrolinterpretation}

Access control monitors are a fundamental security mechanism
that can be used to enforce a variety of policies. Rushby \cite{rushby92} 
showed that access control can provide a sound enforcement 
mechanism for static intransitive noninterference policies, and that a simple static check of the access control table suffices
to verify the policy will be enforced. It was later shown by van der Meyden 
\cite{Meyden15} that, in fact, access control is also a \emph{complete} enforcement mechanism
for such policies, in the sense that every system that satisfies the policy is 
bisimilar to one using access control, with a setting that satisfies the static check.  
In this section, we develop a generalization of these results to 
dynamic noninterference policies. 

A system with structured state, is a system equipped with 
\begin{itemize}
  \item a set of all possible objects $\Objects$, 
  \item a set of all possible values $\Values$, 
  \item a function $\contentsnme \colon \Objects \times \States \rightarrow \Values$, 
   \item a function $\observename \colon \Dom \times \States \rightarrow \powerset{\Objects}$, 
   \item a function $\altername \colon \Dom \times \States \rightarrow \powerset{\Objects}$. 
\end{itemize}
Intuitively, $\contents{o}{s}$ is the value of object $o$ in state $s$, 
$\observe{u}{s}$ is the set of objects that domain $u$ is permitted to observe (or read) in state $s$, 
and $\alter{u}{s}$ is the set of objects that domain $u$ is permitted to alter (or write) in state $s$. 
The pair of functions $\observename, \altername$ can be thought of as an encoding
of an access control table that varies with the state of the system.

We make an  assumption on the set $O$ of objects, namely, that this contains  
a special object $\oset{u}$ for each domain $u \in \Dom$.
We write the  set of all these special objects as $\Osets(\Dom) = \{ \oset{u} \mid u \in \Dom \}$ and assume $\Osets(D) \subseteq \Objects$. 
We assume that 
$\contents{\oset{u}}{s} = \observe{u}{s}$, 
so that the content of $\oset{u}$ is the set of all objects 
that are observable to $u$. 
Furthermore, we assume that $\oset{u} \in \observe u s$ for all domains $u\in D$ and states $s\in\States$, i.e., 
the object $\oset{u}$ is always observable by $u$. 
This is intuitively reasonable, given the meaning of this object, since domain $u$ can  always determine the value of 
$\oset{u}$ by checking directly which objects it is able to observe.

The motivation for these special objects 
$\oset{u}$ is that they are a  kind of ``meta-object''
that can be used to specify if an agent is allowed to change the observable objects of another agent. 
More precisely, we may let $\oset{u} \in \alter{v}{s}$ if and only if in state $s$, domain $v$ is allowed to influence which objects are observable by $u$. 

We now state a number of conditions (resembling Rushby's ``reference monitor conditions" \cite{rushby92}) that are 
intended to capture the intuitive semantics of the access control functions $\observename$, $\altername$, 
as they might be enforced by a reference monitor that restricts agents' ability to read and write objects
as these access control settings change. 
The conditions are cast in terms of the relations $\dynacrel u$, for $u\in \Dom$, defined by  
$  s \dynacrel u t \text{ iff } 
   \text{for all } o \in \observe u s, 
   \text{ we have } \contents o s = \contents o t $. 
Intuitively, $s\dynacrel u t$ when the objects observable by domain $u$ have the same values in the states
$s$ and $t$. Note that because $\oset{u} \in \observe u s$, it is always that case that 
$\observe u s = \observe u t$ when $s \dynacrel u t$. It follows that the relations $\dynacrel u$ are in fact equivalence 
relations. Our new ``dynamic reference monitor" conditions
are the following: 
\begin{enumerate}[\DRM-1]
  \item  If $s \dynacrel u t$, then $\obs_u(s) = \obs_u(t)$. 
  \label{cond:wa:outputconsistency}

  \item
     If $o \in \alter{\dom(a)} s \cap \alter{\dom(a)} t$ and 
     $s \dynacrel{\dom(a)} t$
      and  $\contents o s = \contents o t$ then  $\contents o {s\cdot a} = \contents o {t \cdot a}$. 
  \label{cond:wa:contentconsistency}

  \item
  If  $\contents o {s \cdot a} \neq \contents o s$, then 
  $o \in \alter{\dom(a)} s$. 
  \label{cond:wa:allowedinterference}

  \item ${\observe u {s \cdot a}\! \setminus\! \observe u s}\! \subseteq \! \observe{\dom(a)} s$.
    \label{cond:wa:grantobserve}

  \item 
  If $\alpha \models v \interferes u$ and $\beta \models v \interferes u$ and  
  $s_0\cdot \alpha \dynacrel{u} s_0\cdot \beta$ and $s_0\cdot \alpha \dynacrel{v} s_0\cdot \beta$, then $\observe u {s_0\cdot \alpha} \cap \alter v {s_0\cdot\alpha} = \observe u {s_0\cdot\beta}  \cap \alter v {s_0\cdot \beta}$. 
  \label{cond:wa:weakpolicyconsistency}

 \item
  If $\alter u {s_0\cdot \alpha} \cap \observe v {s_0\cdot \alpha} \neq \emptyset$, then $\alpha \models {u \interferes v}$. 
  \label{cond:wa:commonvar}

\end{enumerate}
Intuitively, condition \DRM-\ref{cond:wa:outputconsistency} says that a domain's observation depends only on the
contents of the objects observable to the domain. 
Condition \DRM-\ref{cond:wa:contentconsistency} says that when a domain is permitted to alter an object, the new value of the 
object depends only on its previous value and the values of the objects observable by the acting domain. 
(The version of this constraint given here follows \cite{Meyden15} rather than \cite{rushby92}.)  
Condition \DRM-\ref{cond:wa:allowedinterference} can be understood as stating that an 
action can change the value of an object only when the domain of the action is permitted by 
the access control setting to alter the object.  
Condition \DRM-\ref{cond:wa:grantobserve} states that any object that is made newly observable to a
domain $u$ by an action $a$ must be observable to the domain $\dom(a)$. 

Condition \DRM-\ref{cond:wa:policyconsistency} is a kind of locality constraint, 
which states that 
when $v$ is permitted to interfere with $u$, in situation $s$, 
the set $\observe u s \cap \alter v s $ 
of objects that $u$ can observe and $v$ can alter 
depends only on information local to 
domains $u$ and $v$, as captured by the relations $\dynacrel{u}$ and $\dynacrel{v}$. 
We note that this constraint actually only states a weak property of $\alter v s$, as 
we already have that $\observe u s$ depends only on the local state of $u$ as captured by $\dynacrel{u}$.
\newcommand{\strongpolicyconsistency}{\xspace{\DRM-5$'$}\xspace}
The following stronger version of this condition is also of interest: 
 \begin{itemize} 
\item[\DRM-5$'$] If $s \dynacrel{u} t$ and $s \dynacrel{v} t$, then $\observe u s \cap \alter v s = \observe u t \cap \alter v t$. 
 \label{cond:wa:policyconsistency}
\end{itemize} 
Here we have dropped the requirement that $v$ be permitted to interfere with $u$ in order for the
dependency to hold. This stronger condition more closely resembles locality.

We remark that have not introduced special objects $\mathtt{aset}(u)$ for each domain $u$, with the property 
that $\contents {\mathtt{aset}(u)} s = \alter u s$, as might be suggested by analogy with the 
objects $\oset{u}$. The results to follow do not require the introduction of these objects. 
However, in systems with such objects, and also satisfying the  property
$\mathtt{aset}(u) \in \observe u s$ condition \strongpolicyconsistency\   would be necessarily satisfied.

Finally, condition \DRM-\ref{cond:wa:commonvar} relates the access control setting to the 
policy. Note that if there exists an object that domain $u$ is able to write and domain $v$ is able
to read, then this object provides an obvious channel for information to flow from $u$ to $v$. 
The condition states that in this situation, the resulting channel for information flow is permitted 
by the information flow policy $\interferes$. 

\begin{theorem}
\label{thm:ac-secure}
Suppose $M$ is a system with structured state that satisfies properties \DRM-{1} to \DRM-{6}  with respect to policy $\interferes$. 
Then $M$ is  $\maytaname$-secure with respect to~$\interferes$. 
If $M$ also satisfies the stronger condition~\strongpolicyconsistency, then 
$M$ is  $\musttaname$-secure with respect to~$\interferes$. 
\end{theorem}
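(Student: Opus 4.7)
The plan is to apply the unwinding techniques of Section~\ref{sec:prooftechniques}, taking the relations $\{\dynacrel u\}_{u\in \Dom}$ that already come with the structured state as the candidate unwinding. Because $\oset u \in \observe u s$ and $\contents{\oset u}{s} = \observe u s$, any $s \dynacrel u t$ forces $\observe u s = \observe u t$, so each $\dynacrel u$ is an equivalence relation and the defining condition becomes symmetric in $s$ and $t$. Condition OC$_M$ is immediate from \DRM-\ref{cond:wa:outputconsistency}, and DLR$_M$ follows from \DRM-\ref{cond:wa:allowedinterference} and \DRM-\ref{cond:wa:commonvar}: when $\alpha \models \dom(a) \noninterferes u$, any $o \in \observe u {s_0\cdot \alpha}$ lies outside $\alter{\dom(a)}{s_0\cdot\alpha}$ by the contrapositive of \DRM-\ref{cond:wa:commonvar}, so its contents are preserved by $a$; applied to $\oset u$ this also keeps the observable set stable, which is what is needed for the symmetric direction.

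The substantive step is step consistency. For $\maytaname$-security I would verify WSC$^\Diamond_M$: given $s = s_0\cdot\alpha$, $t = s_0\cdot\beta$, $s \dynacrel u t$, $s \dynacrel{\dom(a)} t$, and $\alpha, \beta \models \dom(a) \interferes u$, show $s\cdot a \dynacrel u t\cdot a$ by a case split on a generic $o \in \observe u {s\cdot a}$. In the old-observable subcase $o \in \observe u s$, \DRM-\ref{cond:wa:weakpolicyconsistency} applied with the pair $u, \dom(a)$ (licensed by the interference hypothesis) synchronizes the predicate $o \in \alter{\dom(a)} \cdot$ across $s$ and $t$; then \DRM-\ref{cond:wa:contentconsistency} or \DRM-\ref{cond:wa:allowedinterference} closes the step, using $\contents o s = \contents o t$ from $s \dynacrel u t$. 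In the newly observable subcase, \DRM-\ref{cond:wa:grantobserve} places $o \in \observe{\dom(a)} s$, and reusing \DRM-\ref{cond:wa:weakpolicyconsistency} with the reflexively interfering pair $\dom(a), \dom(a)$ synchronizes $\alter{\dom(a)}$ across $s$ and $t$; the same two conditions finish the job, this time drawing equal contents from $s \dynacrel{\dom(a)} t$. Running this case split at $o = \oset u$ additionally secures $\observe u {s\cdot a} = \observe u {t\cdot a}$.

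For $\musttaname$-security the case analysis is literally the same, but the hypotheses of WSC$_M$ no longer include $\alpha, \beta \models \dom(a) \interferes u$, so the pair $u, \dom(a)$ no longer triggers \DRM-\ref{cond:wa:weakpolicyconsistency}. This is exactly the gap that \strongpolicyconsistency fills: it unconditionally equates $\observe u s \cap \alter v s$ with $\observe u t \cap \alter v t$ whenever $s \dynacrel u t$ and $s \dynacrel v t$, which is the synchronization needed in the old-observable subcase. Once WSC$^\Diamond_M$ (respectively WSC$_M$) is verified, Theorem~\ref{thm:unwind-maysecure} (respectively Theorem~\ref{thm:unwind-mustsecure}) delivers the corresponding conclusion.

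The main obstacle I anticipate is the bookkeeping around the $\oset u$-mediated interplay between ``contents equal'' and ``observable sets equal'': one has to derive $\observe u {s\cdot a} = \observe u {t\cdot a}$ as a separate consequence (by running the object-by-object argument at $\oset u$) before the contents argument reads off symmetrically for objects observable only in $t \cdot a$. Pinpointing exactly which invocation of \DRM-\ref{cond:wa:weakpolicyconsistency} needs the interference hypothesis $\dom(a) \interferes u$, and which does not (thanks to reflexivity of the policy), is what cleanly separates the $\maytaname$ and $\musttaname$ halves of the argument.
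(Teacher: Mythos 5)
Your proposal is correct. The first half --- taking $\{\dynacrel u\}_{u\in\Dom}$ as the candidate unwinding, getting OC$_M$ from \DRM-1, DLR$_M$ from the contrapositives of \DRM-6 and \DRM-3, and WSC$^\Diamond_M$ from \DRM-4, two invocations of \DRM-5 (for the pairs $u,\dom(a)$ and $\dom(a),\dom(a)$, the latter licensed by reflexivity of the policy), and the dichotomy \DRM-2 versus \DRM-3 --- is essentially identical to the paper's argument, including the observation that $\oset u\in\observe u s$ forces $\observe u s = \observe u t$ and so makes $\dynacrel u$ behave symmetrically. Where you genuinely diverge is the $\musttaname$ half. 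You verify the unconditional condition WSC$_M$ directly, substituting \strongpolicyconsistency for \DRM-5 to synchronize the predicate $o\in\alter{\dom(a)}{\cdot}$ across the two states without any interference hypothesis, and then invoke the soundness direction of the $\Box$-unwinding technique. The paper instead takes a detour: it introduces the induced policy $\interferes^M$ defined by $\alpha\models {u\interferes^M v}$ iff $\alter u {s_0\cdot\alpha}\cap\observe v {s_0\cdot\alpha}\neq\emptyset$, shows that $M$ satisfies \DRM-1 to \DRM-6 for $\interferes^M$ (hence is $\maytaname$-secure for it), proves that $\interferes^M$ is local using \strongpolicyconsistency, applies Theorem~\ref{thm:local-must-may} to upgrade to $\musttaname$-security for $\interferes^M$, and finishes via \DRM-6, which gives ${\interferes^M}\leq{\interferes}$, and monotonicity (Proposition~\ref{prop:mustta-monotonic}). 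Your route is shorter and avoids the auxiliary policy entirely; the paper's route has the side benefit of exhibiting the local policy $\interferes^M$ that the access control table actually enforces, which connects to the locality theme and to the completeness result that follows. Both are sound: the key point --- that \strongpolicyconsistency\ supplies exactly the synchronization of the $\altername$ sets that the conditional \DRM-5 can no longer provide once the hypothesis $\dom(a)\interferes u$ is dropped --- is the same in substance in both arguments.
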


\begin{proof} 
Assume that $M$ is a system with structured state that satisfies properties \DRM-{1} to \DRM-{6}  with respect to policy $\interferes$. 
We first show that $M$ satisfies DLR$_M$ and WSC$^\Diamond_M$ and OC$_M$ with respect to the relations $\{\dynacrel u\}_{u\in \Dom} $.  

For property DLR$_M$,  suppose that $\alpha \models \dom(a) \noninterferes u$. Then by 
\DRM-\ref{cond:wa:commonvar}, we have 
$\alter {\dom(a)} {s_0\cdot \alpha} \cap \observe u {s_0\cdot \alpha} = \emptyset$. 
Thus, for all $o\in \observe u {s_0\cdot \alpha}$, we have $ o \not \in \alter {\dom(a)} {s_0\cdot \alpha}$, and by 
\DRM-\ref{cond:wa:allowedinterference} we obtain that $\contents o  {s_0\cdot \alpha } = \contents o {s_0\cdot \alpha a}$. 
This shows that  $s_0\cdot \alpha a\dynacrel{u} s_0 \cdot \alpha $, as required. 

For WSC$^\Diamond_M$, suppose that 
$\alpha \models \dom(a) \interferes u$ and $\beta \models \dom(a) \interferes u$ and 
$s_0\cdot \alpha  \dynacrel{u} s_0\cdot \beta $ and $s_0\cdot \alpha \dynacrel{\dom(a)} s_0\cdot \beta$. 
We need to show that ${s_0 \cdot \alpha a} \dynacrel{u} {s_0 \cdot \beta a}$, 
 i.e.,  that for all  $o \in \observe u {s_0\cdot \alpha a}$, we have  $\contents o {s_0\cdot \alpha a} = \contents o {s_0 \cdot \beta a}$. 
 Let $o \in \observe u {s_0\cdot \alpha a}$. By \DRM-\ref{cond:wa:grantobserve}
we have that $o \in \observe u {s_0\cdot \alpha } \cup \observe {\dom(a)} {s_0\cdot \alpha}$. 
 From $s_0\cdot\alpha  \dynacrel{u} s_0\cdot\beta$ and $s_0\cdot\alpha \dynacrel{\dom(a)} s_0\cdot\beta$ we obtain 
 that $\observe u {s_0\cdot\alpha}  = \observe u {s_0\cdot\beta}$ and $\observe  {\dom(a)} {s_0\cdot\alpha} = \observe {\dom(a)} {s_0\cdot\beta}$, 
 and for all $o \in \observe u {s_0\cdot\alpha} \cup \observe {\dom(a)} {s_0\cdot\alpha}$ we have $\contents o {s_0\cdot\alpha} = \contents o {s_0\cdot\beta}$.  
 By two applications of \DRM-\ref{cond:wa:policyconsistency} (one for the pair of domains $u$, $\dom(a)$, and 
 the other for the pair $\dom(a)$ ,$\dom(a)$), we obtain that $o \in \alter {\dom(a)} {s_0\cdot\alpha}$ iff $o \in \alter {\dom(a)} {s_0\cdot\beta}$. 
 It therefore suffices to consider two cases: 
\begin{itemize} 
\item Suppose $ o \in \alter {\dom(a)} {s_0\cdot\alpha} \cap \alter {\dom(a)} {s_0\cdot\beta}$. Then by \DRM-\ref{cond:wa:contentconsistency}
we get $\contents o {s_0\cdot\alpha a }  = \contents o {s_0\cdot\beta a}$.  
 \item Suppose $ o \not \in  \alter {\dom(a)} {s_0\cdot\alpha}$ and $o \not \in  \alter {\dom(a)} {s_0\cdot\beta}$. Then by \DRM-\ref{cond:wa:allowedinterference}
 we have $\contents o {s_0\cdot\alpha a }  = \contents o {s_0\cdot\alpha} = \contents o {s_0\cdot\beta} = \contents o {s_0\cdot\beta a}$. 
 \end{itemize}  

The property OC$_{M}$ is trivial, since this is just \DRM-\ref{cond:wa:outputconsistency}. 
We now have that $M$ satisfies DLR$_M$  and WSC$^\Diamond_M$ and OC$_M$.  By Theorem~\ref{thm:unwind-maysecure}, 
it follows that $M$ is $\maytaname$-secure with respect to $\interferes$.

Suppose that $M$ also satisfies the stronger condition~\strongpolicyconsistency. 
We show that $M$ is  $\musttaname$-secure with respect to $\interferes$. Consider the policy $\interferes^M$  defined by $\alpha \models {u \interferes^M v}$ if 
$\alter u {s_0\cdot \alpha } \cap \observe {v} {s_0\cdot \alpha} \neq \emptyset$. It is easily checked that $M$ satisfies 
\DRM-{1} to \DRM-{6}  with respect to policy $\interferes^M$, so by the above, we conclude that $M$ is  $\maytaname$-secure with respect to $\interferes^M$.
We claim that the policy $\interferes^M$ is local. This being the case, we obtain using Theorem~\ref{thm:local-must-may}  that $M$ is  $\musttaname$-secure with respect to $\interferes^M$.
By \DRM-\ref{cond:wa:commonvar}, we have that ${\interferes^M} \leq {\interferes}$, so by Proposition~\ref{prop:mustta-monotonic} we get that $M$ is  $\musttaname$-secure with respect to $\interferes$. 

To show that $\interferes^M$ is local, define the equivalence relations $\{\sim_u\}_{u\in \Dom}$ on $A^*$ by $\alpha \sim_u\beta$ iff $s_0\cdot \alpha \dynacrel u s_0\cdot \beta$. 
It follows from the fact that $M$ satisfies DLR$_M$ with respect to $\{\dynacrel u\}_{u\in \Dom}$ that $M$ satisfies DLR with respect to $\{\sim_u\}_{u\in \Dom}$. 
An argument similar to that above for WSC$^\Diamond_M$, but using \strongpolicyconsistency in place of \DRM-\ref{cond:wa:policyconsistency}, shows that 
$M$ also satisfies WSC with respect to   $\{\sim_u\}_{u\in \Dom}$. Thus, for the smallest equivalence relations $\{\unw_u\}_{u\in \Dom}$ 
satisfying WSC and DLR with respect to $\interferes^M$, we have ${\unw_u} \subseteq {\sim_u}$ 
for all $u\in \Dom$. Suppose that $\mayta \alpha u = \mayta \beta u$ and $\mayta \alpha v = \mayta \beta v$ (with respect to policy $\interferes^M$). 
We need to show that $\alpha \models {u \interferes^M v}$ iff $\beta \models {u \interferes^M v}$. 
By Lemma~\ref{mayta-cont-unwind}, we have that $\alpha \unw_u \beta$ and $\alpha \unw_v\beta$. 
By what was shown above, we obtain that $\alpha \sim_u \beta$ and $\alpha \sim_v\beta$, 
so $s_0\cdot \alpha \approx_u s_0\cdot \beta$ and $s_0\cdot\alpha \approx_v s_0\cdot \beta$. 
By \strongpolicyconsistency we now obtain that 
$\observe u {s_0\cdot \alpha} \cap \alter v {s_0\cdot \alpha} =  \observe u {s_0\cdot \beta} \cap \alter v {s_0\cdot \beta}$. 
It follows that $\alpha \models {u \interferes^M v}$ iff $\beta \models {u \interferes^M v}$, as required. 
\end{proof} 

Theorem~\ref{thm:ac-secure} shows that access control mechanisms 
provide an implementation method that guarantees 
that a system constructed using this method is secure. 
It states a soundness result: any system that satisfies the 
reference monitor conditions with respect to a policy is secure for that 
policy.

It is also possible to prove a converse to this result, stating that 
any system that is secure with respect to a policy 
could have been constructed using access control 
mechanisms so as to satisfy the policy. To obtain such a 
result, we need to allow consideration of 
a bisimilar system. 
In particular, we focus on the 
\emph{unfolding} of a system $M$. 

Say that a system \emph{admits an access control interpretation consistent with a policy 
$\interferes$}, if it can be extended into a system with structured state by adding
definitions of $O, V,  \contentsnme, \observename$ and $\altername$ in such a way as to 
satisfy conditions \DRM-1 to \DRM-6.

\begin{theorem} 
\label{thm:ac-complete} 
Let system $M$ be  $\maytaname$-secure   \wrt $\interferes$. 
Then $\unwound(M)$ admits an access control interpretation consistent with $\interferes$. 

If $M$ is $\musttaname$-secure  \wrt $\interferes$, 
then 
there exists a local policy ${\interferes'} \leq {\interferes}$ such that 
$\unwound(M)$ admits an access control interpretation consistent with $\interferes'$,
that additionally satisfies \DRM-$5'$. 
\end{theorem}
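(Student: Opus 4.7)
The plan is to build an explicit structured-state interpretation of $\unwound(M)$ whose read and write sets mirror the recursive definition of $\maytaname$. Take $\Objects = \Dom \cup \Osets(\Dom)$, and for $w \in \Dom$ set $\contents{w}{\alpha} = \mayta{\alpha}{w}$, while for the meta-objects set $\contents{\oset{u}}{\alpha} = \observe{u}{\alpha}$ as required. Declare $\observe{u}{\alpha} = \{u, \oset{u}\}$ (independent of $\alpha$) and $\alter{v}{\alpha} = \{w \in \Dom \mid \alpha \models v \interferes w\}$. Under this choice $\alpha \dynacrel{u} \beta$ is equivalent to $\mayta{\alpha}{u} = \mayta{\beta}{u}$. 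With this in hand, DRM-1 is exactly $\maytaname$-security of $M$; DRM-3 is immediate from the $\maytaname$ recursion (content changes under $a$ only when $\alpha \models \dom(a) \interferes w$, and meta-objects never change); DRM-4 is trivial because $\observe{u}{\cdot}$ is constant. For DRM-5 and DRM-6, the intersection $\observe{u}{\alpha} \cap \alter{v}{\alpha}$ equals $\{u\}$ when $\alpha \models v \interferes u$ and $\emptyset$ otherwise, so DRM-6 holds directly and DRM-5 is trivial under its hypothesis that both traces see the edge.

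The main obstacle is DRM-2, which is precisely where the shape of the $\maytaname$ recursion is needed. Suppose $w \in \alter{\dom(a)}{\alpha} \cap \alter{\dom(a)}{\beta}$, $\alpha \dynacrel{\dom(a)} \beta$, and $\contents{w}{\alpha} = \contents{w}{\beta}$. For $w \in \Dom$ this packages together the four facts $\alpha \models \dom(a) \interferes w$, $\beta \models \dom(a) \interferes w$, $\mayta{\alpha}{\dom(a)} = \mayta{\beta}{\dom(a)}$, and $\mayta{\alpha}{w} = \mayta{\beta}{w}$, so the same clause of the $\maytaname$ recursion fires on both sides and
\[
\mayta{\alpha a}{w} \;=\; (\mayta{\alpha}{w}, \mayta{\alpha}{\dom(a)}, a) \;=\; \mayta{\beta a}{w}.
\]
The common-membership condition on $\alter{\dom(a)}{\cdot}$ is essential here: it rules out precisely the asymmetric situation flagged in Example~\ref{ex:conflict-ta} in which $\maytaname$ can evolve differently along otherwise indistinguishable traces. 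The case $w = \oset{u}$ is immediate from constancy of $\observe{u}{\cdot}$.

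For the second part, I would first invoke Lemma~\ref{lem:restrict-to-local} to obtain a local policy ${\interferes'} \leq {\interferes}$ equivalent to $\interferes$ for $\musttaname$-security, so that $M$ is $\musttaname$-secure w.r.t. $\interferes'$. Locality plus Theorem~\ref{thm:local-must-may} upgrades this to $\maytaname$-security w.r.t. $\interferes'$, and applying the construction above with $\interferes'$ in place of $\interferes$ yields DRM-1 through DRM-6 for the target policy. To obtain the stronger DRM-$5'$, note that under $\interferes'$ the hypotheses $\alpha \dynacrel{u} \beta$ and $\alpha \dynacrel{v} \beta$ translate, via the identification of $\dynacrel{u}$ with the $\maytaname$-induced equivalence, into the pair of relations in the definition of locality. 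Locality of $\interferes'$ then forces $\alpha \models u \interferes' v$ iff $\beta \models u \interferes' v$, which as in the DRM-5 analysis equates $\observe{u}{\alpha} \cap \alter{v}{\alpha}$ with $\observe{u}{\beta} \cap \alter{v}{\beta}$ even without assuming the edge holds at either trace.
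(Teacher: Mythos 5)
Your proposal is correct and follows essentially the same route as the paper: the identical structured-state interpretation (contents given by $\maytaname$, constant observe sets, alter sets read off the policy), the same verification of DRM-1 through DRM-6, and the same reduction of the second part via Lemma~\ref{lem:restrict-to-local} and Theorem~\ref{thm:local-must-may}, with locality yielding DRM-$5'$.
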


\begin{proof}
Suppose first that $M$ is $\maytaname$-secure with respect to $\interferes$.  
Extend $\unwound(M)$ to a system with structured state by the following definitions: 
 \begin{align*}
  & \Objects  = \Dom \cup \Osets(D) \\
  & \observe u {\alpha} = \{ u, \oset{u} \} \\
  & \alter u {\alpha} = \{ v \in \Dom \mid \alpha \models u \interferes v \} \\
  & \contents x {\alpha} = 
  \begin{cases}
   \mayta \alpha x & \text{if } x \in \Dom \\
   \{u, \oset{u}\} 
   & \text{if } x = \oset{u} 
  \end{cases}   
 \end{align*}
We will show that the dynamic reference monitor conditions \DRM-1 to \DRM-6 hold. 
 \begin{enumerate}
  \item[\DRM-\ref{cond:wa:outputconsistency}:]
    Let ${ \alpha} \dynacrel{u} { \beta}$. 
    Then we have $u \in \observe u { \alpha} = \observe u { \beta}$ and $\mayta \alpha u = \contents u { \alpha} = \contents u { \beta} = \mayta \beta u$. 
    It now follows by $\maytaname$-security of $M$ that $\obs_u( \alpha) = \obs^M_u( s_0\cdot \alpha) = \obs^M_u(s_0\cdot \beta) = \obs_u( \beta)$. 
  \item [\DRM-\ref{cond:wa:contentconsistency}:]
    Let $o \in \alter{\dom(a)}{ \alpha} \cap \alter{\dom(a)}{ \beta}$ and ${ \alpha} \dynacrel{\dom(a)} { \beta}$ and $\contents o { \alpha} = \contents o { \beta}$.     
    From ${ \alpha} \dynacrel{\dom(a)} { \beta}$ and the fact that, by definition,  $\dom(a) \in \observe {\dom(a)} \alpha$ we obtain $\mayta \alpha {\dom(a)} = \mayta \beta {\dom(a)}$. 
    From $o \in \alter{\dom(a)}{ \alpha} \cap \alter{\dom(a)}{ \beta}$ it follows that $o\in D$ and $\alpha \models {\dom(a) \interferes o}$ and $\beta \models {\dom(a) \interferes o}$. 
    From $\contents o { \alpha} = \contents o { \beta}$ it follows that $\mayta \alpha o = \mayta \beta o$.  By the definition of the $\maytaname$-operator, we obtain $\mayta {\alpha a} o = \mayta {\beta a} o$. 
  \item[\DRM-\ref{cond:wa:allowedinterference}:] We prove the contrapositive. 
    Suppose $o \notin \alter{\dom(a)}{ \alpha}$. 
    We can assume that $o \in \Dom$, since the values of objects in $\Osets$ do not change. 
    Thus, we have $\alpha \models {\dom(a) \interferes o}$. 
    Then we have $\mayta {\alpha a} o = \mayta {\alpha} o$ and hence $\contents o { \alpha a} = \contents o { \alpha}$.

  \item[\DRM-\ref{cond:wa:grantobserve}:]  
    Since the function $\observename$ is constant, we have $\observe u {\alpha \cdot a} \setminus \observe u \alpha = \emptyset$. 
  \item[\DRM-\ref{cond:wa:policyconsistency}:] 
   By definition, we have that 
    \begin{align*}
     \observe u { \alpha} \cap \alter v { \alpha} & = 
     \begin{cases}
      \{u \} & \text{if } \alpha \models {v \interferes u} \\
      \emptyset & \text{otherwise}
     \end{cases} \\
      \observe u { \beta} \cap \alter v { \beta} & = 
     \begin{cases}
      \{u \} & \text{if } \beta \models {v \interferes u} \\
      \emptyset & \text{otherwise}
     \end{cases}
    \end{align*}
    It is immediate that if 
    $\alpha \models {v \interferes u}$ and $\alpha \models {v \interferes u}$ then 
    $\observe u { \alpha} \cap \alter v { \alpha} = \{u\} =  \observe u { \beta} \cap \alter v { \beta}$. 
    (We remark that we do not need the antecedents ${ \alpha} \dynacrel{u} { \beta}$ and ${ \alpha} \dynacrel{v} { \beta}$ of  \DRM-\ref{cond:wa:policyconsistency}, so we 
    have actually proved something stronger.) \\
  \item[\DRM-\ref{cond:wa:commonvar}:] We prove the contrapositive. 
    Suppose $\alpha \models {u \noninterferes v}$. 
    Then clearly 
    $\alter u { \alpha} \cap \observe v \alpha = \{ w \in \Dom \mid \alpha \models {u \interferes w} \} \cap \{ v, \oset{v} \} = \emptyset$. 
 \end{enumerate}
 
Next, suppose that $M$ is $\musttaname$-secure with respect to $\interferes$. 
By Lemma~\ref{lem:restrict-to-local} there exists a local policy ${\interferes'} \leq {\interferes}$ such that $M$ is $\musttaname$-secure with respect to $\interferes'$. 
By Theorem~\ref{thm:local-must-may}, $M$ is $\maytaname$-secure with respect to $\interferes'$. 
Hence, we may apply the above construction with respect to $\interferes'$ instead of $\interferes$, 
to obtain an access control interpretation consistent with $\interferes'$. 
We show that the stronger condition \strongpolicyconsistency\ also holds in this case. 
 Let ${ \alpha} \dynacrel{u} { \beta}$ and ${ \alpha} \dynacrel{v} { \beta}$. 
    Then we have $\mayta \alpha u  = \mayta \beta u$ and $\mayta \alpha v = \mayta \beta v$, i.e., 
    $\alpha \sim^\Diamond_u\beta$ and      $\alpha \sim^\Diamond_v\beta$. We may work with the 
    relations $\sim^\Diamond$ in the definition of locality, so 
    from locality of $\interferes'$, it  follows that $\alpha \models v \interferes' u$ iff $\beta \models v \interferes' u$. 
    If $\alpha \models v \interferes' u$ and $\beta \models v \interferes' u$ then 
      $\observe u { \alpha} \cap \alter v { \alpha} = \{u\}  = \observe u { \beta} \cap \alter v { \beta}$. 
      Alternately, if $\alpha \models v \noninterferes' u$ and $\beta \models v \noninterferes' u$ then 
      $\observe u { \alpha} \cap \alter v { \alpha} = \emptyset = \observe u { \beta} \cap \alter v { \beta}$. 
      That is, in either case, we have the desired result. 
\end{proof}

Theorem~\ref{thm:ac-complete} can be understood as 
saying that every system that is secure with respect to a local policy could have been constructed 
using access control mechanisms satisfying the reference monitor 
conditions, in such a way that its security follows using Theorem~\ref{thm:ac-secure}. 
The phrase ``could have been constructed" here is formally interpreted as 
allowing bisimilar systems as equivalent alternate implementations. 
Thus, this result states that the reference monitor conditions provide
both a sound and complete methodology for the construction of 
systems that are secure with respect to local policies. 
This result, and the intuitive nature of the dynamic reference monitor conditions, 
provides further  justification for our definitions. 

\section{Example: a capability system} 
\label{sec:flume}

In the operating systems literature, {\em capabilities} are
a commonly used representation for access control. They
provide a conceptual model that allows for dynamic security 
policies, by allowing an agent possessing a capability to 
transmit it to another agent, thereby granting the 
receiver the powers associated with the capability. 
A number of works on {\em Decentralized Information Flow Control} 
systems \cite{Krohn2009,Myers1997,Vandebogart2007,ZeldovichBKM06} have proposed capability-based systems as a mechanism
for implementing dynamic information flow policies. 
Many of the proposals in the literature  contain covert channels.
In this section, we  develop a simplified model that is in the spirit of  these proposals, 
but tailored to be secure with respect to an associated policy. (Our model was
motivated by a consideration of Flume \cite{Krohn2009}, but we have taken
significant liberties in order to obtain a system that satisfies the desired result.
We discuss the motivation for our changes to Flume in Section~\ref{sec:flume-comments}.)

\subsection{States and Observations} 

We assume a set $\processes$ of processes, each corresponding to a domain, 
so $\Dom = \processes$. There is a fixed set $\btags$ of \emph{basic tag names}, 
from which the set $\tags = \btags \cup \{\ptag n p ~|~n \in \btags,~ p \in \processes\}$ of \emph{tags}
is generated. Here, each tag $t\in \btags$ is a tag that exists at system startup, and 
$\ptag n p$ is a tag that is labelled with the process $p$ that may create it.   A security level
is represented as a subset of $\tags$. 
A \emph{capability} is a value of the form either $t^+$ or $t^-$, where $t$ is a tag. 
Intuitively, $t^+$ represents the capability to add tag $t$ to a security level, and 
$t^-$ represents the capability to remove tag $t$ from a security level. 
We write $\Caps$ for the set of all capabilities.

Each process $p$ has an associated set $\Obj_p = \Data_p \cup \PolicyState_p$  of 
objects.  
The set $\PolicyState_p$ represents process $p$'s component of the policy, and 
consists of the following special objects, each with an associated type (of possible values for the object) 
\begin{itemize} 
\item $S_p$ of type ${\cal P}(\tags)$: the value of this object is a set of tags, and
intuitively represents the types of information that process $p$ may receive. 
The set $S_p$ is called the \emph{secrecy} set of process $p$, and 
represents its security level. 
\item $O_p$ of type ${\cal P}(\Caps)$  the value of this object is a set of capabilities. 
The set $O_p$ is called the \emph{capability set} of process $p$, and represents the 
set of powers that the process has to add and remove tags from its security level.  

\end{itemize} 
The set $\Data_p$ is the set of data objects that are local to process $p$. 
We assume that the set $\Data_p$ contains at least the following: 
\begin{itemize} 
\item $m_p$, an outgoing message buffer, used by $p$ to construct a message
prior to its transmission
\item $in_p$, of type a sequence of messages. This is the input buffer through which 
$p$ receives messages from other processes. 
\end{itemize} 

A {\em state} is a function $s$  assigning a value of the appropriate type to each 
object in $\bigcup_{p\in \processes} \Obj_p$.  We write $S$ for the set of states. 
A {\em candidate initial state} is a state $s$ such that 
for all processes $p$, we have $O_p \subseteq \{ n^x ~|~n \in \btags, x\in \{+,-\}\}$
and $S_p \subseteq \btags$. That is, in an initial state, only basic tags exist; tags local 
to a process need to be created as the system runs. 
We allow the initial state of a system to be any candidate initial state: different 
choices of initial state correspond to different initial 
configurations of the security policy. 

The observation functions of a capability system may  be defined as any
functions that satisfy the following constraint, for each process $p$ and all states $s$ and $t$: 
\begin{itemize}
\item[{\tt Obs}.]  If $s\restrict \Obj_p = t\restrict \Obj_p$ then $\obs_p(s) = \obs_p(t)$. 
\end{itemize} 
That is, observations depend, for each process, only on its own objects.

Given the intuition that the set $S_p$ represents the types of 
information that process $p$ may receive. Note that if $S_p$ contains
a tag $t$ then $p$ may have received information of type $t$. 
If $t$ is not in $S_q$ for some other process $q$, then there is a risk that if $p$ 
transfers information to $q$ then $q$ will obtain information of type $t$, 
which is prohibited. This suggests that with each state $s$, we may naturally 
associate a static policy $\interferes$, such that $p \interferes q$ iff $S_p(s) \subseteq S_q(s)$. 
In what follows, it is convenient to write $s\models S_p \subseteq S_q$ when 
$s(S_p) \subseteq s(S_q)$.

\subsection{Actions and Transitions}

The following actions are possible in the system. 
We write actions in the format $p.a$, where $p\in \processes$ is a
process, and define the domain of action $p.a$ to be 
$\dom(p.a) = p$.

\newcommand{\addcap}[3]{#1.{\sf add\_cap}(#2,#3)}
\newcommand{\addtag}[2]{#1.{\sf add\_tag}(#2)}
\newcommand{\removetag}[2]{#1.{\sf remove\_tag}(#2)}
\newcommand{\sendmessageto}[2]{#1.{\sf send\_message\_to}(#2)}
\newcommand{\sendcapto}[3]{#1.{\sf send\_cap}(#2,#3)}
\newcommand{\dropcap}[2]{#1.{\sf drop\_cap}(#2)}
%


\begin{description}
\item[Data actions $p.a$] ~\\ We assume that $p$ has some set of data manipulating actions 
that act on its data objects. This set of actions can be instantiated in any way, subject to the 
constraint that these actions read only the local state of the process and write only the data objects of the process. 
More precisely, for any data action $p.a$, 
we assume the following, for all states $s$ and $t$:  
\begin{itemize} 
\item[{\tt Data1}.]  For all objects $x\not\in \Data_p$, $(s\cdot p.a)(x) = s(x)$. 
\item[{\tt Data2}.]  If $s(x) = t(x)$ for all $x\in \Obj_p$, then $(s\cdot p.a)(x) = (t\cdot p.a)(x)$ for all $x\in \Data_p$. 
\end{itemize}  
Constraint {\tt Data1} says that data actions of process $p$ change only $p$'s data objects, 
and no others. Constraint {\tt Data2} says that the way that such an action changes $p$'s 
data objects depends only on the value of $p$'s data and policy state objects. That is, a data action $p.a$ 
is deterministic and may read only objects in $\Obj_p$ and write only objects in $\Data_p$. \\

\item[$\addcap p c n$] where $n\in \btags$ and  $c\subseteq \{+, -\}$ with code
  \begin{tabbing} 
\quad   $O_p := O_p \cup \{\ptag n p^x~|~x\in c\}$ 
  \end{tabbing} 
 Intuitively, this action creates a new tag $\ptag n p$, if it does not already exist, and grants $p$  
 the rights $c$ over this tag, by adding the capabilities 
 $ \ptag n p^x$ for $x\in c$ to $p$'s capability set $O_p$. 
(If the tag already exists, the operation may be used to add capabilities   
to the existing ones.) Note that the fact that the tag created is indexed by the process $p$ 
means that two distinct processes cannot create a tag with the same name.\\
  
 \item[$\dropcap{p}{c}$]  where $c\in \Caps$, with code  
    \begin{tabbing} 
\quad $O_p := O_p \setminus \{c\} $ 
  \end{tabbing} 
This action removes the capability $c$ from $p's$ capability set, if
present. \\

 \item[$\addtag p t$] where $t\in \tags$, with code  
  \begin{tabbing} 
 \quad  If $t^+ \in O_p $ \=then $ S_p := S_p \cup \{t\}$ 
   \end{tabbing} 
  This action adds a tag to the secrecy set of the process, 
  provided it has the capability to do so. A consequence of this is
  that some outgoing edges from $p$ may be removed from the policy, 
  and some ingoing edges may be added. \\
  
 \item[$\removetag p t$]  where $t \in \tags$, with code 
    \begin{tabbing} 
  \quad If $t^- \in O_p$ \=then $ S_p := S_p \setminus \{t\} $  
  \end{tabbing} 
 This action removes a tag from the secrecy set of the process, provided it has the 
 capability to do so, and 
 consequently might add outgoing policy edges from $p$ or insert incoming edges.  
 Intuitively, by removing the tag, $p$ enables declassification of information of type $p$; 
 whereas it may have received information of  type $t$, the policy is no longer 
 constrained to restrict $p$ from communicating with processes that are 
 not permitted to receive information of type $t$. \\

 \item[$\sendmessageto p q$]   where $q$ is a process,  with code  
  \begin{tabbing} 
  \quad If $S_p \subseteq  S_q$ \=then $ in_q := in_q \cdot m_p$  
  \end{tabbing} 
This action transmits the message in the message object $m_p$ to $q$ by adding it
to the end of $q's$ input buffer $in_p$, 
provided the policy permits transmission of information from $p$ to $q$ 
(i.e., $S_p \subseteq  S_q$). The action has no effect on the policy, even 
if the message contains information about $p$'s tags and/or capabilities. 
Intuitively, we assume that the operating system is responsible for maintaining
the policy state, and mere knowledge of a capability name or value does not
suffice to be able to exercise the capability. In order to transmit the ability to exercise
capabilities, the following action needs to be used. \\

 \item[$\sendcapto p c q$]  where $c \in \Caps$ and $q$ is a process,  with code
  \begin{tabbing} 
  \quad If $S_p \subseteq  S_q$ and $c \in O_p$ \=then $O_q := O_q \cup \{c\} $ .  
  \end{tabbing}
  This action transmits the capability $c$ to $q$, 
  provided both that $p$ is permitted to transmit information to $q$ (i.e., $S_p \subseteq  S_q$)
  and $p$ actually possesses the capability (i.e., $c\in O_p$). 
 The effect of this action may be to remove outgoing edges from $q$
 or add incoming edges to $q$. 
 \end{description}

\subsection{Security of the capability model} 

We now establish the security of any capability system constructed as 	described above. 
Note that there are multiple such systems, since the description 
above leaves open the choice of  processes, basic tag names, data objects, data actions, initial state
and observation functions. It also remains to specify a dynamic security policy. 
Given a capability system $M$ with actions $A$ and processes $\processes$, and initial state $s_0$, we say that the 
\emph{associated policy} is the policy $\interferes^M$ defined by 
$\alpha \models {p \interferes^M q}$ when $s_0\cdot \alpha \models S_p \subseteq S_q$. 
 The following states that a capability system is secure with respect to the 
the associated security policy. 

\begin{theorem} 
\label{thm:capsystem-associated} 
Let $M$ be a capability system. Then the associated policy  $\interferes^M$ is local, and  
the system $M$ is both $\maytaname$-secure and $\musttaname$-secure
with respect to $\interferes^M$. 
\end{theorem}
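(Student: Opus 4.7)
The plan is to apply Theorem~\ref{thm:ac-secure} by equipping the capability system $M$ with an access control structure and verifying the dynamic reference monitor conditions \DRM-1 through \DRM-6, together with the stronger locality variant \DRM-5$'$. This will directly yield $\musttaname$-security with respect to $\interferes^M$, while locality of $\interferes^M$ and $\maytaname$-security then follow via Theorem~\ref{thm:local-must-may}.

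First I would extend $M$ to a system with structured state. I take the object set to be $\Objects = \bigcup_{p \in \processes} \Obj_p \cup \{\oset{p} \mid p \in \processes\}$, with $\contents{o}{s} = s(o)$ for $o \in \Obj_p$ and $\contents{\oset{p}}{s} = \Obj_p \cup \{\oset{p}\}$. I set $\observe{p}{s} = \Obj_p \cup \{\oset{p}\}$, which is independent of $s$ and consistent with the constraint {\tt Obs}. For $\altername$, I capture both the unconditional local writes and the conditionally guarded inter-process writes:
\[
\alter{p}{s} = \Data_p \cup O_p \cup \{S_p\} \cup \bigcup_{q \neq p,\, s \models S_p \subseteq S_q} \{in_q, O_q\}.
\]
This reflects that $p$ may always modify its own data and policy state components, but may only write to $in_q$ (via $\sendmessageto{p}{q}$) or $O_q$ (via $\sendcapto{p}{c}{q}$) when the runtime guard $S_p \subseteq S_q$ holds.

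Next I would verify the reference monitor conditions. \DRM-1 follows from {\tt Obs}; \DRM-4 is immediate since $\observename$ is constant in $s$; and \DRM-6 holds because the $\Obj_p$'s are pairwise disjoint, so for $u \neq v$ the intersection $\alter{u}{s} \cap \observe{v}{s}$ is non-empty exactly when $\{in_v, O_v\} \subseteq \alter{u}{s}$, which forces $s \models S_u \subseteq S_v$ and hence $\alpha \models u \interferes^M v$ where $s = s_0 \cdot \alpha$ (the case $u = v$ is handled by reflexivity). For \DRM-5$'$, note that $\alter{v}{s} \cap \observe{u}{s}$ is determined by the values of $S_u$ and $S_v$ in $s$, which lie in $\Obj_u$ and $\Obj_v$ respectively and hence are invariant under $s \dynacrel{u} t$ and $s \dynacrel{v} t$. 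The real work lies in \DRM-3 and \DRM-2: for each of the seven action schemas I would check that its effect on any object $o$ is (i) localized to $o \in \alter{\dom(a)}{s}$ and (ii) determined by $s$ restricted to $\Obj_{\dom(a)} \cup \{o\}$. Condition {\tt Data2} covers intra-process data actions; the $\addcap$, $\dropcap$, $\addtag$, $\removetag$ schemas only touch $O_p$ or $S_p$, with updates that read only from $\Obj_p$; and the guards $S_p \subseteq S_q$ on $\sendmessageto$ and $\sendcapto$ ensure precisely that $in_q$ or $O_q$ lies in $\alter{p}{s}$ whenever it changes.

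Finally, I would note that $\interferes^M$ coincides with the access-control-derived policy $\alpha \models u \interferes v$ iff $\alter{u}{s_0 \cdot \alpha} \cap \observe{v}{s_0 \cdot \alpha} \neq \emptyset$ used within the proof of Theorem~\ref{thm:ac-secure}, since for $u \neq v$ both reduce to $s_0 \cdot \alpha \models S_u \subseteq S_v$. Theorem~\ref{thm:ac-secure} then gives $\musttaname$-security of $M$ with respect to $\interferes^M$, and its proof under \DRM-5$'$ establishes that the trace lifts of $\{\dynacrel{u}\}$ satisfy WSC and DLR, whence they contain $\{\unw_u\}$ by minimality; since $S_p$ and $S_q$ are $\dynacrel{p}$- and $\dynacrel{q}$-invariant, locality of $\interferes^M$ follows immediately, and $\maytaname$-security is then a consequence of Theorem~\ref{thm:local-must-may}. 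The principal obstacle is the uniform case analysis for \DRM-2 and \DRM-3 across the action set; once the guards on the inter-process actions are matched to the conditional structure of $\altername$, the remaining verifications are routine.
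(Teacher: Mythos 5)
Your proposal is correct and follows essentially the same route as the paper: the same access control interpretation (identical choices of $\Objects$, $\observename$, $\altername$, $\contentsnme$), the same verification of \DRM-1 through \DRM-6 plus \DRM-$5'$ via a case analysis over the action schemas, and the same appeal to Theorem~\ref{thm:ac-secure}. The only cosmetic difference is that you derive $\maytaname$-security from locality via Theorem~\ref{thm:local-must-may} rather than reading it off directly from \DRM-1 to \DRM-6 in Theorem~\ref{thm:ac-secure}, which is equally valid.
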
 

\begin{proof*} 
We show that $M$ admits an access control interpretation consistent with $\interferes^M$ 
and apply Theorem~\ref{thm:ac-secure}. 

For the set of objects of the access control interpretation, we take
$\Objects = \cup_{p \in \processes} \Obj_p \cup \{ \oset{p}\} $. 
(Since both the access control model and the capability model use a notion of 
\emph{object}, and these are not quite the same when relating the two models,
the reader needs to bear in mind that $\Objects$ refers to the objects in the acccess control 
model and $\Obj$ refers to objects in the capability model.) 
We define the access control functions for states $s$ and processes $p$ by 
$\observe p s  = \Obj_p \cup \{ \oset{p}\}$ 
and $\alter p s= \Obj_p \cup \{ in_q, O_q~|~s\models S_p \subseteq S_q\}$. 
For each state $s$ and object $o \in \cup_{p \in \processes} \Obj_p$, 
we define $\contents o s  = s(o)$, and (as is necessary) for $o = \oset{p}$, 
we define $\contents o s  = \observe s p$. 
We need to show that conditions \DRM-1 to \DRM-4, \DRM-5' and \DRM-6 are satisfied 
by these definitions. 

Note that, by definition of $\dynacrel{p}$, and the definitions above, we have 
$s \dynacrel{p} t $ iff  for all $o\in \Obj_p$ we have $s(o) = t(o)$
(the contents of $\oset{p}$ are static). 

Additionally, $o \in \alter p s \cap \observe q s$ iff 
$p = q$ and $o \in \Obj_p$ or 
$p \neq q$ and $s\models S_p \subseteq S_q$ and $o \in \{in_q, O_q\}$.  
    
\begin{enumerate}
\item[\DRM-\ref{cond:wa:outputconsistency}:]
  This is immediate from the above characterization of $\dynacrel{p}$ and the constraint 
  {\tt Obs}. 

\item [\DRM-\ref{cond:wa:contentconsistency}:]
Suppose $o \in \alter{\dom(a)} s \cap \alter{\dom(a)} t$ and  $s \dynacrel{\dom(a)} t$
and  $\contents o s = \contents o t$. We need to show that $\contents o {s\cdot a} = \contents o {t \cdot a}$. 
By definition, we have $o \in \alter{\dom(a)} s \cap \alter{\dom(a)} t$ 
just when either $o \in \Obj_{\dom(a)}$ or $o \in \{in_q, O_q\} $ for some process $q$ such that 
$s\models S_{\dom(a)} \subseteq S_q$. 
\From $s \dynacrel{\dom(a)} t$ we have that $s\restrict \dom(a) = t \restrict \dom(a)$, 
and from $\contents o s = \contents o t$ we have 
$s(o) = t(o)$. 

We argue that  $\contents o {s\cdot a} = \contents o {t \cdot a}$, 
i.e., $(s\cdot a)(o) = (t\cdot a)(o)$, for each possibility for the action $a$. 
\begin{itemize} 
\item If $a$ is a data action $p. b$ (so that $p = \dom(a)$), then
if $o \notin Data_{\dom(a)}$, we have by {\tt Data1} that $(s\cdot a)(o) = s(o) = t(o) = (t\cdot a)(o)$.
In the case $o \notin Data_{\dom(a)}$, we have $o \in \Obj_{\dom(a)}$, and
$(s\cdot a)(o)  = (t\cdot a)(o)$ follows by {\tt Data2}. 

\item The code for each of the actions $\addcap p c n$, $\dropcap{p}{c}$, 
$\addtag p t$, $\removetag p t$, reads only objects in $\Obj_p$ and 
writes only objects in $\Obj_p$. If $a$ is one of these actions, then
$p = dom(a)$ and  the claim holds. 

\item 
The only object that can be changed by the code for the action
$a = \sendmessageto p q$ is $in_q$, so this is the only case of $o$ that 
we need to consider. If $\dom(a) = p = q$, then the condition $S_p \models S_q$ is a 
tautology, so the code performs the assignment $in_p := in_p \cdot m_p$ 
in both states $s$ and $t$. If $\dom(a) = p \neq q$, then from 
$o = in_q \in \alter{\dom(a)} s \cap \alter{\dom(a)} t$ we have both 
$s\models S_p \subseteq S_q$ and $t\models S_p \subseteq S_q$. 
Thus, again, the code takes the same branch in both cases, 
and since $m_p \in \Obj_p$ we have $s(m_p) = t(m_p)$ 
and we have $(s\cdot a)(in_q) =s(in_q) \cdot  s(m_p) = t(in_q) \cdot  t(m_p) = (t\cdot a)(in_q)$, 
as required. 

The argument for the action $\sendcapto p c q$ is similar. 
\end{itemize} 

\item[\DRM-\ref{cond:wa:allowedinterference}:] 
Suppose that $o \notin \alter{\dom(a)} s$. Since $\Obj_{\dom(a)} \subseteq \alter{\dom(a)} s$, 
we have $o \notin \Obj_{\dom(a)}$.  It follows that data actions $a$ and actions $a$ of the forms
$\addcap p c n$, $\dropcap{p}{c}$, 
$\addtag p t$, $\removetag p t$ do not change the value of $o$, since
these actions change only the values of objects in $\Obj_{\dom(a)}$.  
In the case of $a = \sendmessageto p q$ and $\sendcapto p c q$, 
we have from $o \notin \alter{\dom(a)} s$ that  $o$ is not one of the
two possible objects $in_q$ and $O_q$ whose values these actions can change, 
so again, these actions do not change the value of $o$. 

\item[\DRM-\ref{cond:wa:grantobserve}:]  
This is immediate from the fact that the value of $\observe s p$ is a constant. 
 
\item[\strongpolicyconsistency:] 
Suppose $s \dynacrel{p} t$ and $s \dynacrel{q} t$. 
We need to show that $\alter p s \cap \observe q s = \alter p t \cap \observe q t$.  
Now $\observe q s$ is a constant, and if $p = q$ then $\alter p s = \Obj_p$ is also a constant, 
so we need only consider the case $p \neq q$. 
In this case, $\alter p s \cap \observe q s$ is $ \{ in_q, O_q\}$ if $s\models S_p \subseteq S_q$
and is $\emptyset$ otherwise. From $s \dynacrel{p} t$ and $s \dynacrel{q} t$ we get that 
$s(S_p) = t(S_p)$ and $s(S_q) = t(S_q)$, so $s\models S_p \subseteq S_q$ iff $t\models S_p \subseteq S_q$. 
Thus, $\alter p s \cap \observe q s = \alter p t \cap \observe q t$.  

\item[\DRM-\ref{cond:wa:commonvar}:] 
Note that for $p \neq q$, the sets $\Obj_p$ and $\Obj_q$ are disjoint. 
It follows that if $\alter p {s_0\cdot \alpha} \cap \observe q {s_0\cdot \alpha} \neq \emptyset$, then  
we have either $p= q$ or $s_0\cdot \alpha \models S_p \subseteq S_q$. In either case, 
we have $s_0\cdot \alpha \models S_p \subseteq S_q$, i.e., $\alpha \models {p \interferes^M q}$. 
\qedhere
\end{enumerate}      
\end{proof*}

It follows from Theorem~\ref{thm:capsystem-associated}
that a capability system is secure with respect to any policy that is no more
restrictive than the associated policy. 

\begin{corollary} 
\label{corol:capability-monotonicity} 
Let $M$ be a capability system. Then for any dynamic security policy $\interferes$ with 
${\interferes^M} \leq {\interferes}$, the system $M$ is both $\maytaname$-secure and $\musttaname$-secure
with respect to $\interferes$. 
\end{corollary}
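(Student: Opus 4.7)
The plan is to derive the corollary directly by combining Theorem~\ref{thm:capsystem-associated} with the two monotonicity results established earlier, namely Proposition~\ref{prop:mustta-monotonic} for $\musttaname$-security and Proposition~\ref{prop:may-monotonic} for $\maytaname$-security. No new construction is required; the corollary is essentially a packaging result.

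First, I would invoke Theorem~\ref{thm:capsystem-associated} to obtain two facts about the associated policy $\interferes^M$: (i) $M$ is both $\maytaname$-secure and $\musttaname$-secure with respect to $\interferes^M$, and (ii) the policy $\interferes^M$ is local. Both facts will be needed.

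Next, for the $\musttaname$-security claim with respect to $\interferes$, I would apply Proposition~\ref{prop:mustta-monotonic}: since ${\interferes^M} \leq {\interferes}$ by hypothesis and $M$ is $\musttaname$-secure with respect to $\interferes^M$, it follows that $M$ is $\musttaname$-secure with respect to $\interferes$. For the $\maytaname$-security claim, I would apply Proposition~\ref{prop:may-monotonic}, whose antecedent requires the more restrictive policy to be local; this is exactly fact (ii) above. Combined with ${\interferes^M} \leq {\interferes}$ and $\maytaname$-security of $M$ with respect to $\interferes^M$, Proposition~\ref{prop:may-monotonic} yields $\maytaname$-security of $M$ with respect to $\interferes$.

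There is no real obstacle here: the only subtle point worth highlighting is that unlike $\musttaname$-security, $\maytaname$-security is not monotonic in general along the restrictiveness order, so it is important that locality of $\interferes^M$ has already been secured by Theorem~\ref{thm:capsystem-associated} in order to legitimately apply Proposition~\ref{prop:may-monotonic}. Once this is noted, the proof is two lines.
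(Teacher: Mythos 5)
Your proof is correct and follows exactly the paper's own argument: the paper proves this corollary by citing Theorem~\ref{thm:capsystem-associated} together with Proposition~\ref{prop:mustta-monotonic} and Proposition~\ref{prop:may-monotonic}. You have simply made explicit the one point the paper leaves implicit, namely that locality of $\interferes^M$ is what licenses the application of Proposition~\ref{prop:may-monotonic}.
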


\begin{proof} 
Immediate from Theorem~\ref{thm:capsystem-associated} using Proposition~\ref{prop:mustta-monotonic} and Proposition~\ref{prop:may-monotonic}. 
\end{proof}

\subsection{Comments on Flume} 
\label{sec:flume-comments}

The capability system described above is somewhat motivated by Flume \cite{Krohn2009}, 
but differs in a number of respects.
 We comment here on the differences.  

First, Flume associates both a {\em secrecy set} $S_p\subseteq \tags$ and an 
{\em integrity set} $I_p \subseteq \tags$  to each agent $p$. These operate dually, 
so we have avoided inclusion of the integrity set just for simplicity. Our formulation 
could be extended to include it.

Next, Flume permits transmission of information from $p$ to $q$ when 
$S_p \setminus D_p \subseteq S_q \cup D_q$, where $D_u$ is the set of tags 
for which $u$ possesses  both the capability to add and remove the tag. 
The intuition for this definition is that if $p$ wishes to transmit information to $q$, 
but is constrained from doing so because $S_p$ contains a tag $t$ 
that are not in $S_q$, then this obstacle would be overcome
if $p$ is able to remove the tag from $S_p$ or if $q$ is able to add the tag to $S_q$. 
Assuming no other obstacles, the transmission could then proceed. Moreover, if, 
correspondingly, $p$ is also able to add the tag or
$q$ is able to remove it, then after the transmission, the policy could be restored
to its original state. Thus, the condition $S_p \setminus D_p \subseteq S_q \cup D_q$
captures the situations where this sequence of moves would allow transmission from 
$p$ to $q$ while leaving the policy state unchanged after the transmission is complete. 
The thinking then is that since such a sequence of moves cannot be prevented, the policy may as well 
allow the transmission directly. 

We are not convinced that this is a good idea. One reason for $p$ to have a tag in $D_p$ 
but not in $S_p$ might be that $p$ wishes to protect itself against {\em inadvertent} 
transmission of information to $q$. Mechanisms such as {\em sudo} 
and the discretionary access controls on a user's own objects in 
Unix are explicitly designed to disable the effects of actions that the 
user nevertheless has the capacity to enable, and it seems useful
to have a similar discretionary layer of control for information flow policies. 
So we have not followed Flume in adopting this modification of the basic policy.  

The Flume model includes a set $\hat{O}$ of {\em global capabilities}, that 
are implicitly owned by all domains. That is, Flume works with the sets of owned
capabilities  $\overline{O}_p = O_p \cup \hat{O}$. When creating a tag, an agent 
is able to place the associated capabilities into $\hat{O}$. The motivation 
for this appears to be the following (\cite{Krohn2009} section II.D): 

\begin{quote} 
The most important policy in Flume is {\em export protection}, wherein
untrustworthy processes can compute with secret data without the ability
to reveal it. An export protection tag is a tag $t$ 
such that $t^+ \in \hat{O}$ and $t^-\in \hat{O}$. For a process $p$ to 
achieve such a result, it creates a new tag $t$ and grants $t^+$ to the 
global set $\hat{O}$, while closely guarding $t^-$.  To protect a file $f$, $p$ creates the 
file with secrecy label $\{t\}$. If a process $q$ wishes to read $f$, it must first 
add $t$ to $S_q$, which it can do since $t^+ \in \hat{O} \subseteq \overline{O}_q$. 
Now, $q$ can only send messages to other processes with $t$ in their labels. 
It requires $p$'s authorization to remove $t$ from its label or send data to the 
network. 
\end{quote}  

We have not followed Flume in including the set of global capabilities $O$. 
One reason is that we find the above argument unconvincing. Once 
$t^+$ is in $\hat{O}$, not just $q$, but any other agent $r$ may add $t$ to its secrecy set, 
and any protection afforded by tag $t$ is then eliminated. 
The main reason that Krohn and Tromer have in mind that $r$ might not be able
to add $t$ to its secrecy set is that $t$ is randomly generated, and 
$r$ does not know the appropriate value to use, making it a low probability 
event that $r$ should add $t$ unless it somehow receives $t^+$ in a chain 
of transmission from $p$. But in that case, we may as well rely upon the 
direct transmission of $t^+$ to the set $O_r$ as the enabler for 
$r$'s adding $t$ to its secrecy set. There does not appear to be 
any necessary role for $\hat{O}$ in the above argument. 

Another reason for not following Flume in the inclusion of the global set $\hat{O}$ 
is that it  causes a covert channel. 
An agent $p$ may choose to add a capability $t$ to $O$, 
and any other agent $q$ can test whether the 
capability $t$ is in $O$ or not. This causes a flow of
information from $p$ to $q$.  Krohn and Tromer are aware of this covert channel 
and seek to mitigate it by randomizing tag names. 
However this means that they require a much more complicated 
probabilistic version of noninterference semantics. 
It is not clear to us whether their semantics is well-formulated, 
since some subtle points appear to have been overlooked. 
Note that their systems have both probabilistic transitions
(random generation of tag names) and nondeterministic 
transitions (agent's choice of their actions). 
This means that there is not a straightforward probability 
distribution on runs. How to handle the mix of probability and nondeterminism 
is an issue that has been studied in the literature, 
and involves a number of subtleties \cite{JSM97,Cheung}. It does not
appear that Krohn and Tromer are aware of these issues, indeed, they do not even 
define the probability space that they use. It would seem 
that, at the very least, some non-trivial further details are needed to make 
their result fully precise, which we will not pursue here.

Finally, Flume allows for dynamic creation of processes by means of a forking operation. 
It uses this to represent newly created objects as processes. 
Our policy model assumes a fixed set of domains, so this aspect of 
Flume cannot be directly represented. It would be interesting 
to develop an extension of our framework  with the expressiveness
to encompass creation of objects and agents.

\section{Related Work} 
\label{sec:relatedwork}

Related work using an automaton-based semantics similar to the one we
have used in the paper has already been discussed in detail in Section~\ref{sec:related:aut}. 
We focus here on work in the context of programming language security
that has considered dynamic information flow policies.  
A recent survey of dynamic 
policies in the context of programming languages is \cite{BrobergDS15}.  
They attempt to develop a classification based on various ``facets" that 
are factored into the wide spectrum of definitions of security in this area.
In general, there exist nontrivial gaps between the semantic frameworks used in 
this literature and our semantic model. Even \emph{within} the programming language literature, 
formal relationships between the various semantics studied are rarely presented.  
We therefore give only some informal comparisons 
with our work.

One of the facets discussed in \cite{BrobergDS15} is the distinction between 
``whitelisting flows" and ``blacklisting flows", which is similar to our
distinction between a permissive and a prohibitive reading of policies.  
Interestingly, the whitelisting/permissive reading is claimed to be the 
norm in the programming language security literature.

However, there are also significant points of difference.
Generally these works concern a programming language framework in which 
the secrets to be concealed are already encoded into the initial state, rather than 
our ``interactive" model in which secrets are  generated
on the fly as the result of nondeterministic choices of action made by the agents. 
Frequently, the programming languages studied are deterministic, 
so a direct relationship to our setting is not immediately clear. 
There exists an approach by Clarke and Hunt~\cite{ClarkH08} to 
handling interactivity in a programming language setting  
by including stream variables in the initial state to represent the
sequence of future inputs that will be selected by an agent over the 
course of the run. However, this, in effect, assumes that the scheduler 
also is deterministic. Adding a variable for the schedule and 
allowing other agents to learn the complete schedule would 
be more permissive than a definition such as \tasecurity, 
since this restricts the information that agents may learn about the schedule. 
It therefore seems that establishing an exact correspondence would 
require some detailed work. 

Another significant point of difference is that in 
the programming language literature, the policy and the system model (program) 
are often tightly intertwined, whereas we have a separate policy and system model, 
just as a specification and implementation  are usually considered distinct. 
Other works that do allow for such a  separation include 
Morgan \cite{Morgan09} develops a refinement calculus that allows for mixtures
of information flow specifications and code, enabling  an implementation to 
be derived from a specification via steps that mix the two, and  
Delft et. al.~\cite{DelftHS15}, who develop a type theory for dynamic information 
flow that operates on a program separately from the dynamic policy. 

One of the ways that  intransitivity is handled in the programming 
language setting also involves a dynamic aspect to policies: 
special constructs are introduced that allow temporary violations of a static policy, 
either through use of special basic actions or the establishment of a 
program region within which an alternate policy is enforced. 
For example, Mantel and Sands~\cite{Mantel:Sands:APLAS04} 
work with a classical transitive security policy, but add downgrading commands 
that may violate this policy (subject to some alternate constraints.) 
They give a security definition that uses a specially defined bisimulation 
relation.

Broberg and Sands \cite{Broberg2006,Broberg2009} 
define \emph{flow locks}, which enable variables 
to be typed with conditional annotations that constrain the 
security levels to which contents of these variables may flow. 
The values referred to in the conditions may be altered during the execution 
of the program, thereby making the policy dynamic. 
 The authors give a complex bisimulation-based semantics 
 to these policies  \cite{Broberg2006}, but in later work 
 \cite{Broberg2009} prefer an alternate knowledge-based semantics, with a 
 perfect recall interpretation of knowledge. It appears that their definitions are closer to those for 
a classical purge-based definition than to our intransitive policies. 
To capture a downgrader policy $H \interferes D \interferes L$, 
such as to allow information from $H$ to reach $L$ via $D$, it seems necessary to 
temporarily change the policy to effectively add an edge $H\interferes L$ to the policy, 
allowing a direct flow from $H$ to~$L$. 
This is not in the spirit of intransitive noninterference, which intends to capture 
that all flows from $H$ to $L$ must pass through~$D$. 

Chong and Myers \cite{ChongM04} propose to handle declassification violations of 
transitive policy using a notion  of \emph{noninterference until declassification}  
which applies a classical noninterference definition on runs 
just up to the point where a sequence of events has occurred
on the basis of which the policy permits a declassification of secret information 
to the domain under consideration.
Beyond this point,  the definition does not constrain the behaviour of the system. 
Hicks et al \cite{HTHZ05} refine this idea to impose constraints on longer runs, 
declaring these secure if they consist of a sequence of 
policy-update free fragments, each of which satisfies the standard noninterference condition. 
This extension is also applied in Swamy et al \cite{SwamyHTZ06}, who develop a language with 
dynamic role-based information flow policies. This approach  
is also similar to the approach used by Almeida-Matos and Boudol \cite{MatosB09}, who  
work in a language-based setting where a transitive policy can temporarily be extended
by addition of new edges (yielding another transitive policy).  
Their security condition states, using a bisimulation,  
that each transition in an execution is secure with respect to the transitive 
policy in place at the time the transition is taken. 
We note that these approaches generally do not handle covert channels resulting from the policy
changes themselves, as illustrated in our Example~\ref{ex:conflict-example}.

All of the above approaches, like ours in this paper, work with perfect recall definitions  of attacker knowledge. 
Askarov and Chong \cite{Askarov2012} point to an interesting issue with respect to dynamic security policies, 
which is that events may transmit ``old news": information that would have been allowed to be transmitted 
to an attacker according to an earlier state of the security policy, but was actually not transmitted.   
The current state of the policy may prohibit the flow of this information. In this case, 
security definitions that are based on a perfect recall model of attacker knowledge, 
fail to identify such transmissions of ``old news" as violations of security. 
  
In response to this observation, Askarov and Chong propose to work with attackers that are weaker than the perfect
recall attacker. They define a notion of knowledge for such attackers, and state definitions 
of security that require that any new knowledge gained by an attacker at security level (domain) $u$  should 
only concern security levels $v$ such that $v\interferes u$ according to the security policy 
at the time of the event. They identify some potential weaker attackers, but do not identify a definitive set of attackers; 
since the attacker is a parameter of their proposed security definition one could view the set of attackers
against which a system should be secure as a matter of policy. 

One of the attackers considered by Askarov  and Chong is the perfect recall attacker, and here it is 
reasonable to attempt to compare their framework to ours. Unfortunately, making such a comparison formally is 
not completely straightforward because their semantic setting is somewhat different from ours: 
whereas in our framework there are multiple causal agents (the domains), who are free to 
choose any action at any time, in theirs there is a single causal agent (a deterministic program) that reads
and writes information from different security domains. However, there is one respect where 
a clear difference of intent is apparent in their intuitions and definitions. 

Consider a situation with domains $D = \{ A,B,C, P\}$. 
Let $\Delta$ be the policy $\{u \interferes u ~|~ u \in D \} \cup \{ P \interferes u~|~ u \in D\}$. 
Intuitively, $P$ here is the policy authority, and $\Delta$ states that all domains are permitted to know the state of the policy. 
In the initial state of the system,  take the policy to be $\Delta \cup \{A\interferes B\}$. 
Any actions performed by $A$ are, intuitively, permitted by this policy to be recorded in the state of $B$, 
but should remain unknown to $C$. Suppose that after some actions by $A$, the policy agent changes the policy to 
$\Delta \cup \{B\interferes C\}$. Now, $B$ is permitted to communicate with $C$. According to our intuitions and formal definitions, 
$B$ may now send  to $C$ the information it has about the actions that $A$ performed before the policy change. 
Such flows from $A$ to $C$ via $B$ are typical of the flows that intransitive policies are intended to permit \cite{HY87}. 

However, according to Askarov and Chong's definitions (both for the perfect recall attacker and weaker attackers), if 
an action now copies this information from $B$'s state and it is observed by $C$, this is a violation of 
security, because in a transition where it makes the observation, $C$ learns new information about $A$, whereas the current state of the 
policy prohibits it from doing so. Thus, although Askarov and Chong permit intransitive 
policies, we would argue that their definitions do not handle dynamic intransitivity in a way that 
fits our intuitions: their semantics can best be characterized as a dynamic version of 
the classical purge-based definition of security, rather than a dynamic version of a semantics
for intransitive noninterference. 
The details of Askarov and Chong's semantics are refined
in \cite{DelftHS15}, but similar remarks apply to to this work.

\section{Conclusion} 

We have developed two generalizations for intransitive dynamic policies 
of the notion of TA-security that has recently been argued to provide a 
better semantics for static policies. One is based on a permissive reading
of policy edges, the other on a prohibitive reading of policy edges. Under
the natural condition of locality, the two readings are equivalent. 
We have also provided proof theory for these notions in the form of unwinding 
conditions and a sound and complete policy enforcement method using  
access control. Finally, we have demonstrated the applicability of the theory
by applying it to establish security for a natural capability model.

Several interesting directions remain open for further investigation. One
concerns the circularity in the justification of $\musttaname$-security. 
While $\musttaname$-security is, as we have shown, self-consistent, 
there may be further natural constraints that also lead to self-consistent
notions of security. For example, one candidate is to require for information transfer that the sender know that there is an edge from it 
to the receiver, in place of the assumption we made in $\musttaname$-security that
the sender and receiver must jointly know that there is such an edge. 
Whether such a constraint should apply is arguably a matter of policy. 
This suggests that there may be a case for a policy language that
more explicitly provides a way to express conditions on  agent knowledge. 
It would also be of interest to develop algorithms for verifying whether a system satisfies a policy. 
Both models and dynamic policies as we have defined them are highly expressive, 
so this would require some restrictions: finite state models and policies that 
are representable using finite state automata would be one interesting starting
point: it remains to be shown that this case is decidable. 

We have shown that  while $\maytaname$-security is more expressive than $\musttaname$-security, 
the difference between these notions applies only in the case of non-local policies. 
Both are associated with natural  intuitions, but we leave open the question of whether the
extra expressiveness of $\maytaname$-security is necessary in practice, i.e., whether there are any 
interesting systems requiring non-local policies where $\maytaname$-security, rather than $\musttaname$-security, is the appropriate notion 
of security. The main realistic example we have presented, the capability system of Section~\ref{sec:flume} involves a local policy, 
so it does not resolve this question. The notion of $\maytaname$-security allows information flows relating to policy 
settings that are not explicitly represented as edges in the policy itself. Conceivably, a richer policy format may allow these implicit flows 
to be more explicitly represented. 

It  would also be  interesting  to formally relate our work 
to work in the programming language setting, and to 
combine the insights from 
Askarov and Chong's work on imperfect recall attackers with the approach we have developed in the 
present paper.

\bibliographystyle{abbrv}
\bibliography{bibliography}

\end{document}